\documentclass[a4paper,UKenglish,runningheads]{llncs}
\usepackage{microtype}

\usepackage{amsmath,amssymb}
\usepackage{mathrsfs}
\usepackage{style/mathpartir}
\usepackage[numbers]{natbib}
\usepackage{csquotes}

\input{style/layout}

\usepackage{centernot}
\usepackage[Symbol]{upgreek}
\usepackage{stmaryrd}

\usepackage{mathtools}

\let\originalleft\left
\let\originalright\right
\renewcommand{\left}{\mathopen{}\mathclose\bgroup\originalleft}
\renewcommand{\right}{\aftergroup\egroup\originalright}

\makeatletter

\let\ams@cases@copy@\cases
\def\cases{%
  \newcommand{\CASE}{&\text{if }}%
  \newcommand{\AND}{\\&\text{and }}%
  \newcommand{\OTHERWISE}{& \text{otherwise}}%
  \ams@cases@copy@%
}

\let\stdphi\phi
\let\phi\varphi
\let\varphi\stdphi


\def\pre#1-{$#1$\nobreak-\nobreak\hskip0pt}

\newcommand{\picalc}{%
    \texorpdfstring{\pre\pi-calculus}{pi-calculus}}

\newcommand{\piterm}{%
    \texorpdfstring{\pre\pi-term}{pi-term}}

\def\CCS#1{\texorpdfstring{\textsc{CCS}$^{#1}$}{CCS#1}}

\newcommand{\PiTerms}{\mathcal{P}}
\newcommand{\PiNf}{\PiTerms_{\!\mathsf{nf}}}
\newcommand{\PiAnnot}[1][\HTypes]{\PiTerms^{#1}}
\newcommand{\PiNfAnnot}[1][\HTypes]{\PiNf^{#1}}
\newcommand{\Seq}{\mathcal{S}}
\newcommand{\DBTerms}{\mathcal{P}_{\!\mathsf{db}}}

\newcommand{\Forests}{\mathcal{F}}

\def\preTypes{\pre{\Types\kern-.5ex}}

\newcommand{\tcompat}[1][ible]{%
  \texorpdfstring{\preTypes-compat#1}{T-compat#1}}
\def\tcompatibility{\texorpdfstring{\preTypes-compatibility}{T-compatibility}}
\newcommand{\tshaped}[1][]{%
  \texorpdfstring{\preTypes-shaped#1}{T-shaped#1}}


\let\arrowvect\vec

\newcommand{\vect}[1]{\arrowvect{#1}\@ifnextchar]{\,}{\@ifnextchar){\,}{\@ifnextchar{\rangle}{\,}{}}}}
\renewcommand{\vec}[1]{\mathbf{#1}}

\newcommand{\sem}[1]{\llbracket#1\rrbracket}

\newcommand{\Nat}{\mathbb{N}}

\newcommand{\union}{\cup}
\newcommand{\inters}{\cap}
\newcommand{\Union}{\bigcup}

\newcommand{\dunion}{\uplus}
\newcommand{\Dunion}{\biguplus}
\DeclarePairedDelimiter\card{\lvert}{\rvert}

\providecommand{\implies}{\Rightarrow}

\newcommand{\domain}{\operatorname{dom}}

\newcommand{\lst}[3][1]{{#2_{#1}}\ldots {#2_{#3}}}    %
\newcommand{\lstc}[3][1]{{#2_{#1}},\ldots, {#2_{#3}}} %

\providecommand{\coloneq}{\mathrel{\mathop:}=} 
\providecommand{\Coloneqq}{\mathrel{\mathop{::}}=} 
\newcommand{\is}{\coloneq}

\def\congr{\equiv}

\newcommand{\reach}{\operatorname{Reach}}

\newcommand{\from}{\colon}

\newcommand{\st}{.\:}  



\let\SavedDoubleVert\relax

{\catcode`\|=\active
  \xdef\set{\protect\expandafter\noexpand\csname set \endcsname}
  \expandafter\gdef\csname set \endcsname#1{\begingroup\mathinner%
  \ifx!#1!%
      \emptyset%
  \else%
      {\lbrace\mathcode`\|32768\let|\midvert #1\rbrace}%
  \fi%
  \endgroup%
  }
  \xdef\Set{\protect\expandafter\noexpand\csname Set \endcsname}
  \expandafter\gdef\csname Set \endcsname#1{%
      \ifx!#1!%
         \emptyset%
      \else%
         \left\{%
         \ifx\SavedDoubleVert\relax \let\SavedDoubleVert\|\fi
         \,{\let\|\SetDoubleVert
         \mathcode`\|32768\let|\SetVert
         #1}\,\right\}%
      \fi%
  }
}
\def\midvert{\egroup\mid\bgroup}
\def\SetVert{\@ifnextchar|{\|\@gobble}
    {\egroup\;\mid@vertical\;\bgroup}}
\def\SetDoubleVert{\egroup\;\mid@dblvertical\;\bgroup}

\begingroup
 \edef\@tempa{\meaning\middle}
 \edef\@tempb{\string\middle}
\expandafter \endgroup \ifx\@tempa\@tempb
 \def\mid@vertical{\middle|}
 \def\mid@dblvertical{\middle\SavedDoubleVert}
\else
 \def\mid@vertical{\mskip1mu\vrule\mskip1mu}
 \def\mid@dblvertical{\mskip1mu\vrule\mskip2.5mu\vrule\mskip1mu}
\fi

\newcommand{\map}[1]{%
    \if\relax\noexpand#1\relax%
        \emptyset%
    \else%
        [\,\@map#1;\relax\noexpand\@end@map\,]%
    \fi%
}
\newcommand{\Map}[1]{%
    \if\relax\noexpand#1\relax%
        \emptyset%
    \else%
        \left[\,\@map#1;\relax\noexpand\@end@map\,\right]%
    \fi%
}
\def\@map#1;#2\@end@map{
    \ifx\relax#2\relax%
        \map@binding[#1]
    \else%
        \map@binding[#1],\:\@map#2\@end@map
    \fi%
}
\def\map@binding[#1->#2]{#1\mapsto#2}

\newcommand{\subst}[1]{%
    \if\relax\noexpand#1\relax%
    \else%
        [\,\@subst#1,\relax\noexpand\@end@subst\,]%
    \fi%
}
\def\@subst#1,#2\@end@subst{
    \ifx\relax#2\relax%
        \subst@binding[#1]
    \else%
        \subst@binding[#1],\:\@subst#2\@end@subst
    \fi%
}

\def\subst@binding[#1->#2]{#2/#1}

\def\grammOr{\hspace{3pt}\mid\hspace{3pt}}
\def\grammIs{\Coloneqq}

\begingroup
\catcode`\|=\active%
\gdef\@grammar@bar{%
    \catcode`\|=\active%
    \def|{\grammOr}%
}
\endgroup

\newcommand{\gramm}[1]{%
  \begingroup
  \def\is{\grammIs}%
  \@grammar@bar%
  #1%
  \endgroup%
}

\newenvironment{grammar}{%
    \begin{equation*}%
    \def\is{& \grammIs }%
    \@grammar@bar%
    \aligned%
}
{%
    \endaligned%
    \end{equation*}%
    \aftergroup\ignorespaces%
}


\newcommand{\Names}{\ensuremath{\mathcal{N}}}

\def\out#1<#2>{\overline{#1}\langle#2\rangle}
\def\inp#1(#2){{#1}(#2)}
\def\tact{\boldsymbol\tau} 
\let\inpz\relax
\def\outz#1{\overline{#1}}
\def\new#1.{\restr #1.\ignorespaces}

\newcommand{\restr}{\upnu}  
\newcommand{\zero}{\mathbf{0}}
\newcommand{\freenames}{\operatorname{fn}}
\newcommand{\boundnames}{\operatorname{bn}}
\newcommand{\resboundnames}{\operatorname{bn}_\nu}

\newcommand{\actrestr}{\operatorname{act}_{\restr}}
\newcommand{\seqproc}{\operatorname{act}_{\Seq}}

\newcommand{\nf}{\operatorname{nf}}

\newcommand{\nestr}{\operatorname{nest}_{\restr}}
\newcommand{\depth}{\operatorname{depth}}

\newcommand{\ch}[1]{\mathit{#1}}

\newcommand{\emptyforest}{(\emptyset, \emptyset)}
\newcommand{\forest}{\operatorname{forest}}
\newcommand{\AST}{\operatorname{\mathcal{F}}\sem}
\newcommand{\height}{\operatorname{height}}

\newcommand{\CommTop}{\operatorname{\mathcal{G}}\sem}

\newcommand{\embeddedin}{\preceq}

\newcommand{\minrestr}{\operatorname{min}_{\Types}}

\newif\if@initial@pi@term@

\newcommand{\redto}{\to}

\newcommand{\bang}[1]{{!}#1}
\newcommand{\Bang}[1]{\bang{\left(\vphantom{\big(}#1\right)}}

\newcommand{\Parallel}{\@ifstar{\prod}{{\textstyle\prod}}}
\newcommand{\Alt}{\@ifstar{\sum}{{\textstyle\sum}}}

\newcommand{\linkedto}[1]{\leftrightarrow_{#1}}
\newcommand{\tiedto}[1]{\smallfrown_{#1}}
\newcommand{\ntiedto}[1]{\triangleleft_{#1}}
\newcommand{\migr}[1]{\operatorname{Mig}_{#1}}

\def\pntr[#1>{\mathop{[#1\rangle}}
\def\pnstep{\mathop{[\mkern-1mu\rangle}}


\newcommand{\Types}{\ensuremath{\mathcal{T}}} 
\newcommand{\HTypes}{\ensuremath{\mathbb{T}}} 

\newcommand{\Env}{\Gamma}
\newcommand{\types}[1][\Types]{\vdash_{#1}}
\newcommand{\type}{\tau} 
\newcommand{\tas}{\,{:}\,} 
\newcommand{\ty}[1]{\mathsf{#1}} 

\newcommand{\tlt}{<}

\newcommand{\parent}{%
  \@ifnextchar+{\p@rentop@plus}{%
  \@ifnextchar*{\p@rentop@star}{%
  \p@rentop}}}
\stmry@if\DeclareMathSymbol\YleftRel\mathrel{stmry}{"06}\fi
\stmry@if\DeclareMathSymbol\YrightRel\mathrel{stmry}{"07}\fi
\newcommand{\rparent}{\YrightRel}

\newcommand{\p@rentop}{\YleftRel}
\newcommand{\p@rentop@plus}[1]{<}
\newcommand{\p@rentop@star}[1]{\leq}

\newcommand\erase[1]{{\ulcorner}{#1}{\urcorner}}

\newcommand{\btyof}[1]{t_{\ch{#1}}}

\newcommand{\typevar}{\mathfrak{t}}  
\newcommand{\constraints}{\mathcal{C}}



\newcommand{\Premise}{\Psi}

\newcommand{\phimig}{\phi_{\mathrm{mig}}}
\newcommand{\phinonmig}{\phi_{\neg\mathrm{mig}}}

\newcommand{\treeins}{\operatorname{ins}}

\newcommand{\base}{\operatorname{base}}








\newcommand{\s}[2]{#1^{\text{#2}}}


\makeatother

\spnewtheorem*{nameuniq}{Name Uniqueness Assumption}{\bfseries}{\itshape}
\makeatletter
\let\@nameuniq\nameuniq
\def\nameuniq{%
  \@nameuniq%
  \def\@currentlabel{Name Uniqueness}
  \phantomsection
  \label{nameuniq}
}
\makeatother

\newenvironment{proof*}
    {\begin{proof}[Proof \textup(Sketch\textup)]}
    {\end{proof}}

\usepackage{tikz}
\usetikzlibrary{calc,decorations.pathreplacing,shapes.misc,scopes,positioning}
\usetikzlibrary{external}

\newcommand{\ExportTikzPictures}{\tikzexternalize[prefix=export/]}
\newcommand{\inputfig}[1]{\tikzsetnextfilename{#1}\input{figures/#1.tikz}}
\newcommand{\inputfigwrap}[2][]{\tikzsetnextfilename{#2}\begin{tikzpicture}[#1]\input{figures/#2.tikz}\end{tikzpicture}}

\pgfdeclarelayer{background}
\pgfsetlayers{background,main}

\makeatletter
\newcommand{\savePictureHeight}[1]{%
  \pgfpointdiff{\pgfpointanchor{current bounding box}{north}}{\pgfpointanchor{current bounding box}{south}}%
  \pgfmathparse{veclen(\pgf@x,\pgf@y)}%
  \global\let#1\pgfmathresult\relax%
}
\newcommand{\matchPictureHeight}[1]{%
  \savePictureHeight{\pic@height@tmp}
  \pgfmathsetmacro{\pic@height@tmp}{(\pic@height@tmp - #1)/2}
  \useasboundingbox
    (current bounding box.north) ++(0,-\pic@height@tmp pt)
    (current bounding box.south) ++(0,\pic@height@tmp pt);
}
\makeatother

\tikzset{
  border/.style={
    draw=#1!50,fill=#1!5,rounded corners
  },
  border/.default=black,
  bordered/.style={
    execute at end picture={
        \begin{pgfonlayer}{background}
            \path[border=#1] ($(current bounding box.north west)+(-2mm,3mm)$)
                   rectangle ($(current bounding box.south east)+(2mm,-3mm)$);
        \end{pgfonlayer}
    }
  },
  bordered/.default=black,
}

\tikzset{
    show control points/.style={
        decoration={
            show path construction,
            curveto code={
                \draw [blue, dashed]
                    (\tikzinputsegmentfirst) -- (\tikzinputsegmentsupporta)
                    node [at end, cross out, draw, solid, red, inner sep=2pt]{};
                \draw [blue, dashed]
                    (\tikzinputsegmentsupportb) -- (\tikzinputsegmentlast)
                    node [at start, cross out, draw, solid, red, inner sep=2pt]{};
            }
        },
        postaction=decorate
    },
}
\colorlet{treecol}{white}
\colorlet{lightgray}{black!10}

\tikzset{
  mini tree/.style = {
    AST,level distance=3.5ex,sibling distance=3.5ex,baseline=-2ex
  },
  forest/.style={
    node/.style = {
      label=##1,
      circle,
      draw=black,
      fill=black,
      inner sep=0,
      outer sep=\pgflinewidth,
      minimum size=3pt
    },
    node/.default = {},
    removed node/.style = {
      cross out,
      thick,
      draw=black,
      fill=none,
      inner sep=0,
      outer sep=0,
      label={[gray]##1},
      minimum size=5pt
    },
    removed node/.default = {},
    closer/.style={label distance=-.4em},
    tree/.style={fill=##1!90!black, draw=##1!70!black},
    tree/.default = treecol,
    migrating/.style={tree=red},
    receiver cont/.style={tree=blue,transform shape},
    sender cont/.style={tree=green!90!gray,transform shape},
    subtree/.pic={
      \draw[line join=round,pic actions] (0,0) coordinate (-root) -- ++(-.2,-.5) --coordinate[midway] (-bottom) ++(.4,0) --cycle;
    },
    continuation subtree/.style={
      pic type=subtree,
      tree=lightgray,
      transform shape
    },
    migrating subtree/.style={
      pic type=subtree,
      migrating,
      transform shape
    },
    sender subtree/.style={
      pic type=subtree,
      sender cont,
      transform shape
    },
    receiver subtree/.style={
      pic type=subtree,
      receiver cont,
      transform shape
    },
  },
  AST/.style = {
    child anchor=north,
    every node/.style={
      outer sep=0pt,
      inner sep=2pt,
    },
    level distance=2em,
    sibling distance=8mm,
    many children/.style = {
      edge from parent path = {
      [gray]
        (\tikzparentnode)
          edge (\tikzchildnode.north east)
          edge (\tikzchildnode.north)
          edge (\tikzchildnode.north west)
      }
    },
    dots/.style = {
      inner sep = -2pt,
      node contents = {$\strut\cdots$}
    },
  },
  forests distance/.store in=\forestsdist,
  forests distance=3,
}

\colorlet{linkcolor}{black}
\definecolor{seqcolor1}{RGB}{174, 199, 232}
\colorlet{seqcolor}{seqcolor1!30}
\colorlet{spawncol}{blue!60!black}
\colorlet{ghostcol}{black!30}

\tikzset{
  stdnf graph/.style = {
    -,draw,semithick,
    ghost/.style={draw=ghostcol,ch label/.style={color=white}},
    seq/.style={
        rectangle,
        rounded corners=.2em,
        draw=black,
        fill=seqcolor,
        minimum size=1em,
        inner sep=3pt,
        ghost/.style={fill=seqcolor!30,draw=ghostcol},
    },
    link/.style={
      draw=linkcolor
    },
    every edge/.style=link,
    ch/.style={
        circle,
        fill=linkcolor,
        draw=white, 
        minimum size=1ex,
        outer sep=0,
        inner sep=0,
        label={[ch label]##1},
        ghost/.style={fill=ghostcol},
    },
    ch/.default={},
    ch label/.style={
        font={\scriptsize},
        outer sep = 0,
        inner sep = .3ex,
        rectangle
    },
    bang/.style={
        dashed,thick,rounded corners,
        draw=black!50!blue,
        inner sep=5pt,
        append after command={(\tikzlastnode.north east) node {\Huge \textcolor{black!50!blue}{*}}},
        fit=##1
    },
    tiny/.style={
        minimum size=3pt,
        inner sep=0pt,
    },
    bigger nodes/.style={
        seq/.append style={
            minimum size=13pt,
        },
        ch/.append style={
            minimum size=10pt,
        },
    }
  }
}

\makeatletter
\newcommand{\ensuretikzpicturebegin}{%
  \ifx\pgfpictureid\@undefined%
    \def\ensuretikzpictureend{\end{tikzpicture}}%
    \begin{tikzpicture}%
  \else%
    \def\ensuretikzpictureend{}%
  \fi%
}
\makeatother
\ExportTikzPictures 

\newif\ifshortversion
\def\iflongversion{\ifshortversion\else} 

\newif\ifincludeappendix
\includeappendixtrue

\def\appendixorfull{%
  \ifincludeappendix%
    \hyperref[appendix]{Appendix}%
  \else%
    \cite{fullversion}%
  \fi%
}

\AtBeginDocument{
  \ifincludeappendix%
    \setcounter{tocdepth}{1}%
  \fi%
}




\usepackage[capitalise]{cleveref}

\title{On Hierarchical Communication Topologies in the \picalc}

\author{Emanuele D'Osualdo\inst{1} \and C.-H. Luke Ong\inst{2}}
\institute{%
  TU Kaiserslautern
  \email{dosualdo@cs.uni-kl.de}%
\and
  University of Oxford
  \email{lo@cs.ox.ac.uk}%
}
\authorrunning{E. D'Osualdo and C.-H.~L. Ong}

\begin{document}

\maketitle

\begin{abstract}
This paper is concerned with the shape invariants satisfied by the communication topology of \piterm{s}, and the automatic inference of these invariants.
A \piterm\ $P$ is \emph{hierarchical} if there is a finite forest \Types\ such that the communication topology of every term reachable from $P$ satisfies a \tshaped\ invariant.
We design a static analysis to prove a term hierarchical by means of a novel type system that enjoys decidable inference.
The soundness proof of the type system employs a non-standard view of \picalc{} reactions.
The coverability problem for hierarchical terms is decidable.
This is proved by showing that every hierarchical term is depth-bounded,
an undecidable property known in the literature.
We thus obtain an expressive static fragment of the \picalc{} with decidable safety verification problems.
\end{abstract}

\section{Introduction}
\label{sec:intro}

Concurrency is pervasive in computing.
A standard approach is to organise concurrent software systems as a dynamic collection of processes that communicate by message passing.
Because processes may be destroyed or created, the number of processes in the system changes in the course of the computation, and may be unbounded.
Moreover the messages that are exchanged may contain process addresses.
Consequently the \emph{communication topology} of the system%
---the hypergraph \cite{Milner:92,Milner:99} connecting processes that can communicate directly---%
evolves over time.
In particular, the connectivity of a process
(i.e.~its neighbourhood in this hypergraph)
can change dynamically.
The design and analysis of these systems is difficult:
the dynamic reconfigurability alone renders verification problems undecidable.
This paper is concerned with \emph{hierarchical systems},
a new subclass of concurrent message-passing systems
that enjoys decidability of safety verification problems,
thanks to a shape constraint on the communication topology.

The \picalc\ of Milner, Parrow and Walker \cite{Milner:92} is a process calculus designed to model systems with a dynamic communication topology.
In the \picalc{}, processes can be spawned dynamically, and they communicate by exchanging messages along synchronous channels.
Furthermore channel names can themselves be created dynamically, and passed as messages, a salient feature known as \emph{mobility},
as this enables processes to modify their neighbourhood at runtime.%

It is well known that the \picalc{} is a Turing-complete model of computation.
Verification problems on \piterm{s} are therefore undecidable in general.
There are however useful fragments of the calculus
that support automatic verification.
The most expressive such fragment known to date is the \emph{depth-bounded} \picalc\ of Meyer~\cite{Meyer:08}.
Depth boundedness is a constraint on the shape of communication topologies.
A \piterm\ is \emph{depth-bounded} if there is a number $k$
such that every simple path%
\footnote{a simple path is a path with no repeating edges.}
in the communication topology
of every reachable \piterm\ has length bounded by $k$.
Meyer~\cite{Meyer:phd} proved that termination and coverability
(a class of safety properties)
are decidable for depth-bounded terms.

Unfortunately depth boundedness itself is an undecidable property~\cite{Meyer:phd}, which is a serious impediment to the practical application of the depth-bounded fragment to verification.
This paper offers a two-step approach to this problem.
First we identify a (still undecidable) subclass of depth-bounded systems, called \emph{hierarchical}, by a shape constraint on
communication topologies (as opposed to numeric, as in the case of depth-boundedness).
Secondly, by exploiting this richer structure, we define a type system, which in turn gives a \emph{static} characterisation of an expressive and practically relevant fragment of the depth-bounded \picalc{}.

\begin{example}[Client-server pattern]
\label{ex:cs-pattern}
To illustrate our approach,
consider a simple system implementing a client-server pattern.
A server $S$ is a process listening on a channel $s$ which acts as its address.
A client $C$ knows the address of a server and has a private channel $c$ that represents its identity.
When the client wants to communicate with the server,
it asynchronously sends 
$c$ along the channel~$s$.
Upon receipt of the message,
the server acquires knowledge of (the address of) the requesting client;
and spawns a process $A$ to answer the client's request $R$ asynchronously;
the answer consists of a new piece of data,
represented by a new name $d$, sent along the channel $c$.
Then the server forgets the identity of the client and reverts to listening for new requests.
Since only the requesting client knows $c$ at this point,
the server's answer can only be received by the correct client.
\Cref{fig:server-sketch:protocol} shows the communication topology
of a server and a client, in the three phases of the protocol.

\begin{figure}[tb]
  \centering%
  \subfloat[the protocol]{%
    \inputfig{server-sketch-protocol}%
    \label{fig:server-sketch:protocol}%
  }%
  \subfloat[a reachable configuration]{%
    \inputfig{server-sketch-topology}%
    \label{fig:server-sketch:topology}%
  }\quad
  \subfloat[forest representation]{%
    \inputfig{server-sketch-forest}%
    \label{fig:server-sketch:forest}%
  }%
  \caption{%
    Evolution of the communication topology
    of a server interacting with a client.
    $R$ represents a client's pending request and
    $A$ a server's pending answer.
  }%
  \label{fig:server-sketch}%
\end{figure}

The overall system is composed of an unbounded number of servers and clients, constructed according to the above protocol.
The topology of a reachable configuration is depicted
in \cref{fig:server-sketch:topology}.
While in general the topology of a mobile system can become arbitrarily complex,
for such common patterns as client-server,
the programmer often has a clear idea of the desired shape of the communication topology:
there will be a number of servers, each with its cluster of clients;
each client may in turn be waiting to receive a number of private replies.
This suggests a hierarchical relationship between the names
representing servers, clients and data,
although the communication topology itself does not form a tree.
\end{example}%

\subsubsection*{\tcompat[ibility] and hierarchical terms} %

Recall that in the \picalc{} there is an important relation between terms, $\congr$, called \emph{structural congruence},
which equates terms that differ only in irrelevant presentation details,
but not in behaviour.
For instance, the structural congruence laws for restriction tell us
that the order of restrictions is irrelevant---%
$\new x.\new y.P \congr \new y.\new x.P$---%
and that the scope of a restriction can be extended
to processes that do not refer to the restricted name---%
i.e.,~$(\new x.P) \parallel Q \congr \new x.(P \parallel Q)$
when $x$ does not occur free in $Q$---%
without altering the meaning of the term.
The former law is called \emph{exchange},
the latter is called \emph{scope extrusion}.

Our first contribution is a formalisation in the \picalc\ of the intuitive notion of hierarchy illustrated in Example~\ref{ex:cs-pattern}.
We shall often speak of the \emph{forest representation} of a \piterm\ $P$, $\forest(P)$,
which is a version of the abstract syntax tree of $P$ that captures the nesting relationship between the active restrictions of the term.
(A restriction of a \piterm\ is \emph{active} if it is not in the scope of a prefix.)
Thus the internal nodes of a forest representation are labelled with (active) restriction names, and its leaf nodes are labelled with the sequential subterms.
Given a \piterm\ $P$, we are interested in not just $\forest(P)$, but also $\forest(P')$ where $P'$ ranges over the structural congruents of $P$, because these are all behaviourally equivalent representations.
See Fig.~\ref{fig:forests} for an example of the respective forest representations of the structural congruents of a term.
In our setting a hierarchy $\Types$ is a \emph{finite} forest
of what we call \emph{base types}.
Given a finite forest $\Types$, we say that a term $P$ is \emph{\tcompat} if there is a term $P'$, which is structurally congruent to $P$, such that the parent relation of $\forest(P')$ is consistent with the partial order of $\Types$.

In Example~\ref{ex:cs-pattern} we would introduce base types
  $\ty{srv}$, $\ty{cl}$ and $\ty{data}$
associated with the restrictions
  $\restr s$, $\restr c$ and $\restr d$
respectively,
and we would like the system to be compatible to the hierarchy
$\Types = \ty{srv} \parent \ty{cl} \parent \ty{data}$,
where $\parent$ is the is-parent-of relation.
That is,
we must be able to represent a configuration
with a forest that, for instance,
does not place a server name below a client name nor
a client name below another client name.
Such a representation is shown in \cref{fig:server-sketch:forest}.

In the Example, we want every reachable configuration of the system to be compatible
with the hierarchy.
We say that a \piterm\ $P$ is \emph{hierarchical} if there is a hierarchy $\Types$ such that every term reachable from $P$ is \tcompat.
Thus the hierarchy $\Types$ is a shape invariant of the communication topology under reduction.

It is instructive to express depth boundedness as a constraint on forest representation:
a term $P$ is depth-bounded if there is a constant $k$ such that every term reachable from $P$ has a structurally congruent $P'$ whereby $\forest(P')$ has height bounded by $k$.
It is straightforward to see that hierarchical terms are depth-bounded; the converse is however not true.%

\subsubsection*{A type system for hierarchical terms} %

While membership of the hierarchical fragment is undecidable,
by exploiting the forest structure,
we have devised a novel type system that guarantees the invariance of \tcompat[ibility] under reduction.
Furthermore type inference is decidable, so that the type system can be used to
infer a hierarchy \Types\, with respect to which the input term is hierarchical.
To the best of our knowledge, our type system is the first
that can infer a shape invariant of the communication topology of a system.

The typing rules that ensure invariance of \tcompat[ibility] under reduction arise from a new perspective of the \picalc{} reaction,
one that allows compatibility to a given hierarchy to be tracked more readily.
Suppose we are presented with a \tcompat\ term $P = C[S, R]$ where $C[\hbox{-},\hbox{-}]$ is the \emph{reaction context}, and the two processes $S = \out a<b>.S'$ and $R = \inp a(x).R'$ are ready to communicate over a channel $a$.
After sending the message $b$, $S$ continues as the process $S'$, while
upon receipt of $b$, $R$ binds $x$ to $b$ and continues as $R'' = R'\subst{x -> b}$.
Schematically, the traditional understanding of this transaction is:
first extrude the scope of $b$ to include $R$,
then let them react, as shown in \cref{fig:trad-react}.

\begin{figure}[tb]
  \centering%
  \inputfig{sync-trad}%
  \caption{Standard view of \picalc{} reactions}%
  \label{fig:trad-react}
\end{figure}

Instead, we seek to implement the reaction \emph{without scope extrusion}:
after the message is transmitted, the sender continues in-place as $S'$,
while $R''$ is split in two parts
$R_{\text{mig}}' \parallel R_{\neg\text{mig}}'$,
 one that uses the message
(the \emph{migratable} part) and one that does not.
As shown in \cref{fig:tcompat-react},
the migratable part of $R''$, $R_{\text{mig}}'$, is \enquote{installed} under $b$ so that it can
make use of the acquired name,
while the non-migratable one, $R_{\neg\text{mig}}'$, can simply continue in-place.

\begin{figure}[tb]
  \centering%
  \inputfig{sync-tcomp}%
  \caption{\protect{\tcompatibility} preserving reaction}%
  \label{fig:tcompat-react}%
\end{figure}

Crucially, the \emph{reaction context}, $C[\hbox{-}, \hbox{-}]$, is left unchanged.
This means that if the starting term is \tcompat,
the reaction context of the \emph{reactum} is \tcompat\ as well.
We can then focus on imposing constraints on the use of names of $R'$ so that the migration does not result in $R_{\text{mig}}'$ escaping the scope of previously bound names.

By using these ideas, our type system is able to statically accept \picalc\ encodings of such system as that discussed in Example~\ref{ex:cs-pattern}.
The type system can be used, not just to \emph{check} that a given $\Types$ is respected by the behaviour of a term, but also to \emph{infer} a suitable $\Types$ when it exists.
Once typability of a term is established,
safety properties such as unreachability of error states, mutual exclusion or bounds on mailboxes, can be verified algorithmically.
For instance, in Example~\ref{ex:cs-pattern}, a coverability check can prove that each client can have at most one reply pending in its mailbox.
To prove such a property, one needs to construct an argument that reasons about dynamically created names with a high degree of precision.
This is something that counter abstraction and uniform abstractions based methods
have great difficulty attaining.

Our type system is (necessarily) incomplete in that there are depth-bounded,
or even hierarchical,
systems that cannot be typed.
The class of \piterm{s} that can be typed is non-trivial, and includes terms which generate an unbounded number of names and exhibit mobility.%

\paragraph{Outline.}
In \cref{sec:picalc} we review the \picalc{}, depth-bounded terms, and related technical preliminaries.
In \cref{sec:t-compat} we introduce \tcompat[ibility] and the hierarchical terms.
We present our type system in \cref{sec:typesys}.
\Cref{sec:soundness} discusses soundness of the type system.
In \cref{sec:inference} we give a type inference algorithm; and in \cref{sec:verification} we  present results on expressivity and discuss applications.
We conclude with related and future work in \cref{sec:relwork,sec:future}.
All missing definitions and proofs can be found in \appendixorfull.

\section{The \picalc\ and the depth-bounded fragment}
\label{sec:picalc}\label{sec:prelim}

\subsection{Syntax and semantics}
We use a \picalc{} with guarded replication to express recursion~\cite{Milner:92a}.
Fix a universe $\Names$ of \emph{names} representing channels and messages.
The syntax is defined by the grammar:
\begin{grammar}
    \PiTerms \ni P, Q \is \zero | \new x.P | P_1 \parallel P_2 | M | \bang M
        && \text{process}\\
    M \is M + M | \pi. P
        && \text{choice}\\
    \pi \is \inp a(x) | \out a<b> | \tact
        && \text{prefix}
\end{grammar}
\begin{definition}\label{def:congr}
Structural congruence, $\congr$,
is the least relation that respects $\alpha$-conversion of bound names,
and is associative and commutative with respect to
  $+$ (choice) and $\parallel$ (parallel composition)
  with $\zero$ as the neutral element,
and satisfies laws for restriction:
  $ \new a . \zero \congr \zero $ and
  $ \new a . \new b . P \congr \new b . \new a . P $,
and
\begin{align*}
  \bang P & \congr P \parallel \bang P
  & \hbox{\emph{Replication}}
  \\
  P \parallel \new a . Q
  & \congr
  \new a . (P \parallel Q)
  \quad
  (\hbox{if $a \not\in \freenames(P)$})
  \qquad
  &
  \hbox{\emph{Scope Extrusion}}
\end{align*}%
\end{definition}
In $P = \pi.Q$, we call $Q$ the \emph{continuation} of $P$ and
will often omit $Q$ altogether when $Q = \zero$.
In a term $\new x.P$ we will occasionally
refer to $P$ as the \emph{scope} of $x$.
The name $x$ is bound in both $\new x.P$, and in $\inp a(x).P$.
We will write $\freenames(P)$, $\boundnames(P)$ and $\resboundnames(P)$
for the set of free, bound and restriction-bound names in $P$, respectively.
A sub-term is \emph{active} if it is not under a prefix.
A name is active when it is bound by an active restriction.
We write $\actrestr(P)$ for the set of active names of $P$.
Terms of the form $M$ and $\bang M$ are called \emph{sequential}.
We write $\Seq$ for the set of 
sequential terms, 
$\seqproc(P)$ for the set of 
active sequential processes of~$P$,
and $P^i$ for the parallel composition
of $i$ copies of $P$.

Intuitively, a sequential process acts like a thread
running finite-control sequential code.
A term 
$\tact.(P\parallel Q)$ is the equivalent of spawning a process $Q$
and continuing as $P$%
---although in this context the r\^oles of $P$ and $Q$ are interchangeable.
Interaction is by \emph{synchronous} communication over channels.
An \emph{input prefix} $\inp a(x)$ is a blocking receive on the channel $a$
binding the variable $x$ to the message.
An \emph{output prefix} $\out a<b>$ is a blocking send of the message $b$
along the channel $a$;
here $b$ is itself the name of a channel
that can be used subsequently for further communication:
an essential feature for mobility.
A non-blocking send can be simulated
by spawning a new process doing a blocking send.
Restrictions are used to make a channel name private.
A replication $\bang{(\pi.P)}$ can be understood as having a server that can spawn a new copy of $P$ whenever a process tries to communicate with it.
In other words it behaves like an infinite parallel composition
$(\pi.P \parallel \pi.P \parallel \cdots)$.

For conciseness, we assume channels are unary (the extension to the polyadic case is straightforward).
In contrast to process calculi without mobility, replication and systems of tail recursive equations are equivalent methods of defining recursive processes in the \picalc{} \cite[Section 3.1]{Milner:93}.



We rely on the following mild assumption,
that the choice of names is unambiguous,
especially when selecting a representative for a congruence class:

\begin{nameuniq}
  Each name in $P$ is bound at most once and
  $\freenames(P) \inters \boundnames(P) = \emptyset$.
\end{nameuniq}

\paragraph{Normal Form.}
The notion of hierarchy, which is central to this paper,
and the associated type system depend heavily on structural congruence.
These are criteria that, given a structure on names,
require the existence of a specific representative
of the structural congruence class
exhibiting certain properties.
However, we cannot assume the input term is presented as that representative;
even worse, when the structure on names is not fixed
(for example, when inferring types)
we cannot fix a representative and be sure that
it will witness the desired properties.
Thus, in both the semantics and the type system,
we manipulate a neutral type of representative called \emph{normal form},
which is a variant of the \emph{standard form}~\cite{Milner:92}.
In this way we are not distracted by
the particular syntactic representation we are presented with.

We say that a term $P$ is in \emph{normal form} ($P \in \PiNf$) if it is in standard form
and each of its inactive subterms is also in normal form.
Formally, normal forms are defined by the grammar
%
\begin{grammar}
    \PiNf \ni N \is \new x_1.\cdots \new x_n.
                      (A_1 \parallel \cdots \parallel A_m) \\
    A \is \pi_1. N_1 + \cdots + \pi_n. N_n |  \bang{\left(\pi_1. N_1 + \cdots + \pi_n. N_n \right)} \\
\end{grammar}
where the sequences $\lst{x}{n}$ and $\lst{A}{m}$ may be empty;
when they are both empty the normal form is the term $\zero$.
We further assume w.l.o.g. that 
normal forms satisfy \ref{nameuniq}.
Given a finite set of indexes $I = \set{i_1,\dots,i_n}$ we write
$\Parallel_{i \in I} A_i$ for $(A_{i_1} \parallel \cdots \parallel A_{i_n})$,
which is $\zero$ when $I$ is empty;
and $\Alt_{i \in I} \pi_i. N_i$ for
$(\pi_{i_1}. N_{i_1} + \cdots + \pi_{i_n}. N_{i_n})$.
This notation is justified by commutativity and associativity
of the parallel and choice operators.
Thanks to the structural laws of restriction,
we also write $\new X.P$ where $X = \set{\lstc{x}{n}}$,
or $\new x_1\:x_2\cdots x_n.P$, for $\new x_1.\cdots \new x_n.P$;
or just $P$ when $X$ is empty.
When $X$ and $Y$ are disjoint sets of names, we use juxtaposition for union.

Every process $P \in \PiTerms$ is structurally congruent
to a process in normal form.
The function $\nf \from \PiTerms \to \PiNf$,
defined in \appendixorfull,
extracts, from a term, a structurally congruent normal form.

Given a process $P$ with normal form  $\new X.\Parallel_{i \in I} A_i$,
the \emph{communication topology}%
\footnote{This definition arises from the ``flow graphs'' of~\cite{Milner:92}; see e.g.~\cite[p.~175]{Meyer:phd} for a formal definition.}
of $P$, written
$\CommTop{P}$, is defined as the labelled hypergraph
with
  $X$ as hyperedges and
  $I$ as nodes, each labelled with the corresponding $A_i$.
  An hyperedge $x \in X$ is connected with $i$
  just if $x \in \freenames(A_i)$.

\paragraph{Semantics.}
\label{sec:semantics}
We are interested in the reduction semantics of a \piterm{},
which can be described using the following rule.

\begin{definition}[Semantics of \picalc]
  \label{def:sem-of-picalc}
  The operational semantics of a term $P_0 \in \PiTerms$ is defined
  by the (pointed) transition system $(\PiTerms, \redto, P_0)$ on \piterm{s},
  where $P_0$ is the initial term,
  and the transition relation,
    ${\redto} \subseteq \PiTerms^2$,
  is defined by $P\redto Q$ if either
  \ref{redex-P} to \ref{redex-Q} hold,
  or
  \ref{redex-tau-P} and \ref{redex-tau-Q}
  hold, where \\[1ex]
  \begin{minipage}{.5\linewidth}
  \begin{defenum}[series=semantics]
    \item
      $P \congr \new W.(S \parallel R \parallel C) \in \PiNf$,
      \label{redex-P}
    \item
      $S = (\out a<b>.\new Y_s.S')+M_s$,
      \label{redex-S}
    \item
      $R = (\inp a(x).\new Y_r.R')+M_r$,
      \label{redex-R}
    \item
      $Q \congr \new W Y_s Y_r.
        (S' \parallel R'\subst{x->b} \parallel C)$,
      \label{redex-Q}
  \end{defenum}
  \end{minipage}
  \hfill
  \begin{minipage}{.45\linewidth}
  \begin{defenum}[resume=semantics]
    \item
      $P \congr \new W.(\tact.\new Y.P' \parallel C) \in \PiNf$,
      \label{redex-tau-P}
    \item
      $Q \congr \new W Y. (P' \parallel C)$.
      \label{redex-tau-Q}
  \end{defenum}
  \end{minipage}\\[1ex]
  We define the set of reachable terms from $P$ as
    $\reach(P) \is \Set{Q | P \redto^* Q}$,
  writing $\redto^*$ to mean the reflexive, transitive closure of $\redto$.
  We refer to the restrictions, $\new Y_s$, $\new Y_r$ and $\new Y$,
  as the restrictions \emph{activated} by the transition $P \redto Q$.
\end{definition}
Notice that the use of structural congruence in the definition of $\redto$
takes unfolding replication into account.

\begin{example}[Client-server]
\label{ex:server-client}%
\label{ex:servers}%
  We can model a variation of the client-server pattern
  sketched in the introduction, with the term
    $ \new s\:c.P $
  where
    $P = \bang{S} \parallel \bang{C} \parallel \bang{M}$,
    $S = s(x).\new d.\out x<d> $,
    $C = c(m).(\out s<m> \parallel \inp m(y).\out c<m>) $ and
    $M = \tact.\new m.\out c<m>$.
  The term $\bang{S}$ represents a server
  listening to a port $s$ for a client's requests.
  A request is a channel $x$ that the client sends to the server
  for exchanging the response.
  After receiving $x$ the server creates a new name $d$ and sends it over $x$.
  The term $\bang{M}$ creates unboundedly many clients,
  each with its own private mailbox $m$.
  A client on a mailbox $m$ repeatedly sends requests to the server
  and concurrently waits for the answer on the mailbox before recursing.
\end{example}

In the following examples, we use \CCS{}-style nullary channels, which can be understood as a shorthand:
$\inpz{c}.P \is \inp c(x).P$ and $\outz{c}.P \is \new x.\out c<x>.P$
where $x \not\in \freenames(P)$.

\begin{example}[Resettable counter]%
\label{ex:counter}
A counter with reset is a process reacting to messages on three channels
$\ch{inc}$, $\ch{dec}$ and $\ch{rst}$.
An $\ch{inc}$ message increases the value of the counter,
a $\ch{dec}$ message decreases it or causes a deadlock if the counter is zero, and
a $\ch{rst}$ message resets the counter to zero.
This behaviour is exhibited by the process
$
  C_i = \Bang{
    \inp p_i(t).\bigl(
        \inpz \ch{inc}_i.(\outz t \parallel \out p_i<t>)
      + \inpz \ch{dec}_i.(\inpz t.\out p_i<t>)
      + \inpz \ch{rst}_i.(\new t'_i.\out p_i<t'_i>)
    \bigr)
  }
$.
Here, the number of processes $\outz t$ in parallel with $\out p_i<t>$
represents the current value of the counter $i$.
A system
$ \bigl(
  \new p_1\:t_1.(C_1 \parallel \out p_1<t_1>)
    \parallel
  \new p_2\:t_2.(C_2 \parallel \out p_2<t_2>)
\bigr) $
can for instance simulate a two-counter machine
when put in parallel with a finite control process sending signals along the channels
$\ch{inc}_i$, $\ch{dec}_i$ and $\ch{rst}_i$.
\end{example}

\begin{example}[Unbounded ring]
\label{ex:ring}
  Let
    $R = \new m.\new s_0.(
                M \parallel \out m<s_0> \parallel \outz{s_0}
              )$,
    $S = \bang{(\inpz{s}.\outz{n})}$ and
    $M = \Bang{
                \inp m(n).
                  \inpz s_0.
                    \new s.(
                      S \parallel \out m<s> \parallel \outz{s}
                    )
              } $.
  The term $R$ implements an unboundedly growing ring.
  It initialises the ring with a single \enquote{master} node
  pointing at itself ($s_0$) as the next in the ring.
  The term $M$, implementing the master node's behaviour,
  waits on $\ch{s_0}$ and reacts to a signal
  by creating a new slave with address $s$
  connected with the previous next slave $n$.
  A slave $S$ simply propagates the signals on its channel
  to the next in the ring.
\end{example}

\subsection{Forest representation of terms}
\label{sec:forests}

In the technical developement of our ideas,
we will manipulate the structure of terms in non-trivial ways.
When reasoning about these manipulations,
a term is best viewed as a forest representing
(the relevant part of)
its abstract syntax tree.
Since we only aim to capture the active portion of the term,
the active sequential subterms are the leaves of its forest view.
Parallel composition corresponds to (unordered) branching,
and names introduced by restriction are represented
by internal (non-leaf) nodes.

A \emph{forest} is a simple,
acyclic, directed graph, $f = (N_f, \parent_f)$,
where the edge relation $n_1 \parent_f n_2$ means ``$n_1$ is the parent of $n_2$''.
We write $\parent*_f$ and $\parent+_f$ for the
reflexive transitive and the transitive closure
of $\parent_f$ respectively.
A \emph{path} is a sequence of nodes, $n_1 \, \dots \, n_k$,
such that for each $i < k$, $n_{i} \parent_f n_{i+1}$.
Henceforth we drop the subscript $f$ from
  $\parent_f, \parent*_f$ and $\parent+_f$
(as there is no risk of confusion),
and assume that all forests are finite.
Thus every node has a unique path to a root
(and that root is unique).

An \emph{$L$-labelled forest}
is a pair $\phi = (f_\phi, \ell_\phi)$ where
  $f_\phi$ is a forest and
  $\ell_\phi \from N_\phi \to L$ is a labelling function on nodes.
Given a path $n_1 \dots n_k$ of $f_\phi$, its \emph{trace} is the induced sequence
  $\ell_\phi(n_1) \dots \ell_\phi(n_k)$.
By abuse of language, a \emph{trace} is an element of $L^\ast$
which is the trace of some path in the forest.

We define $L$-labelled forests inductively from the empty forest $\emptyforest$.
We write $\phi_1\dunion \phi_2$ for the disjoint union
of forests $\phi_1$ and $\phi_2$,
and $l[\phi]$ for the forest with a single root,
which is labelled with $l \in L$,
and whose children are the respective roots of the forest $\phi$.
Since the choice of the set of nodes is irrelevant, we will always
interpret equality between forests up to isomorphism
(i.e. a bijection on nodes respecting parent and labeling).

\begin{definition}[Forest representation]
\label{def:forest-repr}
  We represent the structural congruence class of a term $P \in \PiTerms$
  with the set of labelled forests $\AST{P} \is \set{\forest(Q) | Q \congr P}$
  with labels in $\actrestr(P) \dunion \seqproc(P)$
  where $\forest(Q)$ is defined as
  \[
    \forest(Q) \is
      \begin{cases}
        \emptyforest                      \CASE Q = \zero\\
        Q[\emptyforest]                   \CASE Q \text{ is sequential}\\
        x[\forest(Q')]                    \CASE Q = \new x.Q' \\
        \forest(Q_1) \dunion \forest(Q_2) \CASE Q = Q_1 \parallel Q_2
      \end{cases}
  \]
  Note that leaves (and only leaves) are labelled with sequential processes.

  The \emph{restriction height}, $\height_\restr(\forest(P))$, is the length
  of the longest path formed of nodes labelled with names in $\forest(P)$.
\end{definition}

\begin{figure}[tb]
  \centering%
  \inputfig{example-forests}
  \caption[Examples of forests in $\AST{P}$]{%
    Examples of forests in $\AST{P}$ 
    where $ P = \new a\:b\:c. ( A_1 \parallel A_2 \parallel A_3 \parallel A_4 ) $,
      $A_1 = \inp a(x)$,
      $A_2 = \inp b(x)$,
      $A_3 = \inp c(x)$ and
      $A_4 = \out a<b>$.%
  }%
  \label{fig:forests}
\end{figure}

\noindent
In \cref{fig:forests} we show some of the possible forest representations
of an example term.

\subsection{Depth-bounded terms}
\label{sec:db}

\begin{definition}[Depth-bounded term~\cite{Meyer:08}]
\label{def:nest}\label{def:depth}
\label{def:depth-bounded}
  The \emph{nesting of restrictions} of a term is given by the function
  \begin{align*}
    \nestr(M) & \is \nestr(\bang{M}) \is \nestr(\zero) \is 0 \\
    \nestr(\new x.P) & \is 1 + \nestr(P) \\
    \nestr(P \parallel Q) & \is \max(\nestr(P), \nestr(Q)).
  \end{align*}
  The \emph{depth} of a term is defined as
  the minimal nesting of restrictions in its congruence class,
  $
    \depth(P) \is \min\set{\nestr(Q) | P \congr Q}.
  $
  A term $P \in \PiTerms$ is \emph{depth-bounded} if
  there exists $k \in \Nat$ such that
  for each $Q \in \reach(P)$, $\depth(Q) \leq k$.
  We write $\DBTerms$ for the set of terms with bounded depth.
\end{definition}

\iflongversion
  Notice that $\nestr$ is \emph{not} an invariant of structural congruence,
  whereas $\depth$ and depth-boundedness are.
  \begin{example}
    Consider the congruent terms $P$ and $Q$
    \[
    P =
      \new a . \new b . \new c . \big(
          \inp a(x) \parallel \out b<c> \parallel \inp c(y)
        \big)
    \congr
      \new a . \inp a(x) \parallel
      \new c . \big(
        (\new b . \out b<c>)
          \parallel
        \inp c(y)
      \big)
    = Q
    \]
    We have $\nestr(P) = 3$ and $\nestr(Q) = 2$;
    but $\depth(P) = \depth(Q) = 2$.
  \end{example}
\fi
It is straightforward to see that the nesting of restrictions of a term coincides with the height of its forest representation,
i.e.,~for every $P \in \PiTerms$, $\nestr(P) = \height_\restr(\forest(P))$.

\begin{example}[Depth-bounded term]
\label{ex:bounded}
  The term in \cref{ex:server-client} is depth-bounded:
  all the reachable terms are congruent to terms of the form
  \[
    Q_{ijk} = \new s\:c.\bigl(
      P \parallel N^i \parallel \mathit{Req}^j \parallel \mathit{Ans}^k
    \bigr)
  \]
  for some $i, j, k \in \Nat$
  where $N = \new m.\out c<m>$,
        $\mathit{Req} =\new m.(\out s<m> \parallel \inp m(y).\out c<m>)$ and
        $\mathit{Ans} = \new m.(\new d.\out m<d> \parallel \inp m(y).\out c<m>)$.
  For any $i, j, k$, $\nestr(Q_{ijk}) \leq 4$.%
\end{example}
\begin{example}[Depth-unbounded term]
\label{ex:unbounded}
  Consider the term in \cref{ex:ring} and the following run%
  :
  \begin{align*}
    R & \redto^\ast
      \new m\:s_0.(
        M \parallel
        \new s_1.(
          \bang{(\inpz{s_1}.\outz{s_0})} \parallel
          \out  m<s_1> \parallel \outz{s_1}
        ))
        \\
      & \redto^\ast
      \new m\:s_0.(
        M \parallel
        \new s_1.(
          \bang{(\inpz{s_1}.\outz{s_0})} \parallel
          \new s_2.(
            \bang{(\inpz{s_2}.\outz{s_1})} \parallel
            \out  m<s_2> \parallel \outz{s_2}
          ))
        )
        \redto^\ast \ldots
  \end{align*}
  The scopes of $s_0$, $s_1$, $s_2$ and the rest
  of the instantiations of $\restr s$
  are inextricably nested,
  thus $R$ has unbounded depth:
  for each $n \geq 1$, a term with depth $n$ is reachable.
\end{example}

%

Depth boundedness is a semantic notion.
Because the definition is a universal quantification over reachable terms, analysis of depth boundedness is difficult.
Indeed the membership problem is undecidable~\cite{Meyer:phd}.
In the communication topology interpretation, depth has a tight
relationship with the maximum length of the simple paths.
A path $v_1 e_1 v_2 \dots v_n e_n v_{n+1}$ in $\CommTop{P}$ is \emph{simple}
if it does not repeat hyper-edges, i.e.,~$e_i \neq e_j$ for all $i \neq j$.
A term is depth-bounded if and only if
there exists a bound on the length of the simple paths of
the communication topology of each reachable term~\cite{Meyer:08}.
This allows terms to grow unboundedly in \emph{breadth},
i.e.,~the degree 
of hyper-edges in the communication topology.

A term $P$ is \emph{embeddable} in a term $Q$, written $P \embeddedin Q$,
if
  $ P \congr \new X.\Parallel_{i \in I} A_i \in \PiNf $
and
  $ Q \congr \new X Y.(\Parallel_{i \in I} A_i \parallel R) \in \PiNf $
for some term $R$.
In~\cite{Meyer:08} the term embedding ordering, $\embeddedin$, is shown to be both
  a simulation relation on \piterm{s},
  and an effective well-quasi ordering on depth-bounded terms.
This makes the transition system
  $(\reach(P)/_{\congr}, {\redto}/_{\congr}, P)$
a \emph{well-structured transition system} (WSTS)~\cite{Finkel:01,Abdulla:96}
under the term embedding ordering.
Consequently a number of verification problems
are decidable for terms in $\DBTerms$.

\begin{theorem}[Decidability of termination~\cite{Meyer:08}]
  \label{th:db-term-decidable}
  The termination problem for depth-bounded terms,
  which asks, given a term $P_0 \in \DBTerms$,
  if there is an infinite sequence
  $P_0 \redto P_1 \redto \ldots$,
  is decidable.
\end{theorem}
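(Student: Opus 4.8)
The plan is to realise this as an instance of the general decidability of (non-)termination for well-structured transition systems~\cite{Finkel:01}, applied to the WSTS $(\reach(P_0)/_{\congr}, {\redto}/_{\congr}, P_0)$ induced by the term embedding ordering $\embeddedin$. All the ingredients the generic argument needs are already in place: $\embeddedin$ is an effective well-quasi order on $\DBTerms$ (hence on $\reach(P_0)$, since $P_0 \in \DBTerms$), it is a simulation for $\redto$, $\congr$ and $\embeddedin$ are decidable, and $\redto$ is finitely branching with a computable set of successors up to $\congr$. So the remaining work is (i) to make the effectiveness claims concrete and (ii) to run the standard finite reachability tree construction and prove that it decides non-termination.

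First I would fix a computable presentation of states and transitions. Each reachable term is represented by its normal form $\nf(Q) \in \PiNf$, with bound names canonicalised by a fixed naming discipline so that congruent terms get the same representative; then $\embeddedin$ is decided directly on normal forms. From a representative $N$ I would compute the finite set of $\congr$-classes of one-step successors: a reaction of $N$ is determined by selecting, among its finitely many active sequential subterms, a matching input/output pair (or a $\tact$-prefix), after unfolding at most once each replicated summand taking part in the reaction, with the activated restrictions $\alpha$-renamed by fresh canonical names. Termination of this enumeration is the only genuinely delicate effectiveness point, since replication must be unfolded, but only boundedly so to realise a single reaction.

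Second, I would build the \emph{reduced reachability tree} $\mathcal{T}$: the root is $\nf(P_0)$; a node $Q$ is expanded into its (finitely many, computable) successors unless it has a proper ancestor $Q'$ on the path from the root with $Q' \embeddedin Q$, in which case $Q$ is a leaf, marked \emph{covered}. $\mathcal{T}$ is finitely branching, and every root-to-leaf branch is finite: an infinite branch $Q_0 \redto Q_1 \redto \cdots$ lies inside $\reach(P_0) \subseteq \DBTerms$, so by the well-quasi order there are $j < i$ with $Q_j \embeddedin Q_i$, contradicting that $Q_i$ is uncovered. By K\"onig's lemma $\mathcal{T}$ is finite, and decidability of the successor computation and of $\embeddedin$ make it computable --- all without knowing a depth bound for $P_0$. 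Finally I would prove that $P_0$ admits an infinite $\redto$-run iff $\mathcal{T}$ has a covered leaf, which is then decidable by inspecting the finite tree. For one direction, if $Q$ is a covered leaf witnessed by ancestor $Q'$, then $Q' \redto^+ Q$ along a path of length $\ell \geq 1$ with $Q' \embeddedin Q$; since $\embeddedin$ is a simulation, $Q$ mimics this path reaching some $Q_1 \embeds Q$ in $\ell$ steps, and iterating produces an infinite run. For the converse, given an infinite run $P_0 \redto P_1 \redto \cdots$, the well-quasi order yields a least $m$ with $P_j \embeddedin P_m$ for some $j < m$; by minimality none of $P_0,\dots,P_{m-1}$ is covered along this branch, so $P_0 \cdots P_m$ is a branch of $\mathcal{T}$ ending in the covered leaf $P_m$.

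I expect the main obstacle to be the effectiveness part (i): pinning down the canonical representatives of $\congr$-classes and a terminating procedure that enumerates exactly the $\congr$-classes of a term's successors --- in particular bounding how much replication must be unfolded to realise one reaction --- and checking that the relation used for pruning really is the decidable $\embeddedin$ rather than full reachability. The tree construction and the pumping argument are then routine, and coincide with the standard WSTS reasoning already invoked for $\embeddedin$ in~\cite{Meyer:08,Meyer:phd}.
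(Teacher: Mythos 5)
This theorem is imported from \cite{Meyer:08} and the paper gives no proof of its own, so there is nothing to compare against except the standard argument you are reconstructing --- and your reconstruction is the correct one: it is exactly the Finkel--Schnoebelen reduced-reachability-tree decision procedure for WSTS termination \cite{Finkel:01}, instantiated with the term-embedding wqo and the effectiveness facts that \cite{Meyer:08} establishes. The one hypothesis worth making explicit is that the pumping step (covered leaf $\Rightarrow$ infinite run) needs $\embeddedin$ to be a simulation in the \emph{one-step} sense ($P \embeddedin Q$ and $P \redto P'$ imply $Q \redto Q'$ with $P' \embeddedin Q'$), not merely weak compatibility via $\redto^*$; this is what the paper asserts when it calls $\embeddedin$ a simulation relation, and your argument uses it correctly. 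Your flagged effectiveness obligations (canonical representatives of congruence classes, decidability of $\embeddedin$, bounded replication unfolding to enumerate one-step successors) are indeed the only non-routine parts, and all are discharged in \cite{Meyer:08,Meyer:phd}.
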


\begin{theorem}[Decidability of coverability~\cite{Meyer:08,Wies:10}]
\label{th:db-cover-decidable}
  The coverability problem for depth-bounded terms, which asks,
  given a term $P \in \DBTerms$ and a \emph{query} $Q \in \PiTerms$,
  if there exists $P' \in \reach(P)$ such that $Q \embeddedin P'$,
  is decidable.
\end{theorem}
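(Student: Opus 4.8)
The plan is to obtain this from the generic theory of well-structured transition systems (WSTS). As observed just above the statement, by~\cite{Meyer:08} term embedding $\embeddedin$ is a simulation on $\PiTerms$ and an effective well-quasi-order when restricted to $\reach(P)$ for $P \in \DBTerms$, so that $(\reach(P)/_\congr, \redto/_\congr, P)$ is a WSTS with strong (monotone) compatibility. The generic backward-reachability algorithm~\cite{Finkel:01,Abdulla:96} decides coverability for such a system once one has: (i) decidability of $\embeddedin$; (ii) monotonicity of $\redto$ with respect to $\embeddedin$; and (iii) an \emph{effective pred-basis}, i.e.\ from any term one can compute a finite basis of the $\embeddedin$-upward closure of its set of one-step predecessors. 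Condition (ii) is immediate from the simulation property. For (i) one puts both $P$ and $Q$ into normal form and searches, over the finitely many rearrangements of $\new X.\Parallel_{i\in I}A_i$ allowed by $\congr$ (including a bounded amount of replication unfolding), for a matching of the restriction-nesting forest of $P$ into that of $Q$; since normal forms are finite this is decidable.

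The heart of the argument is (iii). Given a normal form $N'$, one reads the reduction rule of \cref{def:sem-of-picalc} backwards: a predecessor must contain, as active sequential subterms, a summation with an $\out a<b>$-guarded summand and one with an $\inp a(x)$-guarded summand over the same channel $a$ (or a single $\tact$-guarded summand), positioned so that after firing and re-normalising the result embeds $N'$. Undoing a step therefore amounts to: pushing the restrictions activated by the transition (the $Y_s,Y_r,Y$ of the rule) back under their guards; undoing the substitution $\subst{x->b}$; and, because $\redto$ folds replication through $\congr$, re-absorbing any fired copy of a replicated summand. Enumerating these finitely many possibilities over the \enquote{core} of $N'$ that is actually touched by the step, while leaving the remaining summands and restriction layers to be absorbed by the upward closure, yields a finite candidate set generating $\{\,N \mid N \redto N'' \text{ for some } N'' \embeddedin N'\,\}$ up to $\embeddedin$.

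The main obstacle is a well-known subtlety: backward iteration from $\uparrow Q$ can produce terms outside $\DBTerms$, and $\embeddedin$ is \emph{not} a wqo on all of $\PiTerms$, so termination of the naive saturation is not guaranteed — and no a-priori depth bound for $\reach(P)$ is available, since depth-boundedness is undecidable. The resolution, following Wies, Zufferey and Henzinger~\cite{Wies:10}, is to run a forward procedure on the ideal completion of $(\DBTerms,\embeddedin)$ instead: ideals are represented by \enquote{limit terms} in which parallel multiplicities and restriction-nesting layers may carry the marking $\omega$, $\redto$ lifts to a monotone relation on limit terms with computable post-image, and a Karp--Miller-style exploration with acceleration of increasing $\omega$-sequences computes a finite representation of the downward closure of $\reach(P)$. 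Coverability then reduces to testing whether $\uparrow Q$ meets this set, which is decidable because membership of a term and intersection with $\uparrow Q$ are decidable on limit terms. I expect the technically heaviest part to be establishing that this ideal completion is effectively presented — in particular that post on limit terms is computable and that $\omega$-acceleration is both sound and complete for the covering set; the normal-form bookkeeping for (i)--(iii) is routine by comparison.
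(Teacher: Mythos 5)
First, a point of reference: the paper does not prove this theorem at all — it is imported verbatim from \cite{Meyer:08,Wies:10}, and the surrounding text only records the ingredients you also start from (term embedding $\embeddedin$ is a simulation and an effective wqo on depth-bounded terms, making $(\reach(P)/_{\congr},\redto/_{\congr},P)$ a WSTS). Your overall architecture — diagnosing why the naive backward pred-basis saturation is problematic (no a-priori bound, $\embeddedin$ not a wqo on all of $\PiTerms$) and switching to a forward analysis over the ideal completion with limit configurations — is exactly the route of \cite{Wies:10}, so you are faithfully reconstructing the cited proof rather than diverging from the paper.

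There is, however, a genuine gap in the step you actually use to conclude: you claim that a Karp--Miller-style exploration with $\omega$-acceleration terminates and computes a finite representation of the exact downward closure of $\reach(P)$, and you yourself flag ``completeness of $\omega$-acceleration for the covering set'' as the heavy part. That claim is stronger than what the cited literature establishes, and its proof is not available: generalized Karp--Miller/clover procedures do not terminate for arbitrary WSTS, and for depth-bounded processes the exact cover set is precisely what one does \emph{not} know how to compute — the later work \cite{Zufferey:12}, cited in this paper, only produces an over-approximation. What \cite{Wies:10} actually shows is that limit configurations (terms with $\omega$-exponents) form an \emph{adequate domain of limits}: every downward-closed set of depth-bounded configurations is a finite union of ideals denoted by such limits, post-images are computable on them, and ideal inclusion is decidable. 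Coverability is then decided \emph{without} computing the cover set, e.g.\ via the Expand--Enlarge--Check scheme, or equivalently by interleaving two semi-procedures: forward enumeration of $\reach(P)$ modulo $\congr$, checking $Q \embeddedin P'$ for each reachable $P'$ (answers the coverable instances), against an enumeration of candidate downward-closed inductive invariants given as finite sets of limit configurations containing $P$ and disjoint from $\uparrow Q$ (one exists whenever $Q$ is uncoverable, by the ideal decomposition, and each candidate is checkable since post and inclusion are effective). Replacing your acceleration claim by this argument closes the gap; note also that the backward route you sketch first does work when a depth bound $k$ is supplied as input (essentially Meyer's setting), the unknown-bound case being exactly the contribution of \cite{Wies:10}.
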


\section{\tcompatibility\ and hierarchical terms}
\label{sec:t-compat}

A hierarchy is specified by a finite forest $(\Types, \parent)$.
In order to formally relate \emph{active} restrictions in a term to nodes of the hierarchy \Types,
we annotate restrictions with types.
For the moment we view types abstractly as elements of a set $\HTypes$, equipped with a map $\base \from \HTypes \to \Types$.
An annotated restriction $\restr (x\tas \type)$ where $\type \in \HTypes$ will be associated with the node $\base(\type)$ in the hierarchy \Types.
Elements of \HTypes\ are called \emph{types}, and those of \Types\ are called \emph{base types}.
In the simplest case and, especially
for Section~\ref{sec:t-compat},
we may assume $\HTypes = \Types$ and $\base(t) = t$.
In \cref{sec:typesys} we will 
consider a set $\HTypes$ of types generated from $\Types$, and a non-trivial $\base$ map.
\begin{definition}[Annotated term]%
\donotbreak%
\label{def:annot-term}%
A \emph{\pre\HTypes-annotated \piterm}
  (or simply \emph{annotated \piterm}) $P \in \PiAnnot$
  has the same syntax as ordinary \piterm{s}
  except that restrictions take the form
    $\restr (x \tas \type)$
  where $\type \in \HTypes$.
  In the abbreviated form $\restr X$,
  $X$ is a set of 
  annotated names $(x\tas \type)$.
 \end{definition}
\emph{Structural congruence}, $\congr$, of annotated terms, is defined by Definition~\ref{def:congr}, with the proviso that the type annotations are invariant under \pre\alpha-conversion and replication.
For example,
$
  \Bang{\pi . \new (x\tas\tau) . P}
    \congr
  \pi . \new (x\tas\tau) . P \parallel \Bang{\pi . \new (x\tas\tau) . P}
$
and
$
  \new (x \tas \type) . P  \congr  \new (y \tas \type) . P\subst{x -> y}
$;
observe that the annotated restrictions that occur in a replication unfolding
are necessarily inactive.

The forest representation of an annotated \piterm{}
  is obtained from Definition~\ref{def:forest-repr}
  by replacing the case of $Q = \new (x \tas \type).Q'$ by
  \[
    \forest(\new (x \tas \type).Q') := (x, t)[\forest(Q')]
  \]
  where $\base(\type) = t$.
  Thus the forests in $\AST{P}$ have labels in
  $(\actrestr(P) \times \Types) \dunion \seqproc(P)$.
  We write $\Forests_{\Types}$ for the set of forests with labels in
    $(\Names \times \Types) \dunion \Seq$.
  We write $\PiNfAnnot$ for the set of \pre\HTypes-annotated \piterm{s}
  in normal form.

The definition of the transition relation of annotated terms,
  $P \redto Q$,
  is obtained from Definition~\ref{def:sem-of-picalc},
  where $W, Y_s, Y_r$ and $Y$ are now sets of annotated names,
  by replacing clauses \ref{redex-Q} and \ref{redex-tau-Q} by
  \\[1ex]
  \begin{minipage}{.5\linewidth}\itshape
    \begin{defenum}[leftmargin=3em]
      \item[(iv')] $Q \congr \new W Y_s' Y_r'.
              (S' \parallel R'\subst{x->b} \parallel C)$
    \end{defenum}
  \end{minipage}
  \hfill
  \begin{minipage}{.45\linewidth}\itshape
    \begin{defenum}[rightmargin=3em]
      \item[(vi')] $Q \congr \new W Y'. (P' \parallel C)$
    \end{defenum}
  \end{minipage}\\[1ex]
  respectively, such that
    $Y_s \restriction \Names = Y_s' \restriction \Names$,
    $Y_r \restriction \Names = Y_r' \restriction \Names$, and
    $Y \restriction \Names = Y \restriction \Names$,
  where $X \restriction \Names := \set{x \in \Names \mid \exists \type . (x:\type) \in X}$.
  I.e.~the type annotation of the names that are activated by the transition (i.e.~those from $Y_s, Y_r$ and $Y$) are not required to be preserved in $Q$.
  (By contrast, the annotation of every active restriction in $P$ is preserved by the transition.)
  While in this context inactive annotations can be ignored by the transitions,
  they will be used by the type system in \cref{sec:typesys},
  to establish invariance of \tcompat.


Now we are ready to explain what it means
for an annotated term $P$ to be {\tcompat}:
there is a forest in $\AST{P}$ such that
every trace of it projects to a chain in the partial order $\Types$.

\begin{definition}[{\tcompat[ibility]}]
\label{def:t-compat}
  Let $P \in \PiAnnot$ be an annotated \piterm.
  A forest $\phi \in \AST{P}$ is \emph{\tcompat} if
  for every trace
    $((x_1, t_1) \dots (x_k, t_k) \: A)$
  in $\phi$ it holds that
    $t_1 \tlt t_2 \tlt \dots \tlt t_k$.
  The \piterm\ $P$ is \emph{\tcompat} if
  $\AST{P}$ contains a \tcompat\ forest.
  A term is \emph{\tshaped}\ if each of its subterms is \tcompat.
\end{definition}

As a property of annotated terms,
\tcompat[ibility] is by definition invariant under structural congruence.


A term $P'\in\PiAnnot$ is a \emph{type annotation} (or simply \emph{annotation}) of $P\in\PiTerms$
if its \emph{type-erasure}, written $\erase{P}$, coincides with $P$.
(We omit the obvious definition of type-erasure.)
A \emph{consistent annotation} of a transition of terms, $P \redto Q$, is a choice function that, given an annotation $P'$ of $P$, returns an annotation $Q'$ of $Q$ such that $P' \redto Q'$.
Note that it follows from the definition that the annotation of every active restriction in $P'$ is preserved in $Q'$.
The effect of the choice function is therefore to pick a possibly new annotation for each restriction in $Q'$ that is activated by the transition.
Thus, given a semantics $(\PiTerms, \redto, P)$ of a term $P$, and an annotation $P'$ of $P$, and a consistent annotation for every transition of the semantics, there is a well-defined pointed transition system $(\PiAnnot, \redto', P')$ such that every transition sequence of the former lifts to a transition sequence of the latter.
We call  $(\PiAnnot, \redto', P')$ a \emph{consistent annotation} of the semantics $(\PiTerms, \redto, P)$.

\begin{definition}[Hierarchical term]
\label{def:hierarchical}
  A term $P\in\PiTerms$ is \emph{hierarchical}
  if there exist a finite forest $\Types = \HTypes$
  and a consistent annotation $(\PiAnnot, \redto', P')$
  of the semantics $(\PiTerms, \redto, P)$ of $P$,
  such that all terms reachable from $P'$ are \tcompat.
\end{definition}

\begin{example}
The term in \cref{ex:servers,ex:bounded} is hierarchical:
take the hierarchy
  $\Types = \ty{s} \parent \ty{c} \parent \ty{m} \parent \ty{d}$
and annotate each name in $Q_{ijk}$ as follows:
  $ s \tas \ty{s} $,
  $ c \tas \ty{c} $,
  $ m \tas \ty{m} $ and
  $ d \tas \ty{d} $.
The annotation is consistent, and $\forest(Q_{ijk})$ is \tcompat\ for all $i$,$j$ and $k$.

\cref{ex:ring} gives an example of a term that is not hierarchical.
The forest representation of the reachable terms
shown in \cref{ex:unbounded} does not have a bounded height,
which means that if $\Types$ has $n$ base types,
there is a reachable term with a representation of height bigger than $n$,
which implies that there will be a path repeating a base type.
\end{example}

Let us now study this fragment.
First it is easy to see that invariance of \tcompat[ibility] under reduction $\redto$,
  for some finite \Types,
puts a bound $\card{\Types}$ on the height of the \tcompat\ reachable forests,
and consequently a bound on depth.

\begin{theorem}
\label{th:hier-db}
  Every hierarchical term is depth-bounded. The converse is false.
\end{theorem}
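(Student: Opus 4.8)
The plan is to prove the two halves separately. For the forward direction, I would show that \tcompat[ibility] to a finite forest $\Types$ bounds the restriction height of the witnessing forest representation, hence bounds depth. Concretely, suppose $P$ is hierarchical, witnessed by a finite forest $\Types$ and a consistent annotation $(\PiAnnot, \redto', P')$ such that every reachable $Q'$ is \tcompat. Fix any $Q \in \reach(P)$; lifting the reduction sequence $P \redto^* Q$ through the consistent annotation yields $Q' \in \reach(P')$ with $\erase{Q'} = Q$, and $Q'$ is \tcompat. By Definition~\ref{def:t-compat} there is a \tcompat\ forest $\phi \in \AST{Q'}$: every trace $((x_1,t_1)\dots(x_k,t_k)\,A)$ of $\phi$ satisfies $t_1 \tlt \dots \tlt t_k$ in the partial order $\Types$. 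Since $\Types$ is a finite forest, any strict $\tlt$-chain has length at most the height of $\Types$, call it $h = \card{\Types}$ (or more precisely the length of the longest chain in $\Types$); so every path of name-labelled nodes in $\phi$ has at most $h$ nodes, i.e.\ $\height_\restr(\phi) \leq h$. Erasing annotations from $\phi$ gives a forest in $\AST{Q}$ of the same restriction height, which by the identity $\nestr(\cdot) = \height_\restr(\forest(\cdot))$ and the fact that $\AST{Q}$ ranges over all structural congruents of $Q$ gives $\depth(Q) \leq h$. Since $h$ does not depend on $Q$, $P$ is depth-bounded with bound $h$.

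For the converse, it suffices to exhibit a single depth-bounded term that is not hierarchical. The natural candidate is the unbounded ring $R$ of \cref{ex:ring}; but \cref{ex:unbounded} already shows that $R$ has \emph{unbounded} depth, so it witnesses the other inclusion, not this one. Instead I would take a term that is depth-bounded (indeed of bounded \emph{depth}) yet for which no finite hierarchy can work because every structural congruent forces the same base type to repeat along a path. A clean choice is a term that builds, for each $n$, a reachable configuration whose \emph{only} structural-congruence representatives with small restriction height still have arbitrarily long chains of restrictions that are forced to be identically typed. Actually the simplest route: take a depth-bounded term whose reachable set includes, for every $n$, a term $Q_n$ all of whose forest representations have restriction height exactly $2$ but in which, in any annotation, two nested restrictions must receive base types $t, t'$ with $t \tlt t'$ \emph{and} by symmetry also $t' \tlt t$, contradicting antisymmetry of the forest order; more robustly, pick a depth-$2$ bounded term that generates unboundedly many \emph{incomparably nested} restriction pairs so that no finite $\Types$ can linearly order all the required parent/child pairs consistently.

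The cleanest concrete witness is the following: let $P_0 = \new a.\bang{(\tact.\new b.(\out a<b> \parallel \new a'.\out b<a'>))}$-style term, or more simply any depth-bounded term where two names play symmetric roles forcing $t \tlt t'$ from one run and $t' \tlt t$ from a congruent rearrangement of the same term. I would verify (i) depth-boundedness by exhibiting the bounded-depth normal forms of all reachable terms, exactly as in \cref{ex:bounded}, and (ii) non-hierarchicality by arguing that any finite $\Types$ and any consistent annotation produce, on some reachable term, a forest-representation trace that either repeats a base type (if $\Types$ is too small) or violates the chain condition of Definition~\ref{def:t-compat} (because the ``parent-of'' relation among the relevant restrictions is not acyclic once all congruents are considered). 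The main obstacle is isolating a witness that is genuinely depth-bounded \emph{and} provably defeats \emph{every} finite hierarchy and \emph{every} consistent annotation simultaneously: one must reason about the interaction between the freedom to choose new annotations for activated restrictions (Definition~\ref{def:annot-term} and the clauses (iv'), (vi')) and the freedom to pick a favourable congruent representative when checking \tcompat. I expect to handle this by choosing a term whose reachable configurations have an essentially \emph{rigid} nesting structure — so the congruence freedom cannot help — combined with a genuine cyclic dependency among the roles of names that no partial order can accommodate.
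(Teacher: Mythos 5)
Your forward direction is correct and is essentially the paper's own argument: a \tcompat\ forest cannot repeat a base type along any trace, so its restriction height is bounded by the height of the finite forest $\Types$, and via $\nestr(\cdot)=\height_\restr(\forest(\cdot))$ this bounds $\depth(Q)$ uniformly over $\reach(P)$.

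The converse, however, is a genuine gap: you never actually produce a counterexample, and the one candidate you sketch does not work. The term $\new a.\bang{(\tact.\new b.(\out a<b> \parallel \new a'.\out b<a'>))}$ is hierarchical: annotate with base types $t_a \parent t_b \parent t_{a'}$ and place each unfolding's $b$ (with its $a'$ below) under $a$; no cyclic constraint arises, so ``symmetric roles'' of this kind are not enough. Your guiding intuition of ``forcing $t \tlt t'$ from one run and $t' \tlt t$ from a congruent rearrangement'' also misreads the definition: \tcompat[ibility] only requires \emph{some} favourable congruent representative, so an unfavourable rearrangement proves nothing; and a consistent annotation may assign a \emph{different} type to every activated instance of the same syntactic restriction, so one cannot speak of ``the'' type of $a$ versus ``the'' type of $b$ at all. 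The obstacle you name in your last sentence is precisely the heart of the proof, and it is left undischarged. The paper resolves it with the concrete witness $\bang{A} \parallel \bang{B} \parallel \bang{(C_1+C_2)}$, where $A$ and $B$ create unboundedly many names sent over $p$ and $q$, and the choice between $C_1 = \inp p(x).\bang{(\inp q(y).\out x<y>)}$ and $C_2 = \inp q(x).\bang{(\inp p(y).\out x<y>)}$ can, \emph{after} the annotations of the already-activated restrictions are fixed, force nesting of the $b$-instances under $a$-instances or vice versa. The real work is then a counting (pumping) argument: taking more than $2h$ instances of each name, where $h$ is the height of $\Types$, at most $h$ of the nested names can be extruded without creating an over-long path, so some instance's base type must lie strictly below another's; alternating the two continuations $h+1$ times yields a strictly increasing chain $t_a^1 \tlt t_a^2 \tlt \dots \tlt t_a^{h+1}$ in $\Types$, contradicting its height. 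Without a witness of this kind, together with a quantitative argument that simultaneously defeats every finite $\Types$, every per-instance annotation, and every choice of congruent representative, the second half of the theorem remains unproved.
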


Thanks to \cref{th:db-cover-decidable},
an immediate corollary of \cref{th:hier-db} is that
coverability and termination are decidable for hierarchical terms.

\smallskip



Unfortunately, like the depth-bounded fragment, membership of the hierarchical fragment is undecidable.
The proof is by adapting the argument for the undecidability of depth boundedness~\cite{Meyer:phd}.

\begin{lemma}
\label{lemma:term-hier}
  Every terminating \piterm{} is hierarchical.
\end{lemma}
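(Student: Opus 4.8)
The statement says: every terminating \piterm{} is hierarchical. The key observation is that a terminating term has a finite reachability set (up to structural congruence). Actually, more carefully: termination in the sense of Theorem~\ref{th:db-term-decidable} means there is no infinite reduction sequence. By König's lemma, since the transition relation is finitely branching (up to $\congr$), the reachability tree is finite, so $\reach(P)/_{\congr}$ is finite. The plan is to exploit this finiteness: if only finitely many terms are reachable, then there is a uniform bound on the nesting of restrictions, so we only ever need finitely many ``levels'' in the hierarchy, and we can build a hierarchy $\Types$ large enough and an annotation strategy that never creates a conflict.

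\textbf{First step: finiteness of reachability.} I would first argue that $\reach(P)/_{\congr}$ is finite whenever $P$ is terminating. The transition relation $\redto$ on congruence classes is finitely branching: a normal form $\new W.\Parallel_{i\in I}A_i$ has finitely many sequential components $A_i$, and a reduction is determined by choosing a $\tact$-prefixed summand of one component, or a matching input/output pair among the (finitely many) components, with replicated components contributing one unfolded copy each. Hence the reachability tree rooted at $P$ is finitely branching; if it were infinite, König's lemma would give an infinite reduction sequence, contradicting termination. So $\reach(P)/_\congr$ is finite; in particular $\depth$ is bounded on it, say by $k$, so $P$ is depth-bounded --- but we need more, namely a \emph{shape} invariant.

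\textbf{Second step: constructing the hierarchy and a consistent annotation.} Here is where the construction happens. Because $\reach(P)/_\congr$ is finite, fix representatives $Q_1,\dots,Q_N$ in normal form of the reachable classes. Each $Q_j$ has finitely many active restriction names; each such name sits at some depth $d\le k$ in $\forest(Q_j)$. The idea is to take $\Types$ to be a forest that is a disjoint union of one chain per reachable term, or --- more uniformly --- a single chain $t_1 \parent t_2 \parent \dots \parent t_k$ of length $k$, and to annotate each active restriction $\restr x$ occurring in a reachable $Q_j$ with $t_{d}$ where $d$ is the depth at which $x$ occurs in $\forest(Q_j)$. One must check (i) that this is a \emph{consistent} annotation of the semantics in the sense defined before Definition~\ref{def:hierarchical}: an active restriction's annotation is preserved by transitions --- but along a reduction $P\redto Q$ the reaction context $C$ is carried over unchanged up to $\congr$, so the names of $C$ keep their depth and hence their annotation, while the newly activated restrictions $Y_s, Y_r, Y$ are free to receive a fresh (depth-appropriate) annotation; (ii) that with this annotation every reachable term is \tcompat, i.e.~along every root-to-leaf trace the base-type indices are strictly increasing: since we annotated a name at depth $d$ with $t_d$, a trace through nodes at depths $d_1 < d_2 < \dots$ gets labels $t_{d_1}, t_{d_2},\dots$ which is strictly increasing in the chain order. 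The subtlety is that ``depth in $\forest(Q_j)$'' may not be well-defined if different congruent representatives nest the same name differently; but \tcompat[ibility] only requires \emph{some} forest in $\AST{Q_j}$ to be compatible, so I would fix, for each reachable class, one normal-form representative and use \emph{its} depths, and then verify that the transition can always be re-routed through such chosen representatives.

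\textbf{The main obstacle.} The delicate point is matching the bookkeeping of the ``consistent annotation'' definition with the freedom to rename/reannotate activated restrictions. Concretely: when $P \redto Q$, the definition forces every \emph{active} restriction of the annotated $P'$ to keep its annotation in $Q'$; so I must be sure that choosing annotations ``by depth in a fixed representative'' is compatible with this constraint across the whole reachability graph simultaneously --- i.e.~that there is a single global assignment of annotations to (occurrences of) restriction names, stable under $\congr$ and under $\redto$, realizing the depth-based scheme. The clean way to handle this is: since $\reach(P)/_\congr$ is finite, enumerate it as a finite graph, pick normal-form representatives, and define the annotated transition system $(\PiAnnot, \redto', P')$ by hand on this finite graph, checking the ``active restrictions preserved'' condition edge by edge; finiteness makes this a finite verification rather than an inductive argument, and the adaptation of Meyer's undecidability argument (mentioned in the excerpt) presumably proceeds exactly this way. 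The rest --- strict monotonicity of traces, hence \tcompat[ibility] of every reachable term, hence hierarchicality --- is then immediate from the depth-based labelling.
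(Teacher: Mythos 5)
Your first step is exactly the paper's argument: the reduction relation is finitely branching up to $\congr$, so by K\"onig's lemma a terminating term has finitely many reachable terms and hence finitely many (disambiguated) active names. The gap is in your second step. The depth-indexed annotation over a chain $t_1 \parent t_2 \parent \dots \parent t_k$ of length equal to the depth bound is not, in general, simultaneously consistent and compatibility-preserving, and the ``edge-by-edge finite verification'' you defer to can simply fail for that scheme rather than discharge it. Concretely, take $P = \new a.\new b.(\out c<a> \parallel \inp c(x).\inp x(y).\inp b(z))$ with $c$ free; it terminates after one step. A minimal-nesting representative is $\new a.(\out c<a>) \parallel \new b.(\inp c(x).\inp x(y).\inp b(z))$, so both $a$ and $b$ sit at depth $1$ and your scheme annotates both with $t_1$. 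After the synchronisation on $c$ the reachable term is congruent to $\new a.\new b.(\inp a(y).\inp b(z))$, whose single sequential subterm has both $a$ and $b$ free; hence in every forest of $\AST{\cdot}$ the nodes for $a$ and $b$ lie on the path above that leaf, and since consistency forbids re-annotating restrictions that were already active, every such trace repeats $t_1$ and no \tcompat\ forest exists. The same phenomenon shows that the number of levels you need is not controlled by the depth bound $k$ at all (indeed depth-boundedness does not imply hierarchicality, \cref{th:hier-db}); what finiteness buys you is a bound on the number of \emph{names}, not on the number of levels. The coherence problem you yourself flag as ``the main obstacle''---a single global, $\congr$- and $\redto$-stable assignment realising the scheme---is precisely what is left unproved, and ``presumably proceeds exactly this way'' is not an argument: for the depth-based labelling such an assignment need not exist.

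The paper's proof sidesteps the bookkeeping entirely. Since only finitely many active names occur across the reachable terms, take $\Types$ to be this set of names equipped with an \emph{arbitrary total order}, and annotate every restriction $\restr x$ with the base type $x$ itself. Consistency is then trivial (the annotation of a name is the name, so it never changes along a transition), and every reachable term is \tcompat: list the active restrictions of its normal form according to the chosen total order, obtaining a forest whose traces are strictly increasing chains. If you want to salvage your construction, you must prove the existence of a globally coherent level assignment---which, for terminating terms, is exactly such a total order on all names in disguise---rather than assume the finite check succeeds; at that point the economical chain of length $k$ gives you nothing, and the one-type-per-name hierarchy is both simpler and provably correct.
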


\begin{proof}
  Since the transition system of a term,
    quotiented by structural congruence,
  is finitely branching,
  by K\"onig's lemma the computation tree of a terminating term is finite,
  so it contains finitely many reachable processes
  and therefore finitely many names.
  Take the set of all (disambiguated) active names of the reachable terms and
  fix an arbitrary total order $\Types$ on them.
  The consistent annotation with $(x:x)$ for each name
  will prove the term hierarchical.
\end{proof}

\begin{theorem}
\label{th:hier-undec}
  Determining whether an arbitrary \piterm{} is hierarchical, is undecidable.
\end{theorem}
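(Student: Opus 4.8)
The plan is to reduce the halting problem for two-counter (Minsky) machines to the problem of deciding hierarchicalness, following the pattern of Meyer's undecidability proof for depth boundedness~\cite{Meyer:phd}, but crucially invoking Lemma~\ref{lemma:term-hier} in one of the two directions and \cref{th:hier-db} in the other.

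First I would fix a faithful encoding of two-counter machines into the \picalc{}: given a machine $\mathcal{M}$, one constructs --- computably --- a term $P_{\mathcal{M}}$ that simulates $\mathcal{M}$ instruction by instruction, emits a signal (a \emph{tick}) on a dedicated channel $t$ before each simulated step, and has \emph{no} infinite reduction sequence precisely when $\mathcal{M}$ halts (and in that case emits only finitely many ticks). The resettable-counter construction of \cref{ex:counter} already witnesses that counters, hence two-counter machines, are expressible in the calculus; internalising the finite control and arranging the encoding to emit one observable tick per step and to clean up all live threads on halting is routine.

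Next I would pair $P_{\mathcal{M}}$ with a \emph{depth pump} driven by the ticks. The pump $\mathsf{Pump}(t)$ is essentially the master/slave construction of the unbounded ring of \cref{ex:ring}, with the ring-extension step gated by a tick on $t$: it carries a ``current bottom'' name on an auxiliary channel, and on each tick it creates a fresh name, spawns a small slave process referring both to the new name and to the previous bottom (so that the two scopes become inextricably nested, exactly as in \cref{ex:unbounded}), reposts the carrier with the new name, and posts a signal on it. Once ticks cease, $\mathsf{Pump}(t)$ is an inert guarded replication (no remaining output on $t$, no $\tact$, no self-feeding). I then set $P' \is \new t.(P_{\mathcal{M}} \parallel \mathsf{Pump}(t))$, which is computable from $\mathcal{M}$.

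It then remains to check the two implications. If $\mathcal{M}$ halts, then $P_{\mathcal{M}}$ terminates and produces finitely many ticks, so $\mathsf{Pump}(t)$ ends up stuck and $P'$ is terminating; by Lemma~\ref{lemma:term-hier}, $P'$ is hierarchical. If $\mathcal{M}$ does not halt, then infinitely many ticks are produced and, by the same argument as in \cref{ex:unbounded}, $\mathsf{Pump}(t)$ builds chains of nested restrictions of unbounded height, so $P'$ is not depth-bounded; by \cref{th:hier-db}, $P'$ is not hierarchical. Hence $P'$ is hierarchical iff $\mathcal{M}$ halts, and undecidability of the halting problem yields the theorem. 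The main obstacle --- and really the only point needing care --- is pinning down the encoding $P_{\mathcal{M}}$ so that it simultaneously (i)~signals exactly one tick per simulated step and (ii)~leaves no residual live thread when $\mathcal{M}$ halts, so that $P'$ is genuinely terminating in that case; given (i)--(ii), everything else is immediate from the cited results, and one should also double-check that $\mathsf{Pump}$ introduces no spurious $\redto$-divergence and that the depth it creates is genuinely unbounded (not merely unbounded breadth), which is exactly the content of \cref{ex:unbounded}.
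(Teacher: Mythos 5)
Your argument is sound, but it is a genuinely different proof from the one in the paper. The paper gives a short \emph{Turing} reduction in the other direction: assuming an oracle for hierarchicalness, it decides termination of an arbitrary \piterm{} $P$ by first asking the oracle --- if $P$ is hierarchical it is depth-bounded by \cref{th:hier-db}, so termination is decidable by \cref{th:db-term-decidable}; if $P$ is not hierarchical it is non-terminating by the contrapositive of Lemma~\ref{lemma:term-hier} --- and then invokes Turing-completeness. This needs no construction at all. You instead build an explicit \emph{many-one} reduction from the halting problem, composing a faithful Minsky-machine encoding with a tick-gated version of the ring of \cref{ex:ring}, and you use Lemma~\ref{lemma:term-hier} for the halting direction and \cref{th:hier-db} for the non-halting one, never touching \cref{th:db-term-decidable}. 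The trade-off: your route demands real care with the encoding (divergence-freedom on halting inputs, so that Lemma~\ref{lemma:term-hier} applies, and the existence of runs of unbounded depth otherwise, as in \cref{ex:unbounded}), whereas the paper's route is essentially free; on the other hand, your reduction is effective and many-one, so it yields slightly more information (e.g.\ $\Sigma^0_1$-hardness of hierarchicalness under many-one reductions) and it does not depend on the decidability of termination for depth-bounded terms, which the paper's argument crucially imports from~\cite{Meyer:08}. Both proofs rest on Lemma~\ref{lemma:term-hier}; yours additionally rests on \cref{th:hier-db}, the paper's additionally on \cref{th:db-term-decidable}.
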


\begin{proof}
  The \picalc{} is Turing-complete, so termination is undecidable.
  Suppose we had an algorithm to decide if a term is hierarchical.
  Then we could decide termination of an arbitrary \piterm{} by
  first checking if the term is hierarchical;
  if the answer is yes,
    we can decide termination for it by~\cref{th:db-term-decidable},
  otherwise
    we know that it is not terminating by \cref{lemma:term-hier}.
\end{proof}

\Cref{th:hier-undec}---and the corresponding version for depth-bounded terms---is a serious impediment to any practical application of hierarchical terms to verifcation:
when presented with a term to verify,
one has to prove that it belongs to one of the two fragments, manually,
before one can apply the relevant algorithms.

While the two fragments have a lot in common,
hierarchical systems have a {richer} structure,
which we will exploit to define a type system that
can prove a term hierarchical,
in a feasible, sound but incomplete way.
Thanks to the notion of hierarchy, we are thus able to statically capture an expressive fragment of the $\pi$-calculus that enjoys decidable coverability.%

\section{A type system for hierarchical topologies}
\label{sec:typesys}

The purpose of this section is to devise a static check to determine
if a term is hierarchical.
To do so, we define a type system, parametrised over a forest~\Types,
which satisfies subject reduction.
Furthermore we prove that if a term is typable
then \tshaped[ness] is preserved by reduction of the term.
Typability together with \tshaped[ness] of the initial term would then
prove the term hierarchical.

As we have seen in the introduction,
the typing rules make use of a new perspective on \picalc{} reactions.
Take the term
\ifshortversion
  $ \new a. (\new b.\out a<b>.S \parallel \new c.\inp a(x).R) $
  where $ \new a . (\new b. [-] \parallel \new c. [-]) $
\else
  \[
    P = \new a. (\new b.\out a<b>.S \parallel \new c.\inp a(x).R)
      = C[\out a<b>.S, \inp a(x).R]
  \]
  where
  $ C[-_1, -_2] = \new a . (\new b. [-_1] \parallel \new c. [-_2]) $
\fi
is the \emph{reaction context}.
Standardly the synchronisation of the two sequential processes over $a$
is preceded by an extrusion of the scope of $b$
to include $\new c.\inp a(x).R$,
followed by the actual reaction:
\begin{align*}
  \new a. \big( \new b. (\out a<b>.S) \parallel \new c.\inp a(x).R \big)
    &\congr
  \new a.\new b.\big( \out a<b>.S \parallel \new c.\inp a(x).R \big)
  \\ &
    \redto
  \new a.\new b. \big(S \parallel \new c.(R\subst{x -> b}) \big)
\end{align*}%
This dynamic reshuffling of scopes is problematic for establishing
invariance of \tcompat[ibility] under reduction:
notice how $\restr c$ is brought into the scope of $\restr b$,
possibly disrupting \tcompat[ibility].
(For example, the preceding reduction would break \tcompat[ibility] of the forest representations if the tree $\Types$ is either $a \parent c \parent b$ or $b \rparent a \parent c$.)
We therefore adopt a different view.
After the message is transmitted, the sender continues in-place as $S$,
while $R$ is split into two parts
  $R_{\text{mig}} \parallel R_{\neg\text{mig}}$,
one that uses the message
(the \emph{migratable} one) and one that does not.
The migratable portion $R_{\text{mig}}$ is \enquote{installed} under $\restr b$
so that it can make use of the acquired name,
while the non-migratable one can simply continue in-place:%
\ifshortversion
  \[
    \new a. \big( \new b.(\out a<b>.S) \parallel \new c . \inp a(x).R \big) \redto
    \new a. \big( \new b.(S \parallel R_{\text{mig}}\subst{x -> b} )
                \parallel \new c.R_{\neg\text{mig}} \big)
  \]
  Crucially, the \emph{reaction context},
    $\new a . (\new b. [-] \parallel \new c. [-])$,
  is unchanged.
\else
  \[
    \underbrace{
    \new a. \big( \new b.(\out a<b>.S) \parallel \new c . \inp a(x).R \big)
    }_{C[\out a<b>.S,\ \inp a(x).R ]}
      \redto
    \underbrace{
    \new a. \big( \new b.(S \parallel R_{\text{mig}}\subst{x -> b} )
                \parallel \new c.R_{\neg\text{mig}} \big)
    }_{C[ S \parallel R_{\text{mig}}\subst{x -> b},\ R_{\neg\text{mig}} ]}
  \]
  Crucially, the reaction context $C$ is unchanged.
\fi
This means that if the starting term is \tcompat,
the context of the \emph{reactum} is \tcompat\ as well.
Naturally, this only makes sense
if $R_{\text{mig}}$ does not use $c$.
Thus our typing rules impose constraints on the use of names of $R$ so that the migration does not result in $R_{\text{mig}}$ escaping the scope of bound names such as $c$.

The formal definition of ``migratable'' is subtle.
Consider the term
\[
  \new f.\inp a(x).\new c\:d\:e.\bigl(
    \out x<c> \parallel \out c<d> \parallel \out a<e>.\out e<f>
  \bigr)
\]
Upon synchronisation with $\new b.\out a<b>$,
surely $\out x<c>$ will need to be put under the scope of $\restr b$
after substituting $b$ for $x$,
hence the first component of the continuation, $\out x<c>$, is migratable.
However this implies that the scope of $\restr c$
will need to be placed under $\restr b$,
which in turn implies that $\out c<d>$
needs to be considered migratable as well.
On the other hand,
  $\new e.\out a<e>.\out e<f>$
must be placed in the scope of $f$, which may not be known by the sender,
so it is not considered migratable.%
The following definition makes these observations precise.

\begin{definition}[Linked to, tied to, migratable]
  Given a normal form $P = \new X. \Parallel_{i \in I} A_i$ we say that
  $A_i$ is \emph{linked to $A_j$ in $P$}, written $i \linkedto{P} j $, if
  $
    \freenames(A_i) \inters
    \freenames(A_j) \inters
    X
      \neq \emptyset
  $.
  We define the \emph{tied-to} relation as
  the transitive closure of $\linkedto{P}$.
  I.e.~$A_i$ is \emph{tied to} $A_j$, written $i \tiedto{P} j$, if
  $\exists \lstc{k}{n} \in I \st i \linkedto{P} k_1 \linkedto{P} k_2 \ldots \linkedto{P} k_n \linkedto{P} j$, for some $n \geq 0$.
  Furthermore, we say that a name $y$ is \emph{tied to $A_i$ in $P$},
  written $y \ntiedto{P} i$,
  if $\exists j \in I \st y \in \freenames(A_j) \, \wedge \, j \tiedto{P} i$.
  Given an input-prefixed normal form $\inp a(y). P$
  where $P = \new X. \Parallel_{i \in I} A_i$,
  we say that \emph{$A_i$ is migratable in $\inp a(y). P$},
  written $\migr{\inp a(y).P}(i)$, if
  $y \ntiedto{P} i$.
\end{definition}

These definitions have an intuitive meaning
with respect to the communication topology of a normal form $P$:
two sequential subterms are linked
if they are connected by an hyperedge in the communication topology of $P$,
and are tied to each other if there exists a path between them.

The following lemma indicates how the tied-to relation
fundamentally constrains the possible shape of the forest of a term.

\begin{lemma}
\label{lemma:tied-tree}
  Let $P = \new X.\Parallel_{i\in I} A_i \in \PiNf$, if $i \tiedto{P} j$
  then if a forest $\phi \in \AST{P}$ has leaves labelled with $A_i$ and $A_j$ respectively, they belong to the same tree in $\phi$ (i.e., have a common ancestor in $\phi$).
\end{lemma}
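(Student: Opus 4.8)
The plan is to show that any forest $\phi\in\AST{P}$ embeds the linking structure of $P$ as ancestry, and then lift this from $\linkedto{P}$ to its transitive closure $\tiedto{P}$. The key observation is that if $x\in X$ is a restricted name occurring free in both $A_i$ and $A_j$, then in any term $Q\congr P$ the (unique active) restriction $\restr x$ must syntactically enclose both the sequential subterm $A_i$ and the sequential subterm $A_j$ — otherwise $Q$ would not be a well-formed term in which $x$ is bound (a free occurrence of $x$ would fall outside the scope of $\restr x$, contradicting the Name Uniqueness Assumption and the fact that $Q\congr P$ has $x$ bound). Translating this into the forest $\forest(Q)$: the node labelled $(x,\base(\type))$ lies on the path from the root to the leaf labelled $A_i$, and also on the path from the root to the leaf labelled $A_j$. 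Hence that node is a common ancestor of the two leaves, so they lie in the same tree of $\phi$.

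First I would set up the base case precisely: fix $\phi\in\AST{P}$, so $\phi=\forest(Q)$ for some $Q\congr P$; the leaves of $\phi$ are in bijection with the active sequential subterms of $Q$, which (up to $\congr$) are exactly the $A_i$, $i\in I$. For $i\linkedto{P}j$ pick the witnessing name $x\in X\cap\freenames(A_i)\cap\freenames(A_j)$. Since $Q\congr P$ and $x$ is an active restricted name, $Q$ has, along the path in its abstract syntax tree from the root down to the leaf $A_i$, exactly one node of the form $\restr(x\tas\type)$, and likewise on the path down to $A_j$; by uniqueness of the binder for $x$ (guaranteed by \ref{nameuniq}, preserved by the normal-form conventions), these are the same node of $\forest(Q)$. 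That node is therefore a common ancestor of the $A_i$-leaf and the $A_j$-leaf, establishing the claim for $\linkedto{P}$.

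Next I would handle the general case by induction on the length $n$ of a chain $i\linkedto{P}k_1\linkedto{P}\dots\linkedto{P}k_n\linkedto{P}j$ witnessing $i\tiedto{P}j$. The case $n=0$ is the base case above (or is trivial when $i=j$). For the inductive step, the induction hypothesis gives a common ancestor of the $A_i$-leaf and the $A_{k_1}$-leaf, and a common ancestor of the $A_{k_1}$-leaf and the $A_j$-leaf; since in a forest the set of ancestors of any fixed leaf is a chain under $\parent*$ (every node has a unique path to its root), two nodes that are both ancestors of the $A_{k_1}$-leaf are $\parent*$-comparable, and the smaller one (nearer the root) is then a common ancestor of all three leaves, in particular of $A_i$ and $A_j$. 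Thus $A_i$ and $A_j$ share an ancestor, hence lie in the same tree of $\phi$.

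The main obstacle is the base case — specifically, making rigorous the step "a free occurrence of a restricted name forces the binder to be an ancestor in $\forest(Q)$". This requires being careful that $\forest$ as defined in Definition~\ref{def:forest-repr} really does record restriction scoping faithfully: one must note that $\forest$ only traverses the \emph{active} structure (restrictions, parallel composition) and stops at sequential subterms, and that along the recursion the restricted names accumulated as ancestors of a given leaf are exactly the active restrictions whose scope contains that sequential subterm in $Q$. Combined with the syntactic fact that every free occurrence of a restriction-bound name in a well-formed term sits within that restriction's scope, this yields the needed ancestry. Everything else is routine forest combinatorics and the stability of leaf-labels under $\congr$ already implicit in the definition of $\AST{P}$.
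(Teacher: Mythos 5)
Your proof is correct and follows essentially the same route as the paper's: establish the claim for $\linkedto{P}$ by observing that a shared restricted name forces its binder node to be a common ancestor of both leaves in any $\forest(Q)$ with $Q\congr P$, then lift to $\tiedto{P}$ by induction on the length of the linked-to chain. Your inductive step (using comparability of the ancestors of the intermediate leaf) is just a more explicit version of the paper's appeal to a ``simple induction''.
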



\begin{example}
\label{ex:tied-to}
  Take the normal form
    $ P = \new a\:b\:c. ( A_1 \parallel A_2 \parallel A_3 \parallel A_4 ) $
    where
      $A_1 = \inp a(x)$,
      $A_2 = \inp b(x)$,
      $A_3 = \inp c(x)$ and
      $A_4 = \out a<b>$.
  We have $1 \linkedto{P} 4$, $2 \linkedto{P} 4$,
  therefore $1 \tiedto{P} 2 \tiedto{P} 4$ and $a \ntiedto{P} 2$.
  In \cref{fig:forests} we show some of the forests in $\AST{P}$.
  Forest~\ref{forest:broom} represents $\forest(P)$.
  The fact that $A_1, A_2$ and $A_4$ are tied is reflected by the fact
  that none of the forests place them in disjoint trees.
  Now suppose we select only the forests in $\AST{P}$
  that respect the hierarchy $a \parent b$: in all the forests in this set,
  the nodes labelled with $A_1, A_2$ and $A_4$ have $a$ as common ancestor
  (as in forests~\ref{forest:broom},
                 \ref{forest:cab},
                 \ref{forest:min-abc} and
                 \ref{forest:min-ab-c}).
  In particular, in these forests $A_2$ is necessarily a descendent of $a$
  even if $a$ is not one of its free names.
\end{example}

In \cref{sec:t-compat} we introduced annotations in a rather abstract way
by means of a generic domain of types $\HTypes$.
In \cref{def:hierarchical} we ask for the existence of an annotation for the semantics of a term.
Specifically, one can decide an arbitrary annotation for each active name.
A type system however will examine the term statically, which means that it needs to know what could be a possible annotation for a variable, i.e., the name bound in an input action.
This information is directly related to the notion of data-flow,
that is the set of names that are bound to a variable during runtime.
Since a static method cannot capture this information precisely,
we make use of \emph{sorts} \cite{Milner:93},
also known as \emph{simple types},
to approximate it.
The annotation of a restriction will carry
not only which base type should be associated with its instances,
but also instructions on how to annotate the messages
received or sent through those instances.
Concretely, we define
\begin{grammar}
    \HTypes \ni \type \is t \mid t[\type]
\end{grammar}
where $t \in \Types$ is a base type.

A name with type $t$ cannot be used as a channel but can be used as a message;
a name with type $t[\type]$ can be used to transmit a name of type $\type$.
We will write $\base(\type)$ for $t$ when $\type = t[\type']$ or $\type = t$.
By abuse of notation we write, for a set of types $X$,
$\base(X)$ for the set of base types of the types in $X$.

As is standard, we keep track of the types of free names
by means of a typing environment.
An environment $\Env$ is a partial map from names to types,
which we will write as a set of \emph{type assignments}, $x \tas \type$.
Given a set of names $X$ and an environment $\Env$,
we write $\Env(X)$ for the set $\set{\Env(x) | x \in X \inters \domain(\Env)}$.
Given two environments $\Env$ and $\Env'$ with
  $\domain(\Env)\inters\domain(\Env') = \emptyset$,
we write $\Env \Env'$ for their union.
For a type environment $\Env$ we define
\[
\minrestr(\Env) \is \set{(x \tas \type) \in \Env |
    \forall (y \tas \type') \in \Env \st \base(\type') \not\tlt \base(\type)}.
\]

A judgement $\Env\types P$ means that $P \in \PiNfAnnot$ can be typed under assumptions $\Env$, over the hierarchy \Types;
we say that $P$ is \emph{typable} if $\Env\types P$ is provable for some $\Env$ and $\Types$.
An arbitrary term $P \in \PiAnnot$ is said to be \emph{typable} if its normal form is.
The typing rules are presented in \cref{fig:typesys}.

\begin{figure}[t]
  \centering
  \input{definitions/typesys}
  \caption[Typing rules]{
    A type system for hierarchical terms.
    The term $P$ stands for $\new X.\protect\Parallel_{i \in I} A_i$.%
  }
  \label{fig:typesys}
\end{figure}

The type system presents several non-standard features.
First, it is defined on normal forms as opposed to general \piterm{s}.
This choice is motivated by the fact that
different syntactic presentations of the same term
may be misleading when trying to analyse the relation
between the structure of the term and~$\Types$.
The rules need to guarantee that a reduction will not break \tcompat[ibility],
which is a property of the congruence class of the term.
As justified by \cref{lemma:tied-tree},
the scope of names in a congruence class may vary,
but the tied-to relation puts constraints on the structure
that must be obeyed by all members of the class.
Therefore the type system is designed around this basic concept,
rather than the specific scoping of any representative
of the structural congruence class.
Second, no type information is associated with the typed term,
only restricted names hold type annotations.
Third, while the rules are compositional,
the constraints on base types have a global flavour
due to the fact that they involve the structure of $\Types$
which is a global parameter of typing proofs.

Let us illustrate intuitively how the constraints enforced by the rules
guarantee preservation of \tcompat[ibility].
Consider the term
\[
  P = \new e\:a. \Bigl(
    \new b.\bigl( \out a<b>.A_0 \bigr)
      \parallel
    \new d.\bigl( \inp a(x).Q   \bigr)
  \Bigr)
\]
with
$Q = \new c. (
        A_1 \parallel A_2 \parallel A_3
      )$,
$A_0 = \inp b(y)$,
$A_1 = \out x<c>$,
$A_2 = \inp c(z).\out a<e>$ and
$A_3 = \out a<d>$.
Let $\Types$ be the forest with
  $t_e \parent t_a \parent t_b \parent t_c$ and $t_a \parent t_d$,
where $t_x$ is the base type of the (omitted) annotation
of the restriction $\restr x$, for $x \in \set{a,b,c,d,e}$.
The reader can check that $\forest(P)$ is \tcompat.

In the traditional understanding of mobility,
we would interpret the communication of $b$ over $a$
as an application of scope extrusion
to include $\new d.\bigl( \inp a(x).Q \bigr)$ in the scope of $b$
and then syncronisation over $a$ with the application of the substitution
$\subst{x->b}$ to $Q$;
note that the substitution is only valid because the scope of $b$
has been extended to include the receiver.

Our key observation is that we can instead interpret this communication
as a migration of the subcomponents of $Q$ that do get their scopes
changed by the reduction, from the scope of the receiver
to the scope of the sender.
For this operation to be sound,
the subcomponents of $Q$ migrating to the sender's scope
cannot use the names that are in the scope of the receiver but not of the sender.

In our specific example,
after the synchronisation between the prefixes $\out a<b>$ and $\inp a(x)$,
$b$ is substituted to $x$ in $A_1$ resulting in the term $A_1' = \out b<c>$
and $A_0, A_1', A_2$ and $A_3$ become active.
The scope of $A_0$ can remain unchanged
as it cannot know more names than before as a result of the communication.
By contrast, $A_1$ now knows $b$ as a result of the substitution $\subst{x->b}$:
$A_1$ needs to migrate under the scope of $b$.
Since $A_1$ uses $c$ as well, the scope of $c$ needs to be moved under $b$;
however $A_2$ uses $c$ so it needs to migrate under $b$ with the scope of $c$.
$A_3$ instead does not use neither $b$ nor $c$ so it can avoid migration
and its scope remains unaltered.

This information can be formalised using the tied-to relation:
on one hand, $A_1$ and $A_2$ need to be moved together
because $1 \tiedto{Q} 2$ and
they need to be moved because $x \ntiedto{Q} 1, 2$.
On the other hand, $A_3$ is not tied to neither $A_1$ nor $A_2$ in $Q$ and
does not know $x$, thus it is not migratable.
After reduction, our view of the reactum is the term
\[
  \new a. \Bigl(
    \new b.\bigl(
      A_0 \parallel
      \new c. ( A_1' \parallel A_2 )
    \bigr)
  \parallel \new d.A_3
  \Bigr)
\]
the forest of which is \tcompat.
Rule~\ref{rule:conf}, applied to $A_1$ and $A_2$,
ensures that $c$ has a base type that can be nested under the one of $b$.
Rule~\ref{rule:in} does not impose constraints on the base types of $A_3$
because $A_3$ is not migratable.
It does however check that the base type of $e$ is an ancestor of the one of $a$,
thus ensuring that both receiver and sender are already in the scope of $e$.
The base type of $a$ does not need to be further constrained since
the fact that the synchronisation happened on it implies that
both the receiver and the sender were already under its scope;
this implies, by \tcompat[ibility] of $P$, that $c$ can be nested under $a$.

We now describe the purpose of the rules of the type system in more detail.
Most of the rules just drive the derivation through the structure of the term.
The crucial constraints are checked by~\ref{rule:conf}, \ref{rule:in} and \ref{rule:out}.

\paragraph{The \ref{rule:out}~rule.}
The main purpose of rule \ref{rule:out} is enforcing types
to be consistent with the dataflow of the process:
the type of the argument of a channel $a$ must agree with the types
of all the names that may be sent over $a$.
This is a very coarse sound over-approximation of the dataflow;
if necessary it could be refined using well-known techniques from the literature
but a simple approach is sufficient here to type interesting processes.

\paragraph{The \ref{rule:conf}~rule.}
Rule~\ref{rule:conf} is best understood imagining the normal form to be typed,
$P$, as the continuation of a prefix $\pi.P$.
In this context a reduction exposes each of the active sequential subterms of $P$ which need to have a place in a \tcompat\ forest for the reactum.
The constraint in \ref{rule:conf} can be read as follows.
A \enquote{new} leaf $A_i$ may refer to names
already present in the forests of the reaction context;
these names are the ones mentioned in both $\freenames(A_i)$ and $\Env$.
Then we must be able to insert $A_i$ so that we can find these names in its path. However, $A_i$ must belong to a tree containing all the names in $X$ that are tied to it in $P$. So by requiring every name tied to $A_i$ to have a base type greater than any name in the context that $A_i$ may refer to, we make sure that we can insert the continuation in the forest of the context without violating \tcompat[ibility].
Note that $\Env(\freenames(A_i))$ contains only types that annotate names both in $\Env$ and $\freenames(A_i)$, that is, names which are not restricted by $X$ and are referenced by $A_i$ (and therefore come from the context).

\paragraph{The \ref{rule:in}~rule.}
Rule~\ref{rule:in} serves two purposes:
on the one hand it requires the type of the messages
that can be sent through $a$ to be consistent with
the use of the variable $x$ which will be bound to the messages;
on the other hand, it constrains the base types of $a$ and $x$ so that
synchronisation can be performed without breaking \tcompat[ibility].

The second purpose is achieved by distinguishing two cases,
represented by the two disjuncts of the condition on base types of the rule.
In the first case, the base type of the message
is an ancestor of the base type of $a$ in $\Types$.
This implies that in any \tcompat\ forest representing $a(x).P$,
the name $b$ sent as message over $a$ is already in the scope of $P$.
Under this circumstance, there is no real mobility,
$P$ does not know new names by the effect of the substitution $\subst{x->b}$,
and the \tcompat[ibility] constraints to be satisfied are in essence unaltered.

The second case is more complicated as it involves genuine mobility.
This case also requires a slightly non-standard feature:
not only do the premises predicate
on the direct subcomponents of an input prefixed term,
but also on the direct subcomponents of the continuation.
This is needed to be able to separate the continuation in two parts:
the one requiring migration and the one that does not.
The situation during execution is depicted in \cref{fig:explain-in-mig}.
The non migratable sequential terms behave exactly
as the case of the first disjunct: their scope is unaltered.
The migratable ones instead are intended to be inserted
as descendents of the node representing the message $b$
in the forest of the reaction context.

\begin{figure}[htb]
  \centering%
  \inputfig{explain-in-mig}
  \caption[Explanation of constraints imposed by rule~\ref{rule:in}]{%
    Explanation of constraints imposed by rule~\ref{rule:in}.
    The dashed lines represent references to names restricted
    in the reduction context.
  }
  \label{fig:explain-in-mig}
\end{figure}

For this to be valid without rearrangement of the forest of the context,
we need
  all the names in the context that are referenced in the migratable terms,
to be also in the scope at $b$;
we make sure this is the case by requiring
the free names of any migratable $A_i$ that are from the context
(i.e.~in $\Env$) to have base types smaller than the base type of~$a$.
The set
  $\base(\Env(\freenames(A_i)\setminus\set{a}))$
indeed represents the base types of the names
in the reaction context referenced in a migratable continuation $A_i$.
In fact $a$ is a name that needs to be in the scope
of both the sender and the receiver at the same time,
so it needs to be a common ancestor of sender and receiver
in any \tcompat\ forest.
Any name in the reaction context and in the continuation of the receiver,
with a base type smaller than the one of $a$,
will be an ancestor of $a$%
---and hence of the sender, the receiver
   and the node representing the message---%
in any \tcompat\ forest.
Clearly, remembering $a$ is not harmful
as it must be already in the scope of receiver and sender,
so we exclude it from the constraint.

\begin{example}
\label{ex:servers-typing}
  Take the normal form in \cref{ex:server-client}.
  Let us fix $\Types$ to be the forest
    $\ty{s} \parent \ty{c} \parent \ty{m} \parent \ty{d}$
  and annotate the normal form with the following types:
    $ s \tas \type_s = \ty{s}[\type_m] $,
    $ c \tas \type_c = \ty{c}[\type_m] $,
    $ m \tas \type_m = \ty{m}[\ty{d}] $ and
    $ d \tas \ty{d} $.
  We want to prove $\emptyset \types \new s\:c.P$.
  We can apply rule~\ref{rule:conf}:
  in this case there are no conditions on types because,
  being the environment empty,
  we have $\base(\emptyset(\freenames(A))) = \emptyset$
  for every active sequential term $A$ of $P$.
  Let $\Env = \set{(s \tas \type_s), (c \tas \type_c)}$.
  The rule requires
    $\Env\types \bang{S}$, $\Env\types \bang{C}$ and $\Env\types \bang{M}$,
  which can be proved by proving typability of $S$, $C$ and $M$ under $\Env$
  by rule~\ref{rule:bang}.

  To prove $\Env\types S$ we apply rule~\ref{rule:in};
  we have $s \tas \ty{s}[\type_m] \in \Env$
  and we need to prove that $\Env, x\tas\type_m \types \new d.\out x<d>$.
  No constraints on base types are generated at this step
  since the migratable sequential term $\new d.\out x<d>$
  does not contain free variables typed by $\Env$ making
  $\Env(\freenames(\new d.\out x<d>) \setminus \set{a}) = \Env(\set{x})$ empty.
  Next, $\Env, x\tas\type_m \types \new d.\out x<d>$ can be proved by
  applying rule~\ref{rule:conf} which amounts to
  checking $\Env, x\tas\type_m, d\tas\ty{d} \types \out x<d>.\zero$
  (by a simple application of \ref{rule:out}
  and the axiom $\Env, x\tas\type_m, d\tas\ty{d} \types \zero$)
  and verifying the condition---true in $\Types$---%
  $\base(\type_m) \tlt \base(\type_d)$:
  in fact $d$ is tied to $\out x<d>$ and,
  for $\Env' = \Env \union \set{x \tas \type_m}$,
  $
  \base(\Env'(\freenames(\out x<d>)))
    = \base(\Env'(\set{x,d}))
    = \base(\set{\type_m}).
  $
  The proof for $\Env \types M$ is similar and requires
  $\ty{c} \tlt \ty{m}$ which is true in $\Types$.

  Finally, we can prove $\Env \types C$ using rule~\ref{rule:in};
  both the two continuations $A_1 = \out s<m>$ and $A_2 = \inp m(y).\out c<m>$
  are migratable in $C$ and since $\base(\type_m) \tlt \base(\type_c)$ is false
  we need the other disjunct of the condition to be true.
  This amounts to checking that
  $
    \base(\Env(\freenames(A_1) \setminus \set{c}))
      = \base(\Env(\set{s,m})) = \base(\set{\type_s}) = \ty{s} \tlt \ty{c}
  $
  (note $m \not\in \domain(\Env)$)
  and
    $\base(\Env(\freenames(A_a) \setminus \set{c}))
      = \base(\Env(\set{})) \tlt \ty{c}$
  (that holds trivially).

  To complete the typing we need to show
  $\Env, m \tas \type_m \types A_1$ and
  $\Env, m \tas \type_m \types A_2$.
  The former can be proved by a simple application of \ref{rule:out}
  which does not impose further constraints on $\Types$.
  The latter is proved by applying \ref{rule:in} which requires
  $\base(\type_c) \tlt \ty{m}$, which holds in $\Types$.

  Note how, at every step, there is only one rule that applies to each subproof.
\end{example}

\begin{example}
\label{ex:ring-typing}
The term \Cref{ex:ring} is not typable under any \Types.
To see why, one can build the proof tree without assumptions on $\Types$
by assuming that each restriction $\restr x$ has base type $\btyof{x}$.
When typing $\out m<s>$ we deduce that
    $\btyof{s} = \btyof{n}$,
  which is in contradiction with the constraint that
    $\btyof{n} \tlt \btyof{s}$
  required by rule \ref{rule:conf} when typing
  $\new s.(S \parallel \out m<s> \parallel \outz{s})$.
\end{example}

\section{Soundness of the type system}
\label{sec:soundness}

We now establish the soundness of the type system.
\Cref{th:subj-red} will show how typability is preserved by reduction.
\Cref{th:typed-tshaped} establishes the main property of the type system:
if a term is typable then \tshaped[ness] is invariant under reduction.
This allows us to conclude that if a term is \tshaped{} and typable,
then every term reachable from it will be \tshaped{}.

The subtitution lemma states that substituting names
without altering the types preserves typability.
\begin{lemma}[Substitution]
\label{lemma:subst}
  Let $P \in \PiNfAnnot$ and
  $\Env$ be a typing environment such that $\Env(a) = \Env(b)$.
  Then it holds that if\/ $\Env \types P$ then $\Env \types P\subst{a->b}$.
\end{lemma}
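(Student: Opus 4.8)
The statement is the Substitution Lemma: if $\Env \types P$ with $\Env(a) = \Env(b)$, then $\Env \types P\subst{a->b}$. The plan is to prove this by induction on the derivation of $\Env \types P$, with a case analysis on the last rule applied. The key insight that makes the induction go through is that all the side-conditions of the typing rules are stated in terms of \emph{base types} of names (via $\base(\Env(\freenames(\cdot)))$) and the \emph{tied-to}/\emph{migratable} relations, and substituting $a$ by $b$ when $\Env(a)=\Env(b)$ changes neither the multiset of types assigned to the free names of any subterm nor — crucially — the link/tied-to structure, since a name is a shared free name of two sequential terms iff its image under the substitution is.

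First I would record two preliminary observations. (1) For any sequential term $A$, $\Env(\freenames(A\subst{a->b})) = \Env(\freenames(A))$ whenever $\Env(a)=\Env(b)$: substitution can only identify $a$ with $b$ or leave names untouched, and both have the same type in $\Env$, so the \emph{set} of types is unchanged (and hence so is its $\base$-image). (2) The relations $\linkedto{P}$, $\tiedto{P}$, $\ntiedto{P}$ and $\migr{\inp a(y).P}$ are preserved under the substitution, because for $i,j\in I$ and a restricted name $y\in X$ we have $y \in \freenames(A_i\subst{a->b})\inters\freenames(A_j\subst{a->b})$ iff $y\in\freenames(A_i)\inters\freenames(A_j)$ — here we use the Name Uniqueness Assumption to ensure $y$ is not one of $a,b$ (a bound name of $P$ is distinct from any free name), so $y$'s membership in $\freenames(A_i)$ is untouched by applying $\subst{a->b}$. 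These two facts mean every side-condition appearing in \ref{rule:conf}, \ref{rule:in} is literally unchanged after substitution.

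Then the induction is routine. For \ref{rule:conf} on $P = \new X.\Parallel_{i\in I} A_i$: by the induction hypothesis each premise $\Env, X \types A_i$ yields $\Env, X \types A_i\subst{a->b}$ (note $\Env(a)=\Env(b)$ still holds in the extended environment $\Env, X$ since $X$ is disjoint from $\domain(\Env)$ and from $\{a,b\}$ by Name Uniqueness), and the base-type side-conditions transfer verbatim by observations (1) and (2); reassembling gives $\Env \types (\new X.\Parallel_{i\in I} A_i)\subst{a->b}$. The \ref{rule:sum}, \ref{rule:bang}, \ref{rule:tau} cases are immediate from the induction hypothesis. For \ref{rule:out} on $\out c<d>.Q$: after substitution the channel becomes $c\subst{a->b}$ and the message $d\subst{a->b}$; since $\Env(c\subst{a->b})=\Env(c)$ and $\Env(d\subst{a->b})=\Env(d)$, the premises $c\tas t_c[\type_d]\in\Env$ and $d\tas\type_d\in\Env$ still hold, and the induction hypothesis handles $Q$. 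For \ref{rule:in} on $\inp c(x).\new X.\Parallel_{i\in I}A_i$: we may assume by $\alpha$-renaming that $x\notin\{a,b\}$ so the substitution commutes with the input binder; the channel-type premise transfers as in the \ref{rule:out} case, the induction hypothesis gives the typing of the continuation (extending the environment with $x\tas\type_x$ preserves $\Env(a)=\Env(b)$), and the disjunctive base-type condition — whether $\base(\type_x)\tlt t_c$, or the migratability condition $\base(\Env(\freenames(A_i)\setminus\{c\}))\tlt t_c$ for migratable $A_i$ — is unchanged by observations (1) and (2).

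The main obstacle, and the only point requiring care, is bookkeeping around bound names: one must invoke the Name Uniqueness Assumption (and silent $\alpha$-renaming of the input-bound and restriction-bound names away from $a$ and $b$) to guarantee that (i) substitution commutes with the binders, (ii) the environment-extension in \ref{rule:conf} and \ref{rule:in} still satisfies $\Env(a)=\Env(b)$, and (iii) the tied-to/migratable structure is genuinely invariant rather than merely similar. Once these conventions are fixed, every case is a mechanical transfer of premises, and the lemma follows.
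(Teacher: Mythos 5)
Your proof is correct and takes essentially the same route as the paper's: a structural induction (the system being syntax-directed, induction on the derivation is the same thing) whose heart is the pair of observations that $\Env(a)=\Env(b)$ leaves the set of types of free names unchanged and that the tied-to/migratable relations, depending only on bound names kept distinct from $a,b$ by Name Uniqueness, are invariant under the substitution. The paper's proof makes the same two points inline within the \textsc{Par}, \textsc{Out} and \textsc{In} cases rather than isolating them up front, so no further comment is needed.
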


Before we state the main theorem, we define the notion of
$P$-safe type environment, which is a simple restriction on the types
that can be assigned to names that are free at the top-level of a term.

\begin{definition}[$P$-safe environment]
A type environment $\Env$ is said to be \emph{$P$-safe} if
for each $x \in \freenames(P)$ and $(y\tas\type) \in \resboundnames(P)$,
$\base(\Env(x)) < \base(\type)$.
\end{definition}

\begin{theorem}[Subject Reduction]
\label{th:subj-red}
  Let $P$ and $Q$ be two terms in $\PiNfAnnot$
  and $\Env$ be a $P$-safe type environment.
  If\/ $\Env \types P$ and $P \redto Q$, then $\Env \types Q$.
\end{theorem}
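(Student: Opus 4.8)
The plan is to do an induction on the structure of the derivation of $\Env \types P$, following the definition of $\redto$ in \cref{def:sem-of-picalc}. Since $P \in \PiNfAnnot$, the top-level rule applied is \ref{rule:conf}, which gives us $P \congr \new X.\Parallel_{i\in I} A_i$ with $\Env, X \types A_i$ for each $i$, together with the base-type side condition for names in $X$ not tied to $A_i$. We split on whether the reduction is a communication step (clauses \ref{redex-P}--\ref{redex-Q}) or a $\tact$ step (clauses \ref{redex-tau-P}--\ref{redex-tau-Q}); the $\tact$ case is the easy warm-up, so I would dispatch it first and concentrate on the communication case.

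For the communication case, I would first use the congruence $P \congr \new W.(S \parallel R \parallel C)$ together with invariance of typability under $\congr$ (which itself needs a small lemma: normal forms of structurally congruent terms are typed under the same environments — this is where the design choice of typing normal forms pays off, since $\nf$ is a function) to extract, from the \ref{rule:conf} derivation of $P$, sub-derivations $\Env' \types S$, $\Env' \types R$, $\Env' \types C$ where $\Env' = \Env, W$. Inverting \ref{rule:out} on $S = (\out a<b>.\new Y_s.S') + M_s$ gives $a \tas t_a[\type_b], b\tas\type_b \in \Env'$ and $\Env' \types \new Y_s.S'$. Inverting \ref{rule:in} on $R = (\inp a(x).\new Y_r.R') + M_r$ gives $a \tas t_a[\type_x] \in \Env'$ (so $\type_x = \type_b$ by determinism of the channel's type), $\Env', x\tas\type_x \types R'$ where $R' = \new X_r.\Parallel_{j\in J} B_j$, and the crucial base-type disjunct. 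Now, by the substitution lemma (\cref{lemma:subst}) — valid precisely because $\Env'(x) = \type_x = \type_b = \Env'(b)$ after extending with $x\tas\type_x$ — we obtain $\Env' \types R'\subst{x->b}$. The goal is then to reassemble a \ref{rule:conf} derivation for $Q \congr \new W Y_s Y_r. (S' \parallel R'\subst{x->b} \parallel C)$ (using the annotated clause (iv'), which frees us to re-annotate the activated restrictions $Y_s, Y_r$ if needed).

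The main obstacle — and the heart of the proof — is verifying the base-type side condition of \ref{rule:conf} for the reassembled term $Q$: for every activated/ambient name $y \tas \type_y$ in $W Y_s Y_r$ and every active sequential component $A$ of $S' \parallel R'\subst{x->b} \parallel C$ with $y \ntiedto{Q} $ (the index of) $A$, we need $\base(\Env'(\freenames(A))) \tlt \base(\type_y)$. For components coming from $C$ and from $S'$ the scopes are essentially unchanged and the condition transfers from the derivation of $P$ (using that $P$-safety controls the free names of the whole term relative to $W$). The delicate components are the $B_j\subst{x->b}$ from $R'$: these are exactly the ones whose connectivity may change because of the substitution. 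Here the two disjuncts of the \ref{rule:in} condition do the work. If $\base(\type_b) \tlt t_a$, then $b$ was already forced to sit above $a$ in any \tcompat\ forest, so no genuine migration occurs and the old bounds still apply. Otherwise, for each migratable $B_j$ — i.e.\ $x \ntiedto{R'} j$ — we have the guarantee $\base(\Env'(\freenames(B_j)\setminus\set{a})) \tlt t_a = \base(\type_b)$, which says the only context names $B_j\subst{x->b}$ refers to are ancestors of $b$; combined with \cref{lemma:tied-tree} (migratable components tied together stay in one tree under $b$) and the \ref{rule:conf} side condition inherited from $R'$ for the names in $X_r$, this yields the required inequalities. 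The non-migratable $B_j$ do not see $b$, so their bounds are literally unchanged. I expect the bookkeeping around which re-annotation of $Y_s, Y_r$ witnesses the reassembled derivation, and the careful case analysis on the origin (C / S' / migratable R' / non-migratable R') of each component together with the origin (W / Y_s / Y_r) of each name, to be the most laborious part; the conceptual content is entirely carried by the two lemmas and the two disjuncts of \ref{rule:in}.
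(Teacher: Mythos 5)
Your skeleton is the paper's: invert the top-level \ref{rule:conf} derivation modulo $\congr$, invert \ref{rule:out} and \ref{rule:in} to obtain $\type_x=\type_b$, apply \cref{lemma:subst} and weakening, and reassemble a \ref{rule:conf} derivation for $Q \congr \new W Y_s Y_r.(S' \parallel R'\subst{x->b} \parallel C)$. The problem is the step you call ``the heart of the proof'': you have misstated the remaining obligation, and in your form it is both unnecessary and, in general, unprovable. The side condition of \ref{rule:conf} for $\Env \types \new W Y_s Y_r.(\cdots)$ reads $\base(\Env(\freenames(A))) \tlt \base(\type_y)$ with the \emph{outer} environment $\Env$, not with $\Env' = \Env, W$ as you wrote. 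Every name in $\freenames(A) \inters \domain(\Env)$ is a free name of $P$, and every $(y\tas\type_y)$ in $W Y_s Y_r$ belongs to $\resboundnames(P)$, so the condition is immediate from $P$-safety of $\Env$ --- this is exactly why the theorem carries the $P$-safety hypothesis, and the paper discharges this premise in one line. With your environment $\Env,W$ the condition is strictly stronger and does not follow from $\Env \types P$: for instance, if a sender continuation $S'_i$ has $a$ free and is tied in $Q$ (through the shared name $b$) to a receiver continuation mentioning some $(y\tas\type_y) \in Y_r$, nothing in the derivation of $\Env\types P$ forces $t_a \tlt \base(\type_y)$, so your obligation can fail even though $\Env \types Q$ holds. (There is also a slip in that argument: you write $t_a = \base(\type_b)$; the rules only give $\type_x = \type_b$, and the first disjunct of \ref{rule:in} asserts $\base(\type_x) \tlt t_a$, not equality.)

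The migration analysis you sketch --- the two disjuncts of \ref{rule:in}, migratable versus non-migratable components, \cref{lemma:tied-tree} --- is genuine content of the paper, but it belongs to the proof of invariance of \tshaped[ness] (\cref{th:typed-tshaped}), where it is carried out by surgery on a canonical witness forest ($\Phi_\Types$ together with an insertion operation on trees), not by re-verifying \ref{rule:conf} side conditions. For subject reduction itself, once the side condition is read with the correct environment, the only non-trivial ingredients are \cref{lemma:subst} (enabled by $\type_x = \type_b$), weakening under \ref{nameuniq}, and invariance of typability under $\congr$ --- exactly the content of your first two paragraphs. So: keep those, replace the third by the one-line appeal to $P$-safety, and move the migration reasoning to where it is actually needed.
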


The proof is by careful analysis of how the typing proof for $P$
can be adapted to derive a proof for $Q$.
The only difficulty comes from the fact that
some of the subterms of $P$ will appear in $Q$ with a substitution applied.
However, typability of $P$ ensures
that we are only substituting names for names with the same type,
thus allowing us to apply \cref{lemma:subst}.

To establish that \tshaped[ness] is invariant under reduction for typable terms,
we will need to show that starting from a typable \tshaped\ term $P$,
any step will reduce it to a (typable) \tshaped\ term.
The hypothesis of \tcompat[ibility] of $P$ can be used
to extract a \tcompat\ forest $\phi$ from $\AST{P}$.
While many forests in $\AST{P}$ can be witnesses
of the \tcompat[ibility] of $P$,
we want to characterise the shape of a witness that \emph{must} exist
if $P$ is \tcompat.
The proof of invariance relies on selecting a $\phi$ that does not impose unnecessary hierarchical dependencies among names.
Such forest is identified by $\Phi_\Types(\nf(P))$:
it is the shallowest among all the \tcompat\ forests in $\AST{P}$.

\begin{definition}[$\Phi_\Types$]
\label{def:phi}
The function $\Phi_\Types \from \PiNfAnnot \to \Forests_\Types$
is defined inductively as
\begin{align*}
  \Phi_\Types(\Parallel_{i \in I} A_i) & \is
      \Dunion_{i\in I} \set{A_i[]} \\
  \Phi_\Types(P) & \is
        \left(
          \Dunion
            \Set{(x,\base(\type))[ \Phi_\Types(\new Y_x.\Parallel_{j \in I_x} A_j) ]
                | (x\tas\type) \in \minrestr(X) }
        \right) \\
      & \quad\dunion
      \Phi_\Types(\new Z.\Parallel_{r\in R} A_r)
\end{align*}
where
  $X \neq \emptyset$,
  $P = \new X.\Parallel_{i \in I} A_i$
  $I_x = \set{i\in I | x \ntiedto{P} i}$
  and
\begin{align*}
  Y_x &= \set{(y \tas \type) \in X | \exists i \in I_x \st y \in \freenames(A_i)} \setminus \minrestr(X) \\
  Z   &= X \setminus \bigr(\textstyle \Union_{(x\tas\type) \in \minrestr(X)} Y_x \union \set{x \tas \type}\bigl) \\
  R   &= I \setminus \bigr(\textstyle \Union_{(x\tas\type) \in \minrestr(X)} I_x\bigl)
\end{align*}
\end{definition}

Forest~\ref{forest:min-ab-c} of \cref{fig:forests} is $\Phi_\Types(P)$
when every restriction $\restr x$ has base type $x$
(for $x \in \set{a,b,c}$)
and $\Types$ is the forest with nodes $a$, $b$ and $c$
and a single edge $a \parent b$.

\begin{lemma}
\label{lemma:phi-tcompat}
  Let $P \in \PiNfAnnot$.
  Then:
  \begin{enumerate}[label=\alph*)]
    \item $\Phi_\Types(P)$ is a \tcompat\ forest;
          \label{lemma:phi-tcompat:tcompat}
    \item $\Phi_\Types(P) \in \AST{P}$ if and only if $P$ is \tcompat;
          \label{lemma:phi-tcompat:ast}
    \item if $P \congr Q \in \PiAnnot$ then
          $\Phi_\Types(P) \in \AST{Q}$ if and only if\/ $Q$ is \tcompat.
          \label{lemma:phi-tcompat:congr}
  \end{enumerate}
\end{lemma}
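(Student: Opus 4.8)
The plan is to prove the three parts by induction on the number $\card{X}$ of top-level restrictions of $P = \new X.\Parallel_{i\in I} A_i$, once one notes how little logical slack the lemma has: part~(a) concerns the function $\Phi_\Types$ alone; the \emph{only if} direction of~(b) is then immediate, since $\Phi_\Types(P) \in \AST{P}$ together with~(a) exhibits a \tcompat\ forest in $\AST{P}$, whence $P$ is \tcompat\ by definition; and~(c) reduces to~(b), because $P \congr Q$ gives $\AST{P} = \AST{Q}$, while \tcompat[ibility] is congruence-invariant, so $P$ is \tcompat\ iff $Q$ is. Hence the work lies in~(a) and in the \emph{if} direction of~(b).

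For~(a) I would show, by induction on $\card{X}$, that every trace $((x_1,t_1)\dots(x_k,t_k)\,A)$ of $\Phi_\Types(P)$ satisfies $t_1 \tlt \dots \tlt t_k$. The base case $X = \emptyset$ is trivial, since traces then consist of a single leaf label and no base types. For the step, the roots created at this level carry the $\Types$-minimal restrictions $\minrestr(X)$; below a root labelled $(x,\base(\type))$ one finds, recursively, $\Phi_\Types(\new Y_x.\Parallel_{j\in I_x} A_j)$, whose traces are chains by the induction hypothesis ($\card{Y_x} < \card{X}$ as $\minrestr(X)\neq\emptyset$ whenever $X\neq\emptyset$), while $\Phi_\Types(\new Z.\Parallel_{r\in R} A_r)$ is handled by the induction hypothesis directly and contributes further roots by disjoint union. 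The crucial ingredient is that $\Types$ is a \emph{forest}: the strict $\tlt$-ancestors of any node form a chain, and this, together with minimality of $\base(\type)$ in $\base(X)$, forces the restrictions that land in the subtree of $x$ to have base type strictly $\tlt$-above $\base(\type)$, so prepending $(x,\base(\type))$ to a chain again yields a chain.

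For the \emph{if} direction of~(b) I would use the (routine) characterisation of $\AST{\cdot}$: a labelled forest lies in $\AST{P}$ precisely when its leaves carry the active sequential subterms of $P$ as a multiset, its internal nodes carry the active restrictions $X$ as a labelled set, and each restriction node is an ancestor of every leaf mentioning the restricted name (so that the term read off the forest reproduces $P$ up to scope extrusion and exchange). Assuming $P$ is \tcompat, pick a witnessing \tcompat\ forest $\phi \in \AST{P}$. Since base types increase downwards along traces of $\phi$ and those of $\minrestr(X)$ are minimal in $\base(X)$, each minimal restriction is forced to sit at the root of its tree in $\phi$; invoking \cref{lemma:tied-tree}, this makes the index sets $I_x$ --- and hence the name sets $Y_x$ --- pairwise disjoint for distinct $x \in \minrestr(X)$, so the $I_x$ together with $R$ partition $I$, and the $Y_x$ together with $\minrestr(X)$ and $Z$ partition $X$: exactly the bookkeeping needed for $\Phi_\Types(P)$ to carry the right leaves and internal nodes. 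Restricting $\phi$ to the nodes relevant to each recursive call (and contracting the rest) shows $\new Y_x.\Parallel_{j\in I_x} A_j$ and $\new Z.\Parallel_{r\in R} A_r$ are themselves \tcompat, so the induction hypothesis yields $\Phi_\Types(\new Y_x.\Parallel_{j\in I_x} A_j) \in \AST{\new Y_x.\Parallel_{j\in I_x} A_j}$; reassembling and using the tied-to analysis (any $A_i$ mentioning a name in $Y_x$ must have $i \in I_x$, by the disjointness just established) gives the ancestor condition for every restriction node of $\Phi_\Types(P)$, hence $\Phi_\Types(P) \in \AST{P}$.

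The main obstacle I expect is exactly this well-definedness step inside~(b): proving that the recursive decomposition built into $\Phi_\Types$ neither duplicates leaves nor restrictions, and that its components stay \tcompat --- which is precisely where \tcompat[ibility] of $P$ is genuinely used, via \cref{lemma:tied-tree} and the observation that minimal restrictions must be roots in any \tcompat\ representation. Part~(a), by contrast, needs no hypothesis on $P$ and rests only on $\Types$ being a forest, and the remaining directions of~(b) and~(c) are then essentially formal.
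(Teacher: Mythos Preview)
Your proposal is correct and follows essentially the same route as the paper: induction on $\card{X}$ for (a) and for the hard direction of (b), with the disjointness of the $I_x$ (and hence of the $Y_x$) established via \cref{lemma:tied-tree} together with minimality in $\minrestr(X)$, and (c) reduced to (b) via $\AST{P}=\AST{Q}$. The paper dispatches (a) in one line and, for the \emph{if} direction of (b), packages your ``restrict $\phi$ to the relevant nodes'' step as a separate lemma (\cref{lemma:tcompat-takeout}) and the forest-to-congruent-term reconstruction as \cref{lemma:forest-nf}, but the underlying argument is the one you outline.
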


\begin{theorem}[Invariance of {\tshaped[ness]}]
\label{th:typed-tshaped}
  Let $P$ and $Q$ be terms in $\PiNfAnnot$ such that $P \redto Q$ and
  $\Env$ be a $P$-safe environment such that $\Env \types P$.
  Then, if $P$ is \tshaped{} then $Q$ is \tshaped{}.
\end{theorem}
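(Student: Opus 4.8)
The plan is to lift the \emph{scope-extrusion-free} view of reaction from \cref{sec:typesys} to a surgery on forests: the reaction context is kept verbatim, and all structural change is confined to a local, typed region whose \tcompat[ibility] is guaranteed by the premises of \ref{rule:conf}, \ref{rule:out} and \ref{rule:in}.

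First I would apply \cref{th:subj-red} to obtain $\Env \types Q$. Then I would reduce \tshaped[ness] of $Q$ to the single claim that $Q$ itself is \tcompat. On one hand, every subterm of $Q$ lying strictly inside $S'$, inside $R'\subst{x->b}$, or inside $C$ is, up to the substitution $\subst{x->b}$, literally a subterm of $P$; since $\Env \types P$ forces $b$ and $x$ to carry the same type (\ref{rule:out} and \ref{rule:in} give $\type_b = \type_x$), this substitution leaves all base types, hence the labelled-forest traces that \tcompat[ibility] constrains, unchanged (the reasoning behind \cref{lemma:subst}), so these subterms inherit \tcompat[ibility] from $P$. On the other hand, the remaining subterms of $Q$ are obtained from $Q$ by stripping some of its outermost restrictions, and deleting an internal node from a \tcompat\ forest yields a \tcompat\ forest; so they too are \tcompat\ once $Q$ is. Hence it suffices to exhibit a \tcompat\ forest for $Q$.

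For this I would start from the canonical shallow witness $\Phi_\Types(\nf(P))$, which is \tcompat\ by \cref{lemma:phi-tcompat} and, since $P$ and hence $\new W.(S \parallel R \parallel C) \congr P$ is \tcompat, belongs to $\AST{\new W.(S \parallel R \parallel C)}$. I would transform it exactly as in \cref{fig:tcompat-react,fig:explain-in-mig}: keep the subtree of $C$ and every node on the paths above the leaves for $S$ and $R$ untouched; replace the leaf for $S$ \emph{in situ} by the forest of $S'$ together with the nodes of the sender-activated restrictions; and split $R'\subst{x->b}$, via the tied-to relation, into a non-migratable part and a migratable part, leaving the non-migratable components (with the receiver-activated restrictions not tied to them) where the leaf for $R$ was, while re-parenting each migratable component, together with the receiver-activated restrictions tied to it --- which by \cref{lemma:tied-tree} must travel as a block --- directly below the node of the message $b$ (when $b$ is free it just stays in place, which is sound because $\Env$ is $P$-safe). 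The result is a forest in $\AST{Q}$; the $\tact$ step is the same construction with the activated continuation in place of $S'$ and no migration.

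The remaining, and central, task is to verify that every trace of this forest is strictly increasing in its base-type component. For the untouched and in-situ portions this follows from $P$-safety together with the \ref{rule:conf} premises supplied by $\Env \types P$ (for $S'$ and for the non-migratable part of $R'$) and by $\Env \types Q$ (for the activated restrictions, whose base types are constrained precisely by \ref{rule:conf} at the top of $Q$). The delicate traces are those that descend through the node of $b$ into a migrated component, and checking them is exactly the payoff of the base-type disjunct of \ref{rule:in}: either $\base(\type_x) \tlt t_a$, in which case there is no genuine mobility and the \enquote{migratable} components may in fact remain in place since the receiver already lies below $b$; or every migratable $A_i$ satisfies $\base(\Env(\freenames(A_i) \setminus \set{a})) \tlt t_a$, so the context names it refers to are ancestors of $a$ --- hence of the common ancestor of sender and receiver, hence of the node of $b$ --- in the \tcompat\ witness for $P$, so the re-parenting respects \tcompat[ibility]. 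I expect the bulk of the proof to be the bookkeeping of precisely which activated restrictions migrate (and checking the migrated region is genuinely realised as $\Phi_\Types$ of a suitable presentation of $Q$, via \cref{def:phi}), plus the routine \picalc\ case analysis; the one genuinely subtle point is the migration across $b$, where \ref{rule:in}, \cref{lemma:tied-tree} and \tcompat[ibility] of $P$ must be combined.
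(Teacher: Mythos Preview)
Your overall plan matches the paper's: take $\Phi_\Types$ of the redex, split the receiver's substituted continuation $\phi_r$ into $\phimig \dunion \phinonmig$ via the tied-to relation, and graft these together with $\phi_s$ into the context forest along the sender and receiver paths. Your reduction of \tshaped[ness] of $Q$ to \tcompat[ibility] of $Q$ alone is correct and makes explicit something the paper leaves tacit.

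There is, however, a genuine gap in the grafting step. You place $\phi_s$ ``in situ'' at the old $S$-leaf and $\phinonmig$ ``where the leaf for $R$ was''. This can fail. By \cref{lemma:phi-tied} the context names on the path $p_R$ in $\Phi_\Types(P)$ are those \emph{tied to} $R$ in $P$, which need not all be free in a given non-migratable $R'_j$; the \ref{rule:conf} premise for $\new Y_r.R'$ only forces $\base(\type_y)$ above the base types of context names actually free in $R'_j$, not above every name on $p_R$. Hence a root $(y,t_y)$ of $\phinonmig$ attached at the old $R$-position may land below a context node of base type $\geq t_y$. Concretely, let $R=\inp a(x).\new (y\tas\type_y).\out{w_1}<y>$ (so the whole continuation is non-migratable) and arrange, via some $C$ using both $a$ and $w_2$, for $R$ to be tied in $P$ to a $w_2 \notin \freenames(R)$ with $\base(\type_{w_1})\tlt\base(\type_y)\tlt\base(\type_{w_2})\tlt t_a$: this term is typable and \tshaped, yet $w_2$ lies on $p_R$ and hanging $y$ below it breaks \tcompat[ibility]. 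The paper's proof repairs this with an explicit operation $\treeins(\phi,p,\rho)$ that attaches each root of $\rho$ at the \emph{deepest} node of $p$ whose base type is still strictly smaller (so \tcompat[ibility] is automatic), and then uses the \ref{rule:conf} premises to check that this is also deep enough for scoping. It is applied three times---$\phi_s$ along $p_S$, $\phinonmig$ along $p_R$, $\phimig$ along $p_S$---and for $\phimig$ the \ref{rule:in} constraint does force the insertion point down to $n_b$, recovering your description there; but for $\phi_s$ and $\phinonmig$ the correct depth is genuinely determined by $\treeins$, so this is the substantive step rather than bookkeeping.
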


The key of the proof is
\begin{inparaenum}
  \item the use of $\Phi_\Types(P)$ to extract a specific \tcompat\ forest,
  \item the definition of a way to insert the subtrees of
  the continuations of the reacting processes
  in the forest of reaction context,
  in a way that preserves \tcompat[ibility].
\end{inparaenum}
Thanks to the constraints of the typing rules, we will always be able to find a valid place in the reaction context where to attach the trees representing the reactum.

\section{Type inference}
\label{sec:inference}

\newcommand{\C}[1]{\constraints_{#1}}

In this section we will show that it is possible to take
any non-annotated normal form $P$
and derive a forest $\Types$ and an annotated version of $P$
that can be typed under $\Types$.

Inference for simple types has already
been proved decidable in~\cite{Gay:93,Vasconcelos:93}.
In our case, since our types are not recursive,
the algorithm concerned purely with the constraints imposed by the type system
of the form $\type_x = t[\type_y]$ is even simpler.
The main difficulty is inferring the structure of $\Types$.

Let us first be more specific on assigning simple types.
The number of ways a term $P$ can be annotated with types are infinite,
simply from the fact that types
allow an arbitrary nesting as in $t$, $t[t]$, $t[t[t]]$ and so on.
We observe that, however,
there is no use annotating a restriction with a type with nesting deeper than
the size of the program:
the type system cannot inspect more deeply nested types.
Thanks to this observation we can restrict ourselves to annotations with
bounded nesting in the type's structure.
This also gives a bound on the number of base types that need to appear
in the annotated term.
Therefore, there are only finitely many possible annotations and possible forests under which $P$ can be proved typably hierarchical.
A na\"ive inference algorithm can then enumerate all of them
and type check each.

\begin{theorem}[Decidability of inference]
\label{th:inference-decidable}
  Given a normal form $P \in \PiNf$,
  it is decidable if there exists a finite forest $\Types$,
  a \preTypes-annotated version $P'\in \PiAnnot$ of $P$
  and a \pre P'-safe environment $\Env$ such that
  $P'$ is \tshaped\ and $\Env \types P'$.
\end{theorem}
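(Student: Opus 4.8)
The plan is to establish decidability by exhibiting an explicit finite search space: a bound on the shape of candidate annotations, so that a naive algorithm can enumerate all relevant $(\Types, P', \Env)$ triples and, for each, invoke the (decidable) check that $P'$ is \tshaped\ and $\Env \types P'$. The key claim to prove is that if some witness exists, then one exists with bounded size; from there, decidability is immediate since everything below that bound is finitely enumerable and the type-checking and \tcompat[ibility] conditions are decidable for a fixed, given annotation and forest.

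First I would make precise the observation already sketched in the text: the typing rules only ever inspect base types through the $\base(\cdot)$ map and the partial order $\tlt$ of $\Types$, and they only unfold the carried-type structure of a type $\type = t[\type']$ one layer at a time, as we descend through \ref{rule:out} and \ref{rule:in}. Consequently, if a restriction $\restr(x \tas \type)$ is annotated with a type whose nesting depth exceeds the number of nested prefixes in $P$ along any syntactic path, the extra nesting is never consulted; truncating it (replacing the unreachable inner type by, say, a fresh base type $t$) yields another valid annotation. This gives a bound $d = \bigO{\card{P}}$ on the useful nesting depth of any type, and hence a bound on how many distinct base types need to occur: at most one per restriction, times the depth bound, so $\bigO{\card{P}^2}$ base types suffice, with a forest $\Types$ on that many nodes. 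I would phrase this as a lemma: \emph{if $P'$ is \tshaped\ and $\Env \types P'$, then there is a $P''$ with $\erase{P''} = \erase{P'}$, all type annotations of nesting depth $\le d$, and base types drawn from a fixed set of size polynomial in $\card{P}$, together with a forest $\Types'$ on those base types and a $P''$-safe $\Env''$, such that $P''$ is \tshaped\ and $\Env'' \types P''$.} The proof is by surgery on a given witness: keep all \tcompat\ structure, truncate over-deep types, and restrict $\Types$ and $\Env$ correspondingly, checking that none of the premises of \ref{rule:conf}, \ref{rule:in}, \ref{rule:out}, \ref{rule:sum}, \ref{rule:bang}, \ref{rule:tau} are invalidated, and that the \tcompat\ forest $\Phi_\Types(P'')$ of \cref{def:phi} still witnesses \tcompat[ibility] (using \cref{lemma:phi-tcompat}).

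With that bound in hand, the algorithm is: enumerate all forests $\Types$ over the fixed polynomial-size node set; for each, enumerate all annotations $P'$ of $P$ using types of nesting depth $\le d$ over $\Types$ (finitely many); for each, enumerate all candidate $P'$-safe environments $\Env$ over the free names of $P$ (again finitely many, as each free name gets one of finitely many types); then check whether $P'$ is \tshaped\ --- which by \cref{def:t-compat} reduces to checking, for the finitely many subterms, whether $\AST{\cdot}$ contains a \tcompat\ forest, and this is decidable because $\Phi_\Types(\cdot)$ is a computable canonical candidate and membership $\Phi_\Types(\cdot) \in \AST{\cdot}$ is decidable by \cref{lemma:phi-tcompat}\ref{lemma:phi-tcompat:ast} --- and whether $\Env \types P'$, which is a finite derivation search over the syntax-directed rules of \cref{fig:typesys} (each rule is essentially deterministic given the annotation, as noted after \cref{ex:servers-typing}, and each side condition is a decidable statement about the fixed finite forest $\Types$). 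Return "yes" iff some triple passes; correctness follows from the bounding lemma in one direction and soundness of the check in the other.

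\textbf{The main obstacle} I expect is the bounding lemma, specifically verifying that truncating over-deep carried-type structure and shrinking $\Types$ does not break any side condition --- the base-type constraints in \ref{rule:conf} and \ref{rule:in} are ``global'' in that they refer to the order $\tlt$ of the whole forest, so I must argue that the sub-order induced on the retained base types still satisfies every inequality that the original derivation relied on, and that $\Phi_\Types$ computed with the smaller forest still produces a \tcompat\ forest in the structural-congruence class. A secondary subtlety is pinning down the depth bound $d$ precisely: the type system can chase carried types through nested input prefixes, so $d$ must be at least the maximal input-nesting depth of $P$; I would take $d = \card{P}$ as a safe over-estimate and not optimize. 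Once these are settled, the enumeration argument is routine.
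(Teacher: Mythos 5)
Your proposal is correct and follows essentially the same route as the paper: the paper's argument is precisely that type nesting deeper than the size of the program is never inspected by the rules, so one may bound the nesting depth and hence the number of base types, leaving only finitely many candidate forests, annotations and environments to enumerate and check (\tcompat[ibility] being decidable via $\Phi_\Types$ and \cref{lemma:phi-tcompat}, typability via the syntax-directed rules). You merely spell out as an explicit truncation lemma what the paper states as a one-line observation, which is a reasonable elaboration rather than a different approach.
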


While enumerating all the relevant forests, annotations and environments
is impractical, more clever strategies for inference exist.

We start by annotating the term with type variables:
each name $x$ gets typed with a type variable $\typevar_x$.
Then we start the type derivation,
collecting all the constraints on types along the way.
If we can find a $\Types$ and type expressions
to associate to each type variable, so that these constraints are satisfied,
the process can be typed under $\Types$.

By inspecting rules~\ref{rule:conf} and \ref{rule:in} we observe that
all the ``tied-to'' and ``migratable'' predicates do not depend on $\Types$
so for any given $P$, the type constraints can be expressed
simply by conjuctions and disjuctions of two kinds of basic predicates:
\begin{enumerate}
\donotbreak%
  \item \emph{data-flow constraints} of the form
          $\typevar_x = t_x[\typevar_y]$
        where $t_x$ is a base type variable;
  \item \emph{base type constraints} of the form
          $\base(\typevar_x) < \base(\typevar_y)$
        which correspond to constraints over the corresponding
        base type variables, e.g.~$t_x < t_y$.
\end{enumerate}
Note that the $P$-safety condition on $\Env$
translates to constraints of the second kind.
The first kind of constraint can be solved using unification in linear time.
If no solution exists, the process cannot be typed.
This is the case of processes that cannot be \emph{simply typed}.
If unification is successful we get a set of equations over base type variables.
Any assignment of those variables to nodes in a suitable forest
that satisfies the constraints of the second kind
would be a witness of typability.
An example of the type inference in action can be found in \appendixorfull.

First we note that if there exists a $\Types$
which makes $P$ typable and \tcompat,
then there exists a $\Types'$ which does the same
but is a linear chain of base types
(i.e. a single tree with no branching).
To see how, simply take $\Types'$ to be any topological sort of $\Types$.

Now, suppose we are presented with a set $\constraints$ of constraints
of the form $t \tlt t'$ (no disjuctions).
One approach for solving them could be based on
reductions to SAT or CLP(FD).
We instead outline a direct algorithm.
If the constraints are acyclic,
i.e.~it is not possible to derive $t \tlt t$ by transitivity,
then there exists a finite forest satisfying the constraints,
having as nodes the base type variables.
To construct such forest, we can first represent the constraints
as a graph with the base type variables as vertices
and an edge between $t$ and $t'$ just when $t \tlt t' \in \constraints$.
Then we can check the graph for acyclicity.
If the test fails, the constraints are unsatisfiable.
Otherwise, any topological sort of the graph will represent a forest
satisfying $\constraints$.

We can modify this simple procedure to support constraints including disjuctions
by using backtracking on the disjuncts.
Every time we arrive at an acyclic assigment,
we can check for \tshaped[ness] (which takes linear time)
and in case the check fails we can backtrack again.

To speed up the backtracking algorithm,
one can merge the acyclicity test with the \tcompat[ibility] check.
Acyclicity can be checked by constructing a topological sort
of the constraints graph.
Every time we produce the next node in the sorting,
we take a step in the construction of $\Phi(P)$ using the fact that
the currently produced node is the minimal base type among the remaining ones.
We can then backtrack as soon as a choice contradicts \tcompat[ibility].

The complexity of the type checking problem is easily seen
to be linear in the size of the program.
This proves,
  in conjuction with the finiteness of the candidate guesses
  for $\Types$ and annotations,
that the type inference problem is in NP.
We conjecture that inference is also NP-hard.

We implemented the above algorithm in a tool
called `James Bound' (\texttt{jb}),
available at \url{http://github.com/bordaigorl/jamesbound}.%

\section{Expressivity and verification}
\label{sec:verification}

\subsection{Expressivity}
\label{sec:expressivity}

Typably hierarchical terms form a rather expressive fragment.
Apart from including common patterns as the client-server one,
they generalise powerful models of computation with decidable properties.

Relations with variants of \CCS{} are the easiest to establish:
\CCS{} can be seen as a syntactic subset of \picalc{}
when including 0-arity channels, which are very easily dealt with
by straightforward specialisations of the typing rules for actions.
One very expressive, yet not Turing-powerful, variant is \CCS!~\cite{He:11}
which can be seen as our \picalc{} without mobility.
Indeed, every \CCS! process is typably hierarchical%
~\cite[Section 11.4]{DOsualdo:15:phd}.

Reset nets can be simulated by using resettable counters as defined in \cref{ex:counter}. The full encoding can be found in~\appendixorfull.
The encoding preserves coverability but not reachability.

\CCS! was recently proven to have decidable reachability~\cite{He:11}
so it is reasonable to ask whether reachability is decidable
for typably hierarchical terms.

We show this is not the case by introducing a weak encoding of Minsky machines
(in \appendixorfull).
The encoding is weak in the sense that not all of the runs
represent real runs of the encoded Minsky machine;
however with reachability one can distinguish
between the reachable terms that are encodings of reachable configurations
and those which are not.
We therefore reduce reachability of Minsky machines
to reachability of typably hierarchical terms.

\begin{theorem}
\label{th:reach-undec}
  The reachability problem is undecidable for (typably) hierarchical terms.
\end{theorem}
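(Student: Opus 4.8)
The plan is to reduce the reachability problem for Minsky (two-counter) machines to the reachability problem for typably hierarchical \piterm{s}. Since Minsky machine halting (and hence reachability of a designated configuration) is undecidable, the existence of such a reduction immediately yields the theorem. The encoding will necessarily be \emph{weak}, in the sense that not every run of the encoding corresponds to a faithful simulation of the Minsky machine; but this is harmless for reachability, because we can arrange that a particular target term is reachable if and only if the machine reaches the corresponding configuration via a faithful run.

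First I would fix a Minsky machine $\mathcal{M}$ with two counters and a finite set of control locations, and design a \piterm{} $P_{\mathcal{M}}$ along the lines of the resettable counter of \cref{ex:counter}: each counter $i$ is represented by a chain of tokens $\outz{t}$ hanging off a channel $p_i$, and the finite control is a sequential process that emits signals $\ch{inc}_i$, $\ch{dec}_i$ and a test-for-zero instruction. The crucial point for typability is that the base types can be laid out as a fixed finite chain --- say $\ty{ctrl} \parent \ty{p}_1 \parent \ty{t}_1$ and $\ty{ctrl} \parent \ty{p}_2 \parent \ty{t}_2$, possibly flattened to a single linear order as permitted by the inference discussion in \cref{sec:inference} --- so that every reachable configuration admits a \tcompat\ forest: the control at the root, the counter "heads" $p_i$ below it, and the token chains below those. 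Since every name is created with a fixed type and the reaction pattern is exactly the migration pattern tolerated by rules \ref{rule:conf} and \ref{rule:in}, the term will be typable, hence hierarchical by \cref{th:subj-red} and \cref{th:typed-tshaped}; in particular it is depth-bounded, which is consistent with the fact that the counter chains grow in breadth rather than depth.

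The weakness of the encoding shows up precisely in the zero-test: in a synchronous \picalc{} without priorities one cannot atomically detect the \emph{absence} of a token, so a "test for zero" branch may fire even when the counter is non-zero, producing a run that does not correspond to any run of $\mathcal{M}$. The standard fix, which I would adopt, is to make such spurious runs \emph{detectable}: each token carries (or is linked to) a flag, and a faithful halting run drains all tokens and reaches a syntactically distinguished term $\mathit{Halt}$ containing no leftover tokens, whereas a run that took a zero-test while tokens remained can never reassemble into exactly $\mathit{Halt}$ (the leftover $\outz{t}$ leaves a residual subterm that no sequence of reductions can erase). Then $\mathcal{M}$ halts if and only if $\mathit{Halt}$ is reachable from $P_{\mathcal{M}}$, i.e.\ if and only if $\mathit{Halt} \embeddedin P'$ with the multiset of parallel components matching exactly --- which is what the reachability problem (as opposed to coverability) can decide.

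The main obstacle I expect is the interaction between the three constraints: (i) the encoding must be \emph{faithful up to detectable errors}, so that reachability of the target term pins down genuine halting runs; (ii) the resulting term must be \emph{typable} under some finite $\Types$, which forces the communication discipline (who sends what to whom, which names migrate) to stay within the migration pattern of the type system --- in particular no name may need to be nested under a name of an incomparable or greater base type at any reachable configuration; and (iii) the counters must still be able to grow without bound, which is what makes the problem non-trivial (a bounded-counter encoding would be useless). Reconciling (i) and (ii) is the delicate part: one must choose the token-passing protocol for increment/decrement so that the acquired names always migrate "downwards" in the fixed hierarchy, exactly as in \cref{ex:counter}, and verify that the zero-test branch --- the genuinely mobile step --- satisfies the base-type side-condition of \ref{rule:in}. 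Once a concrete such $P_{\mathcal{M}}$ is written down, checking typability is a routine (if tedious) application of the rules, analogous to \cref{ex:servers-typing}, and checking that $\mathit{Halt}$ is reachable iff $\mathcal{M}$ halts is a straightforward simulation argument in both directions.
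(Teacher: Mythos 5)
Your proposal follows essentially the same route as the paper: a weak encoding of Minsky machines built from the resettable counters of \cref{ex:counter}, typable under a fixed linear hierarchy, where a spurious zero-test leaves irremovable residue (in the paper, the discarded garbage $\new t.(\outz t \parallel \ldots \parallel \outz t)$ after a reset, plus deadlock when decrementing an empty counter) so that the exact target term is reachable if and only if the machine reaches the corresponding configuration. This matches the paper's argument in both the encoding and the use of exact reachability to filter out unfaithful runs.
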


\Cref{th:reach-undec} can be used to clearly separate the (typably) hierarchical
fragment from other models of concurrent computation as Petri Nets,
which have decidable reachability and are thus less expressive.

\subsection{Applications}
\label{sec:applications}

Although reachability is not decidable,
coverability is often quite enough to prove non-trivial safety properties.
To illustrate this point, let us consider \cref{ex:servers} again.
In our example, each client waits for a reply reaching its mailbox
before issuing another request;
moreover the server replies to each request with a single message.
Together, these observations suggest
that the mailboxes of each client will contain at most one message at all times.
To automatically verify this property we could use a coverability algorithm
for depth-bounded systems:
since the example is typable, it is depth-bounded and
such algorithm is guaranteed to terminate with a correct answer.
To formulate the property as a coverability problem,
we can ask for coverability of the following query:
$
  \new s\:m.(
    \bang{S} \parallel
    \inp m(y).\out c<m> \parallel
    \new d.\out m<d> \parallel
    \new d'.\out m<d'>
  )
$.
This is equivalent to asking whether a term is reachable that embeds a server connected with a client with a mailbox containing two messages.
The query is not coverable and therefore we proved our property.%
\footnote{To fully prove a bound on the mailbox capacity one may need to also ask another coverability question for the case where the two messages bear the same data-value~$d$.}%

Other examples of coverability properties are variants of secrecy properties.
For instance, the coverability query
$
  \new s\:m\:m'.(
    \bang{S} \parallel
    \inp m(y).\out c<m> \parallel
    \inp m'(y).\out c<m'> \parallel
    \new d.(\out m<d> \parallel \out m'<d>)
  )
$
encodes the property ``can two different clients receive the same message?'',
which cannot happen in our example.

It is worth noting that this level of accuracy for proving such properties automatically is uncommon.
Many approaches based on counter abstraction~\mbox{\cite{Pnueli:02,Emerson:99}}
or CFA-style abstractions~\cite{DOsualdo:13}
would collapse the identities of clients
by not distinguishing between different mailbox addresses.
Instead a single counter is typically used to record
the number of processes in the same control state and of messages.
In our case, abstracting the mailbox addresses away has the effect
of making the bounds on the clients' mailboxes
unprovable in the abstract model.

A natural question at this point is:
how can we go about verifying terms which cannot be typed,
as the ring example?
Coverability algorithms can be applied to untypable terms
and they yield sound results when they terminate.
But termination is not guaranteed,
as the term in question may be depth-unbounded.

However, even a failed typing attempt
may reveal interesting information about the structure of a term.
For instance, in \cref{ex:ring-typing} one may easily see
that the cyclic dependencies in the constraints are caused
by the names representing the ``next'' process identities.
In the general case heuristics can be employed
to automatically identify a minimal set of problematic restrictions.
Once such restrictions are found,
a counter abstraction could be applied \emph{to those restrictions only}
yielding a term that simulates the original one
but introducing some spurious behaviour.
Type inference can be run again on the the abstracted term;
on failure, the process can be repeated,
until a hierarchical abstraction is obtained.
This abstract model can then be model checked instead of the original term,
yielding sound but possibly imprecise results.

\section{Related work}
\label{sec:relwork}

\emph{Depth boundedness} in the \picalc{} was first proposed in~\cite{Meyer:08}
where it is proved that depth-bounded systems
are {well-structured transition systems}.
In~\cite{Wies:10} it is further proved that (forward) coverability is decidable even when the depth bound $k$ is not known \emph{a priori}.
In~\cite{Zufferey:12} an approximate algorithm for computing the \emph{cover set}---an over-approximation of the set of reachable terms---of a system of depth bounded by $k$ is presented.
All these analyses rely on the assumption of depth boundedness and may even require a known bound on the depth to terminate.


Several other interesting fragments of the $\pi$-calculus have been proposed in the literature, such as
  name bounded~\cite{Huechting:13},
  mixed bounded~\cite{Meyer:09}, and
  structurally stationary~\cite{Meyer:09a}.
Typically defined by a non-trivial condition on the set of reachable terms -- a \emph{semantic} property, membership becomes undecidable.
Links with Petri nets via encodings of proper subsets of depth-bounded systems have been explored in \cite{Meyer:09}.
Our type system can prove depth boundedness for processes that are breadth and name unbounded, and which cannot be simulated by Petri nets.
In~\cite{Amadio:02}, Amadio and Meyssonnier consider fragments of the
asynchronous \picalc{} and show that coverability is decidable for
the fragment with no mobility and bounded number of active sequential processes,
via an encoding to Petri nets.
Typably hierarchical systems can be seen as an extension of the result
for a synchronous \picalc{} with unbounded sequential processes
and a restricted form of mobility.

Recently H\"{u}chting et al.~\cite{Huechting:14} proved
several relative classification results between fragments of \picalc.
Using 
Karp-Miller trees,
they presented an algorithm to decide if an arbitrary \piterm\ is
bounded in depth by a given $k$.
The construction is based on an (accelerated) exploration
of the state space of the \piterm,
with non primitive recursive complexity,
which makes it impractical.
By contrast, our type system uses a very different technique
leading to a quicker algorithm, at the expense of precision.
Our forest-structured types can also act as specifications,
offering more intensional information to the user than just a bound $k$.

Our types are based on Milner's sorts for the \picalc~\cite{Milner:93,Gay:93},
later refined into I/O types~\cite{Pierce:93}
and their variants~\cite{Pierce:00}.
Based on these types is a system for termination of \piterm{s}~\cite{Deng:06}
that uses a notion of levels,
enabling the definition of a lexicographical ordering.
Our type system can also be used to determine termination of \piterm{s}
in an approximate but conservative way,
by using it in conjuction with \cref{th:db-term-decidable}.
Because the respective orderings between types of the two approaches
are different in conception, we expect
the terminating fragments isolated by the respective systems to be incomparable.

\section{Future directions}
\label{sec:future}

The type system we presented in Section~\ref{sec:typesys} is conservative:
the use of simple types, for example, renders the analysis context-insensitive.
Although we have kept the system simple so as to focus on the novel aspects, a number of improvements are possible.
First, the extension to the polyadic case is straightforward.
Second, the type system can be made more precise by using subtyping and polymorphism to refine the analysis of control and data flow.
Third, the typing rule for replication introduces a very heavy approximation:
when typing a subterm, we have no information about which other parts of the term (crucially, which restrictions) may be replicated.
By incorporating some information about which names can be instantiated
unboundedly in the types, the precision of the analysis can be greatly improved.
The formalisation and validation of these extensions is a topic of ongoing research.

Another direction worth exploring is the application of this machinery to
heap manipulating programs and security protocols verification.

\paragraph{Acknowledgement.}
We would like to thank Damien Zufferey for helpful discussions
on the nature of depth boundedness,
and Roland Meyer for insightful feedback on a previous version of this paper.

\bibliographystyle{abbrvnat}
\bibliography{abbrev,biblio}

\ifincludeappendix

  \clearpage

  \appendix

  \section{Supplementary Material for Section~\ref{sec:prelim}}

\subsection{Definition and properties of $\nf$}
\label{app:nf}

The function $\nf \from \PiTerms \to \PiNf$,
defined in \cref{fig:nf},
extracts, from a term, a normal form structurally equivalent to it.

\begin{definition}[$\nf(P)$]
  We define the function $\nf \from \PiTerms \to \PiNf$
  as follows:
  \begin{mathpar}
  \nf(\zero)  \is \zero \and
  \nf(\pi.P)  \is \pi.\nf(P) \and
  \nf(\new x.P) \is \new x. \nf(P) \and
  \nf(M + M') \is \nf(M) + \nf(M')
      \and
  \nf(\bang{M}) \is \bang{(\nf(M))}
    \and
  \nf(P \parallel Q) \is
      \begin{cases}
        \nf(P)  \CASE \nf(Q) = \zero \neq \nf(P) \\
        \nf(Q)  \CASE \nf(P) = \zero \\
        \new X_P X_Q. (N_P \parallel N_Q) \CASE
          \nf(Q) = \new X_Q.N_Q,
          \nf(P) = \new X_P.N_P
          \AND
          \actrestr(N_P) = \actrestr(N_Q) = \emptyset
      \end{cases}
\end{mathpar}

  \label{fig:nf}
\end{definition}

\begin{lemma}
  For each $P \in \PiTerms$,
 $P \congr \nf(P)$
\end{lemma}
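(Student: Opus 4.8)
The plan is to prove the statement by structural induction on $P$, following exactly the case split in the definition of $\nf$ (\cref{fig:nf}). The base case $P = \zero$ is immediate, since $\nf(\zero) = \zero$. For each of the cases $P = \pi.P'$, $P = M + M'$, $P = \bang M$ and $P = \new x.P'$, the corresponding clause of $\nf$ re-applies the same top-level constructor to the recursively computed normal forms of the immediate subterms; since $\congr$ is a congruence, and hence closed under all term constructors, the induction hypothesis (e.g.\ $P' \congr \nf(P')$) lifts directly to the whole term (e.g.\ $\pi.P' \congr \pi.\nf(P') = \nf(\pi.P')$). Along the way one checks, by the same induction, that $\nf(P) \in \PiNf$; this is needed to make the parallel case below meaningful, and is entirely mechanical from the grammar of normal forms.

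The only case requiring genuine work is $P = P_1 \parallel P_2$. By the induction hypothesis $P_i \congr \nf(P_i)$ for $i = 1, 2$, so $P \congr \nf(P_1) \parallel \nf(P_2)$, and it suffices to show $\nf(P_1) \parallel \nf(P_2) \congr \nf(P_1 \parallel P_2)$. I would then treat the three sub-cases of the definition. If $\nf(P_2) = \zero \neq \nf(P_1)$, then $\nf(P_1 \parallel P_2) = \nf(P_1)$ and $\nf(P_1) \parallel \zero \congr \nf(P_1)$ by neutrality of $\zero$ for $\parallel$; the sub-case $\nf(P_1) = \zero$ is symmetric, using commutativity of $\parallel$. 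In the remaining sub-case, write $\nf(P_i) = \new X_i.N_i$ where $N_i$ is a parallel composition of sequential terms, so $\actrestr(N_i) = \emptyset$; such a decomposition exists because every normal form has the shape $\new x_1 \cdots x_n.(A_1 \parallel \cdots \parallel A_m)$. Then $\nf(P_1 \parallel P_2) = \new X_1 X_2.(N_1 \parallel N_2)$, and I would derive $(\new X_1.N_1) \parallel (\new X_2.N_2) \congr \new X_1 X_2.(N_1 \parallel N_2)$ by applying the scope-extrusion law once for each name in $X_1$, then once for each name in $X_2$, together with the exchange law $\new a.\new b.Q \congr \new b.\new a.Q$ to collect the restrictions at the front.

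The one subtlety — essentially the only non-mechanical point — is discharging the side condition of scope extrusion, namely that the name being pulled outward does not occur free in the process it is pulled past: for $x \in X_1$ one needs $x \notin \freenames(N_2)$, and for $x \in X_2$ one needs $x \notin \freenames(N_1)$. Since $X_1$ and $X_2$ consist of restriction-bound names of $P_1$ and $P_2$ respectively, while $\freenames(N_i) \subseteq \freenames(P_i) \cup X_i$, this follows from \ref{nameuniq} (distinct bound names, and free names disjoint from bound names); invoking that assumption here is legitimate because $\congr$ includes $\alpha$-conversion, so one may always first rename the names in $X_1$ and $X_2$ apart before extruding. I expect this bookkeeping — ensuring the iterated extrusion never captures a name — to be the main obstacle, with everything else being a straightforward unfolding of the definition of $\nf$ combined with the induction hypothesis.
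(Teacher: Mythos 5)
Your proposal is correct and follows the same route as the paper, which simply states that the lemma is proved by a straightforward induction on $P$; your write-up just spells out the details (the congruence cases, and the scope-extrusion bookkeeping in the parallel case justified via $\alpha$-conversion and \ref{nameuniq}) that the paper leaves implicit.
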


\begin{proof}
A straightforward induction on $P$.
\end{proof}

\begin{lemma}
\label{lemma:forest-nf}
Let $\phi$ be a forest with labels in $\Names \dunion \Seq$.
Then $\phi = \forest(Q)$ with
  $Q \congr Q_\phi$ where
  \begin{align*}
    Q_\phi &\is \new X_\phi.\Parallel_{(n, A) \in I} A\\
    X_\phi &\is \set{\ell_\phi(n) \in \Names | n \in N_\phi}\\
    I &\is \set{(n, A) \mid \ell_\phi(n) = A \in \Seq}
  \end{align*}
provided
\begin{enumerate}[label=\small\roman*),ref=\thetheorem.\roman*,]
  \item $\forall n \in N_\phi$,
        if $\ell_\phi(n) \in \Seq$ then
        $n$ has no children in $\phi$, and
        \label{lemma:forest-nf:seq-leaf}
  \item $\forall n, n' \in N_\phi$,
        if $\ell_\phi(n) = \ell_\phi(n') \in \Names$
        then $n = n'$, and
        \label{lemma:forest-nf:name-uniq}
  \item $\forall n \in N_\phi$,
        if $\ell_\phi(n) = A \in \Seq$
        then for each $x \in X_\phi \inters \freenames(A)$
          there exists $n' <_\phi n$ such that $\ell_\phi(n') = x$.
          \label{lemma:forest-nf:scoping}
\end{enumerate}
\end{lemma}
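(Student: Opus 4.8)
The statement characterises exactly which labelled forests $\phi$ (with labels in $\Names \dunion \Seq$) arise as $\forest(Q)$ for some $Q$ congruent to the canonical term $Q_\phi$ read off from $\phi$. The three side conditions are precisely the structural features that $\forest$ always produces: (i) sequential labels sit at leaves, (ii) each name labels at most one node, (iii) every free name of a leaf label that is bound somewhere in $\phi$ is bound by a proper ancestor. So the plan is to prove both directions: that any $\phi$ built by $\forest$ satisfies (i)--(iii) and equals $\forest(Q_\phi)$ up to the isomorphism convention, and conversely that any $\phi$ satisfying (i)--(iii) is realised as $\forest(Q)$ for some $Q \congr Q_\phi$.

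\textbf{First direction (necessity / well-definedness of $Q_\phi$).} I would proceed by induction on the structure of $\phi$, using the inductive presentation of $L$-labelled forests via $\emptyforest$, $\dunion$ and $l[\,\cdot\,]$. The base case $\phi = \emptyforest$ gives $Q_\phi = \zero$ and $\forest(\zero) = \emptyforest$, trivially satisfying (i)--(iii). For $\phi = \phi_1 \dunion \phi_2$, one observes that $Q_{\phi}$ is (up to $\congr$, using commutativity/associativity of $\parallel$ and the restriction laws to pull the disjoint name sets out) the parallel composition $Q_{\phi_1} \parallel Q_{\phi_2}$; conditions (i)--(iii) are inherited componentwise, using \ref{nameuniq}-style disjointness of the node (hence name) sets of the two sub-forests. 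For $\phi = l[\psi]$ with $l \in \Names$, we get $Q_\phi \congr \new l. Q_\psi$: condition (ii) adds only the fresh root name $l$, and condition (iii) for a leaf below $l$ that mentions $l$ is witnessed by the new root; the case $l \in \Seq$ is excluded by (i) since a node labelled by a sequential process must be a leaf, so $l[\psi]$ with $\psi \neq \emptyforest$ cannot occur. In each case one checks $\forest(Q_\phi) = \phi$ directly from \cref{def:forest-repr}, reading the four clauses of $\forest$ off the three forest constructors.

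\textbf{Second direction (sufficiency).} Given $\phi$ satisfying (i)--(iii), the term $Q_\phi$ is well-formed: by (ii) the set $X_\phi$ is a genuine set of distinct restriction names, by (i) the index set $I$ enumerates exactly the leaves and their sequential labels, and by (iii) every free occurrence of a name of $X_\phi$ inside some $A$ is in scope in $Q_\phi = \new X_\phi.\Parallel_{(n,A)\in I} A$ (the prefix $\new X_\phi$ binds all of them at once, which is legitimate since, by the structural laws of restriction, $\new X_\phi.P$ abbreviates any nesting of the individual restrictions). It then remains to exhibit $Q \congr Q_\phi$ with $\forest(Q) = \phi$: take $Q$ to be the nesting of restrictions and parallel compositions that literally mirrors the tree structure of $\phi$ --- at each internal node labelled $x$ put a $\new x.(\cdots)$, at each branching put a $\parallel$, at each leaf put its sequential label. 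By scope extrusion and exchange, pushing all the restrictions to the front and collecting them yields $Q_\phi$, so $Q \congr Q_\phi$; and $\forest(Q) = \phi$ holds by construction, matching each clause of $\forest$ to the corresponding node of $\phi$.

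\textbf{Main obstacle.} The routine parts are the structural inductions; the delicate point is the bookkeeping around scoping in the second direction --- specifically, showing that collapsing the tree-shaped nesting of restrictions in $Q$ into the single block $\new X_\phi$ of $Q_\phi$ is sound. This requires that whenever a name $x$ labels an internal node $n$, no leaf label \emph{outside} the subtree rooted at $n$ has $x$ free; this is exactly guaranteed by (iii) together with (ii) (uniqueness of the node labelled $x$), so that scope extrusion applies in the right direction without capture. The \nameuniq{} assumption keeps disjointness of name sets across $\dunion$ under control, so no $\alpha$-renaming is needed. I would isolate this scope-collapsing step as a small sub-lemma and then the rest of the argument is a direct match against \cref{def:forest-repr}.
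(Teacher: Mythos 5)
Your proof of the direction the lemma actually asserts is essentially the paper's own argument: you build $Q$ by mirroring the tree (a restriction at each name-labelled internal node, a parallel composition at each branching, the sequential label at each leaf), observe $\forest(Q)=\phi$ by construction, and collapse the nested scopes into the single block $\new X_\phi$ by scope extrusion, which is sound precisely because (ii) and (iii) together guarantee that a name labelling an internal node occurs free only in leaves of that node's subtree. The paper packages exactly this content as a structural induction on $\phi$ (cases $\emptyforest$, $\phi_0\dunion\phi_1$, $l[\phi']$), establishing $\forest(Q)=\phi$ and $Q\congr Q_\phi$ simultaneously in each case, with (ii) giving disjointness of the name sets and (iii) licensing the extrusion steps; your isolated ``scope-collapsing'' sub-lemma is the same bookkeeping, just done globally rather than case by case.

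One caveat: the lemma is one-directional (it only claims that (i)--(iii) suffice), and your additional ``necessity'' direction contains a false subsidiary claim, namely that $\forest(Q_\phi)=\phi$. Since $Q_\phi$ hoists every restriction into one top-level block, $\forest(Q_\phi)$ is always a single chain of name-labelled nodes with all leaves hanging below the last name; for instance, if $\phi$ has a root labelled $x$ with two children, a leaf labelled $A$ and a node labelled $y$ whose only child is a leaf labelled $B$, then $Q_\phi = \new x\:y.(A \parallel B)$ and $\forest(Q_\phi)$ places $A$ below $y$, so it is not isomorphic to $\phi$. What holds is only what the lemma states: $\phi = \forest(Q)$ for \emph{some} $Q \congr Q_\phi$. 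This slip lives entirely in the superfluous direction, so it does not undermine your proof of the statement, but the opening framing (``the statement characterises exactly which forests arise as $\forest(Q)$'') should be dropped or weakened accordingly.
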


\begin{proof}
  \newcommand{\ALLCOND}{%
  \ref{lemma:forest-nf:seq-leaf},
  \ref{lemma:forest-nf:name-uniq} and
  \ref{lemma:forest-nf:scoping}
}
We proceed by induction on the structure of $\phi$.
The base case is when $\phi = \emptyforest$,
for which we have $Q_\phi = \zero$ and $\phi = \forest(\zero)$.

When $\phi = \phi_0 \dunion \phi_1$ we have that
if conditions \ALLCOND hold for $\phi$,
they must hold for $\phi_0$ and $\phi_1$ as well,
hence we can apply the induction hypothesis to them obtaining
$\phi_i\forest(Q_i)$ with $Q_i \congr Q_{\phi_i}$ ($i \in \set{0,1}$).
We have $\phi = \forest(Q_0 \parallel Q_1)$ by definition of $\forest$,
and we want to prove that $Q_0 \parallel Q_1 \congr Q_{\phi}$.
By condition \ref{lemma:forest-nf:name-uniq} on $\phi$,
$X_{\phi_0}$ and $X_{\phi_1}$ must be disjoint;
furthermore, by condition \ref{lemma:forest-nf:scoping}
on both $\phi_0$ and $\phi_1$ we can conclude that
$\freenames(Q_{\phi_i}) \inters X_{\phi_{1-i}} = \emptyset$.
We can therefore apply scope extrusion:
$Q_0 \parallel Q_1
 \congr Q_{\phi_0} \parallel Q_{\phi_1}
 \congr \new X_{\phi_0} X_{\phi_1}.(P_{\phi_0} \parallel P_{\phi_1})
 = Q_\phi$.

The last case is when $\phi = l[\phi']$.
Suppose conditions \ALLCOND hold for $\phi$.
We distinguish two cases.
If $l = A \in \Seq$, by \ref{lemma:forest-nf:seq-leaf}
we have $\phi' = \emptyforest$, $\phi = \forest(A)$ and $A = Q_{\phi}$.
If $l = x \in \Names$ then we observe that conditions \ALLCOND hold for $\phi'$
under the assumption that they hold for $\phi$.
Therefore $\phi' = \forest(Q')$ with $Q' \congr Q_{\phi'}$,
and, by definition of $\forest$, $\phi = \forest(\new x.Q')$.
By condition \ref{lemma:forest-nf:name-uniq} we have $x \not\in X_{\phi'}$ so
$\new x.Q'
 \congr \new x. Q_{\phi'}
 \congr \new (X \union \set{x}).P_{\phi'}
 = Q_\phi$.

\end{proof}
  \section{Supplementary Material for Section~\ref{sec:t-compat}}

\subsection{Proof of \cref{th:hier-db}}

  First, it is immediate to see that every hierarchical term is depth-bounded.
Any \tcompat\ forest cannot repeat a type in a path,
which means that the number of base types in $\Types$ bounds the height
of \tcompat\ forests.
This automatically gives a bound on the depth of any \tcompat\ term.

We show the converse is not true
by presenting a depth-bounded process which is not hierarchical.
Take $P = (\bang{A} \parallel \bang{B} \parallel \bang{(C_1 + C_2)})$ where
\begin{align*}
  A   & = \tact.\new (a\tas \ty{a}).\out p<a> &
  C_1 & = \inp p(x).\bang{(\inp q(y).D)}     &
  D   & = \out x<y>                          \\
  B   & = \tact.\new (b\tas \ty{b}).\out q<b> &
  C_2 & = \inp q(x).\bang{(\inp p(y).D)}
\end{align*}
then $P$ is depth-bounded.
However we can show there is no choice
for \emph{consistent} annotations and $\Types$
that can prove it hierarchical.
Let $h$ be the height of $\Types$.
From $P$ we can reach, by reducing the $\tact$ actions of $A$ and $B$,
any of the terms
  $Q_{i,j} = P \parallel
    (\new a. \out p<a>)^i \parallel (\new b. \out q<b>)^j$
(omitting annotations) for $i,j \in \Nat$.
The choice for annotations can potentially assign a different type in $\Types$
to each $\restr a$ and $\restr b$.
Let $n, m \in \Nat$ be naturals strictly greater than $2 h$
and consider the reachable term $Q_{n, n m}$;
from this term we can reach a term
\[
  Q^{ab} =
    P \parallel
  \biggl(\new a.\Bigl(
    \bigl(\new b.
      D\subst{x -> a, y -> b}
    \bigr)^m \parallel
    \Bang{\inp q(y).D\subst{x -> a}}
    \Bigr)
  \biggr)^n
\]
by never selecting $C_2$ as part of a redex.
Each occurrence of $a$ and $b$ will have an annotation:
we assume type $t_a^i$ is assigned
to each occurrence $i \leq n$ of $\restr a$ in $Q^{ab}$
and a type $t_b^{i,j}$ is assigned to each occurrence $j$ of $\restr b$
under $\restr(a \tas t_a^i)$ in $Q^{ab}$.
Each occurrence of $\restr a$ in $Q^{ab}$ has in its scope
more than $h$ occurrences of $\restr b$.
We cannot extrude more than $h$ occurrences of $\restr b$
because we would necessarily violate \tcompat[ibility]
by obtaining a path of length greater than $h$
in the forest of the extruded term.
Therefore, w.l.o.g., we can assume that the types
  $t_b^{i,1},\dots,t_b^{i,h+1}$
are all descendants of $t_a^i$, for each $i \leq n$.
Pictorially, the parent relation in $\Types$ entails the relations
in \cref{fig:semantic-counterexample}
where the edges represent $\tlt_\Types$.

\begin{figure}[htb]
  \centering%
  \inputfig{semantic-counterexample}%
  \caption{Structure of $\Types$ in the counterexample.}%
  \label{fig:semantic-counterexample}%
\end{figure}

The type associations of the restrictions in $Q^{ab}$
are already fixed in $Q_{n, nm}$.
From $Q_{n, nm}$ we can however also reach any of the terms
\[
  Q_b^i =
    P \parallel \cdots \parallel
  \biggl(\new (b\tas t_b^{i, 1}).\Bigl(
    \bigl(\new a.
      D\subst{x -> a, y -> b}
    \bigr)^n \parallel
    \Bang{\inp p(y).D\subst{x -> b}}
    \Bigr)
  \biggr)
\]
for $i \leq m$,
by making $C_2$ and $\new (b\tas t_b^{i,1}).\out q<b>$ react and then
repeatedly making $\bang{\inp q(y).D}$ react with each
$\new (a\tas t_a^j).\out p<a>$.
Let us consider $Q_b^1$.
As before, we cannot extrude more than $h$ occurrences of $a$
or we would break \tcompat[ibility].
We must however extrude $(a\tas t_a^1)$ to get \tcompat[ibility] since
$t_a^1 \tlt_\Types t_b^{1,1}$.
From these two facts we can infer that
there must be a type associated to one of the $a$, let it be $t_a^2$,
such that $t_a^1 \tlt_\Types t_b^{1,1} \tlt_\Types t_a^2$.
We can apply the same argument to $Q_b^2$ obtaining
$t_a^1 \tlt_\Types t_a^2 \tlt_\Types t_b^{2,1} \tlt_\Types t_a^3$.
Since $m > 2h$ we can repeat this $h+1$ times and get
$t_a^1 \tlt_\Types t_a^2 \tlt_\Types \dots \tlt_\Types t_a^{h+1}$
which contradicts the assumption that the height of $\Types$ is $h$.

The reason why the counterexample presented in the proof above
fails to be hierarchical is that (unboundedly many) names are used
in fundamentally different ways
in different branches of the execution.


  \section{Supplementary Material for Section~\ref{sec:typesys}}

\subsection{Proof of \cref{lemma:tied-tree}}

  We show that the claim holds in the case where $A_i$ is linked to $A_j$ in $P$.
From this, a simple induction over the length of linked-to steps required
to prove $i \tiedto{P} j$, can prove the lemma.

Suppose $i \linkedto{P} j$.
Let $Y = \freenames(A_i)
           \inters
         \freenames(A_j)
           \inters
         \set{x | (x \tas \type) \in X}$,
we have $Y \neq \emptyset$.
Both $A_i$ and $A_j$ are in the scope
of each of the restrictions bounding names $y \in Y$ in any
of the processes $Q$ in the congruence class of $P$,
hence, by definition of $\forest$, the nodes labelled with $A_i$ and $A_j$
generated by $\forest(Q)$ will have nodes labelled with $(y, \base(X(y)))$
as common ancestors.

\subsection{Some auxiliary lemmas}

\begin{lemma}
\label{lemma:tcompat-alpha}
  If\/ $\forest(P)$ is \tcompat\ then
  for any term $Q$ which is an \pre\alpha-renaming of $P$,
  $\forest(Q)$ is \tcompat.
\end{lemma}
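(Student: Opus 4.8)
The plan is to observe that an $\alpha$-renaming induces an isomorphism of forests that leaves the base-type labels untouched, and that \tcompat[ibility] of a forest depends only on those labels.

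First I would make precise the effect of $\alpha$-renaming on the forest representation. Since $\alpha$-conversion is required to be invariant on type annotations, if $Q$ is obtained from $P$ by a bijective (capture-avoiding) renaming $\sigma$ of bound names, then, comparing the defining clauses of $\forest$ on $P$ and on $Q$, a straightforward induction on the structure of the term yields a bijection $h$ between the nodes of $\forest(P)$ and those of $\forest(Q)$ that preserves the parent relation, maps a node labelled $(x, t)$ to one labelled $(\sigma(x), t)$, and maps a leaf labelled $A$ to one labelled $A\sigma$. In other words $h$ is an isomorphism of the underlying forests that preserves the second component of every internal-node label, and under $h$ a path $n_1 \dots n_{k+1}$ of $\forest(P)$ with trace $((x_1, t_1) \dots (x_k, t_k) \: A)$ is carried to the path $h(n_1) \dots h(n_{k+1})$ of $\forest(Q)$, whose trace is $((\sigma(x_1), t_1) \dots (\sigma(x_k), t_k) \: A\sigma)$; conversely every path of $\forest(Q)$ arises this way.

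Next I would unwind \cref{def:t-compat}: a forest is \tcompat\ precisely when for every trace $((x_1, t_1) \dots (x_k, t_k) \: A)$ one has $t_1 \tlt \dots \tlt t_k$, a condition mentioning only the base types $t_i$ and neither the names $x_i$ nor the leaf label $A$. Since $h$ puts the traces of $\forest(P)$ and $\forest(Q)$ in bijection while leaving the sequence $t_1 \dots t_k$ of base types of each trace unchanged, $\forest(P)$ is \tcompat\ if and only if $\forest(Q)$ is; in particular the former implies the latter, which is the claim.

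The only mildly delicate point — the \enquote{hard part}, such as it is — is the bookkeeping in the inductive construction of $h$: one should note that a general $\alpha$-renaming may rename several bound names simultaneously, which is handled either by treating $\sigma$ as a single injective substitution on bound names or by composing single-name renamings and invoking the argument transitively, and one should check that the Name Uniqueness Assumption is maintained so that $\forest(Q)$ is well-defined. Neither presents a genuine difficulty.
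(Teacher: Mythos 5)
Your proposal is correct and follows the same route as the paper: the paper's own proof is a one-line observation that \tcompat[ibility] depends only on the type annotations, which are invariant under \pre\alpha-renaming, and your argument simply makes the induced label-preserving forest isomorphism and the trace bijection explicit. Nothing is missing; you have merely spelled out the details the paper leaves as ``straightforward''.
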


\begin{proof}
  Straightforward from the fact that \tcompat[ibility]
  depends only on the type annotations.
\end{proof}

\begin{lemma}
\label{lemma:tcompat-takeout}
  Let $P = \new X.\Parallel_{i \in I} A_i$ be a \tcompat\ normal form,
  $Y \subseteq X$ and $J \subseteq I$.
  Then $P' = \new Y.\Parallel_{j \in J} A_j$ is \tcompat.
\end{lemma}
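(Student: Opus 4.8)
The plan is to obtain a \tcompat\ forest for $P'$ by pruning one for $P$. Since $P$ is \tcompat, fix a \tcompat\ forest $\phi\in\AST{P}$, say $\phi=\forest(Q)$ with $Q\congr P$. Using the Name Uniqueness Assumption together with \cref{lemma:tcompat-alpha}, we may assume w.l.o.g.\ that the active sequential subterms of $Q$ are exactly the $A_i$ ($i\in I$) and that its active restrictions are named among $X$ (with $\Types$-annotations inherited from $P$); then the leaves of $\phi$ are in bijection with $I$, and its internal nodes carry name labels drawn from $X$. Note also that whenever $y\in X$ is free in some $A_i$, the binder of $y$ in $Q$ must be active (since the active subterm $A_i$ lies in its scope), and hence $y$ occurs as a node of $\phi$.

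Now define $\phi'$ from $\phi$ by (i) deleting every leaf labelled $A_i$ with $i\in I\setminus J$, and (ii) for every name-node whose label lies in $X\setminus Y$, deleting that node and re-attaching its children to its parent (promoting them to roots if the node was a root). First, $\phi'$ is \tcompat: step (i) merely discards some traces, while in step (ii) a trace passing through a deleted node, $\dots\,t'\,t\,t''\,\dots$, becomes the subsequence $\dots\,t'\,t''\,\dots$, which is still a $\Types$-chain because $t'\tlt t\tlt t''$ and $\tlt$ is transitive (the boundary cases, where the deleted node is a root or a leaf, only shorten a chain). Second, $\phi'\in\AST{P'}$: invoking \cref{lemma:forest-nf} (in its evident annotated form), it suffices to observe that pruning keeps sequential labels on leaves, preserves uniqueness of name labels, and preserves the scoping condition — the last point holds because we never delete a name-node whose label is in $Y$, and contracting an intermediate node cannot destroy an ancestor relationship. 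Thus $\phi'=\forest(Q')$ for some $Q'\congr\new X'.\Parallel_{j\in J}A_j$, where $X'\subseteq Y$ is the set of names of $Y$ that occur as nodes of $\phi$. By the observation in the first paragraph, every name in $Y\setminus X'$ is free in no $A_j$ with $j\in J$, so $\new X'.\Parallel_{j\in J}A_j\congr\new Y.\Parallel_{j\in J}A_j=P'$; hence $\phi'\in\AST{P'}$, and therefore $P'$ is \tcompat.

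I expect the only step needing genuine care to be showing that the pruned forest is the representation of a term congruent to $P'$ — rather than of a term with a different set of free names — which is exactly where the scoping clause of \cref{lemma:forest-nf} and the accounting of which names of $X\setminus Y$ become free do the work. The $\alpha$-normalisation of $Q$ in the first paragraph and the transitivity argument for \tcompatibility\ are routine.
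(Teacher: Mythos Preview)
Your proof is correct and follows essentially the same approach as the paper's: start from a \tcompat\ forest in $\AST{P}$, delete the unwanted leaves, contract the unwanted name-nodes (using transitivity of $\tlt$ to keep \tcompat[ibility]), and invoke \cref{lemma:forest-nf} to conclude the result lies in $\AST{P'}$. You are in fact a bit more careful than the paper in one place: you explicitly account for names of $Y$ that may fail to occur as nodes of $\phi$ (your set $X'$) and close the gap via $\new X'.\Parallel_{j\in J}A_j \congr \new Y.\Parallel_{j\in J}A_j$, whereas the paper leaves this implicit.
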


\begin{proof}
    Take a \tcompat\ forest $\phi \in \AST{P}$.
  By Lemma~\ref{lemma:tcompat-alpha} we can assume without loss of generality
  that $\phi = \forest(Q)$ where proving $Q \congr P$
  does not require \pre\alpha-renaming.
  Clearly, removing the leaves that do not correspond to sequential terms
  indexed by $Y$ does not affect the \tcompat[ibility] of $\phi$.
  Similarly, if a restriction $(x\tas\type)\in X$ is not in $Y$,
  we can remove the node of $\phi$ labelled with $(x, \base(\type))$
  by making its parent the new parent of its children.
  This operation is unambiguous under \mbox{\ref{nameuniq}}
  and does not affect \tcompat[ibility], by transitivity of $\tlt$.
  We then obtain a forest $\phi'$ which is \tcompat\ and that,
  by Lemma~\ref{lemma:forest-nf},
  is the forest of a term congruent to the desired normal form $P'$.

\end{proof}

  \section{Supplementary Material for Section~\ref{sec:soundness}}

\subsection{Some Elementary Properties of the Type System}

\begin{lemma}
\label{lemma:typesys-props}
  Let $P \in \PiNfAnnot$ and $\Env$, $\Env'$ be type environments.
  \begin{enumerate}[label=\alph*), ref={\ref{lemma:typesys-props}.\alph*}]
    \item if\/ $\Env \types P$ then $\freenames(P) \subseteq \domain(\Env)$;
      \label{lemma:typesys-domain}
    \item if $\domain(\Env') \inters \boundnames(P) = \emptyset$
          and $\freenames(P) \subseteq \domain(\Env)$,\\
          then $\Env \types P$ if and only if\/ $\Env \Env' \types P$;%
      \label{lemma:typesys-weakening}%
    \item if $P \congr P' \in \PiNfAnnot$ then, $\Env \types P$ if and only if\/ $\Env \types P'$.
      \label{lemma:typesys-congr}
  \end{enumerate}
\end{lemma}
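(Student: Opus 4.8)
For part (a) I would argue by induction on the derivation of $\Env \types P$, inspecting each rule of \cref{fig:typesys}; this part does not depend on (b) or (c). Rules \ref{rule:sum}, \ref{rule:bang} and \ref{rule:tau} leave the set of free names unchanged, so the claim is immediate from the induction hypothesis, and for \ref{rule:out} the premises force $a,b\in\domain(\Env)$ while the induction hypothesis covers $\freenames(Q)$. The two binding rules are the only substantial cases: in \ref{rule:conf} the induction hypothesis gives $\freenames(A_i)\subseteq\domain(\Env)\cup(X\restriction\Names)$ for every $i\in I$, and since $\freenames(\new X.\Parallel_{i\in I}A_i)=\bigl(\bigcup_{i\in I}\freenames(A_i)\bigr)\setminus(X\restriction\Names)$ this is contained in $\domain(\Env)$; \ref{rule:in} is analogous, using $a\in\domain(\Env)$ and the induction hypothesis on the continuation under $\Env,x\tas\type_x$.

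For part (b) I would prove the two implications simultaneously, the forward one by induction on the derivation of $\Env\types P$ and the backward one by induction on the derivation of $\Env\Env'\types P$. The single observation that makes every case routine is that, by inspection of the rules, every name at which a side condition or a membership premise of the form ``$z\tas\sigma$ in the ambient environment'' evaluates the environment lies in $\domain(\Env)\cup\boundnames(P)$: for \ref{rule:conf} and \ref{rule:in} this is because those side conditions only apply the environment to $\freenames(A_i)$ (respectively $\freenames(A_i)\setminus\set{a}$), and syntactically $\freenames(A_i)\subseteq\freenames(P)\cup\boundnames(P)\subseteq\domain(\Env)\cup\boundnames(P)$, using the hypothesis $\freenames(P)\subseteq\domain(\Env)$. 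Since $\domain(\Env')$ is disjoint from $\domain(\Env)$ (the convention for writing $\Env\Env'$) and from $\boundnames(P)$ (by hypothesis), we get $(\Env\Env')(Y)=\Env(Y)$ for every such set $Y$, and likewise the membership premises keep their truth value, so the side conditions and premises of each rule neither gain nor lose validity when $\Env$ is replaced by $\Env\Env'$. For the recursive premises one threads $\Env'$ through the environment -- e.g.\ the premise $\Env,X\types A_i$ of \ref{rule:conf} becomes $\Env\Env',X=(\Env,X)\Env'$, which is legitimate because $\domain(\Env')\cap(X\restriction\Names)=\emptyset$ -- and applies the induction hypothesis to $A_i$; its hypotheses hold because $\boundnames(A_i)\subseteq\boundnames(P)$ and, syntactically, $\freenames(A_i)\subseteq\freenames(P)\cup(X\restriction\Names)\subseteq\domain(\Env,X)$.

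For part (c) I would first establish that, when $P$ and $P'$ are both in $\PiNfAnnot$, $P\congr P'$ holds iff $P'$ can be reached from $P$ by finitely many \emph{local} moves performed at $P$ or inside some continuation: (i) reordering the parallel components, or the restriction binders, of one level; (ii) $\alpha$-renaming a restriction-bound or input-bound name to a fresh one; (iii) replacing a top-level component $\bang M$ by $M\parallel\bang M$, or the reverse; (iv) deleting a restriction whose name occurs free in no component of its level, or adding such a restriction (equivalently, $\zero$-absorption and garbage-collection of $\new(x\tas\type).\zero$). Scope extrusion cannot be applied while staying inside $\PiNfAnnot$ -- it would pull a restriction under a prefix, or destroy the ``all binders, then all sequential components'' shape of a normal form -- so this list is exhaustive. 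It then suffices to check that each move preserves provability of a judgement $\Env\types(-)$ and, by an outer induction on the derivation, that performing a move inside a continuation is harmless. Moves of kind (i) are invisible to the rules, which are stated over the \emph{set} $X$ and the \emph{index set} $I$. For (iii): at a level with no enclosing binders \ref{rule:conf} just asks each component to be typable, so $\Env\types M\parallel\bang M \iff \Env\types M \iff \Env\types\bang M$ by \ref{rule:conf} and \ref{rule:bang}; within an enclosing \ref{rule:conf} the side conditions are unchanged because the duplicated $M$ has the same free names as $\bang M$, so the linked-to graph -- hence the tied-to relation -- on the original components is unchanged, and the side condition generated by the new copy coincides with the one already generated by $\bang M$. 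For (iv): the side condition of \ref{rule:conf} for the deleted name holds vacuously (an unused restricted name is tied to no component), the tied-to relation on the surviving components does not change, and the premises are matched using the weakening of part (b). For (ii): $\alpha$-renaming a bound name $x$ of type $\type$ to a fresh $y$ is obtained by weakening the environment with $y\tas\type$ (part (b)), applying \cref{lemma:subst} to substitute $y$ for $x$, and then weakening $x\tas\type$ away (part (b)); the side conditions survive because the tied-to relation is invariant under $\alpha$-renaming of bound names and the base type $\base(\type)$ does not change. The main obstacle is exactly this last part: pinning down precisely that structural congruence between \emph{normal forms} is generated by exactly (i)--(iv), and discharging the $\alpha$-renaming and replication cases without drowning in side-condition bookkeeping; the remaining cases are routine.
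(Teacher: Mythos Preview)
The paper states this lemma without proof, treating it as an elementary property of the type system; there is therefore no paper proof to compare against.  Your argument for parts~(a) and~(b) is correct and is the standard one: each rule is syntax-directed, the side conditions of \ref{rule:conf} and \ref{rule:in} only query the environment at names lying in $\freenames(P)\cup\boundnames(P)$, and the disjointness hypotheses on $\Env'$ ensure these queries are unaffected.

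For part~(c) your plan is also the right one, and the case analysis for moves (i)--(iv) is accurate --- in particular your observation that a duplicated copy of $M$ in the unfold $\bang M \congr M \parallel \bang M$ has the same free names as $\bang M$, so the linked-to graph and hence the tied-to relation on components is unchanged, is exactly what is needed.  Two small points.  First, you should add reordering and re-association of the summands in a choice to move~(i); commutativity/associativity of~$+$ is part of $\congr$ and is otherwise unaccounted for (it is harmless for \ref{rule:sum}, of course).  Second, the step you flag as the ``main obstacle'' --- that congruence restricted to $\PiNfAnnot$ is generated by (i)--(iv) applied at any level --- is indeed the only non-routine ingredient.  It is true, and the cleanest way to discharge it is not to chase congruence chains but to observe that the multiset of top-level sequential components and the set of names actually used are invariants of $\congr$ (this is essentially \cref{lemma:forest-nf} read backwards): two congruent normal forms must have the same set $X$ of \emph{used} restricted names and the same multiset of $A_i$'s up to $\alpha$-renaming and replication unfolding, which is precisely the equivalence generated by your moves.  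With that in hand, the remaining bookkeeping you sketch (weakening via~(b), then \cref{lemma:subst}, then weakening again) goes through.
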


\subsection{Proof of \Cref{lemma:phi-tcompat}}

Item~\ref{lemma:phi-tcompat:tcompat} is an easy induction on the cardinality of $X$.

Item~\ref{lemma:phi-tcompat:ast} requires more work.
By item~\ref{lemma:phi-tcompat:tcompat} $\Phi(P)$ is \tcompat\ so
$\Phi(P) \in \AST{P}$ proves that $P$ is \tcompat.

To prove the $\Leftarrow$-direction we assume that
$P = \new X.\Parallel_{i\in I} A_i$ is \tcompat{}
and proceed by induction on the cardinality of $X$
to show that $\Phi(P) \in \AST{P}$.
The base case is when $X = \emptyset$:
$ \Phi(P) = \Phi(\Parallel_{i\in I} A_i)
          = \Dunion_{i\in I} \set{A_i[]}
          = \forest(\Parallel_{i\in I} A_i)
          = \forest(P) \in \AST{P}
$.
For the induction step,
we observe that $X \neq \emptyset$ implies $\minrestr(X) \neq \emptyset$ so,
$Z \subset X$ and
for each $(x \tas \type) \in \minrestr(X)$, $Y_x \subset X$
since $x \not\in Y_x$.
This, together with Lemma~\ref{lemma:tcompat-takeout},
allows us to apply the induction hypotesis on the terms
  $P_x = \new Y_x.\Parallel_{j \in I_x} A_j$ and
  $P_R = \new Z.\Parallel_{r\in R} A_r$,
obtaining that there exist terms $Q_x \congr P_x$ and $Q_R \congr P_R$ such that
$\forest(Q_x) = \Phi(P_x)$ and $\forest(Q_R) = \Phi(P_R)$ where
all the forests $\forest(Q_x)$ and $\forest(Q_R)$ are \tcompat.
Let
$ Q = \Parallel \set{\new (x \tas \type). Q_x | (x \tas \type) \in \minrestr(X)}
      \parallel Q_R
$,
then $\forest(Q) = \Phi(P)$.
To prove the claim we only need to show that $Q \congr P$.
We have
$ Q \congr \Parallel
      \set{ \new (x \tas \type).\new Y_x.\Parallel_{j \in I_x} A_j
            | (x \tas \type) \in \minrestr(X) }
      \parallel P_R
$
and we want to apply extrusion to get
$ Q \congr
      \new Y_{\min}.
      \left( \Parallel_{i \in I_{\min}} A_i \right)
      \parallel P_R
$
for
  $I_{\min} = \Dunion \set{I_x | (x \tas \type) \in \minrestr(X)}$,
  $Y_{\min} = \minrestr(X) \dunion \Dunion \set{Y_x | (x \tas \type) \in \minrestr(X)}$
which adds an obligation to prove that
\begin{enumerate}[label=\roman*)]
  \item $I_x$ are all pairwise disjoint so that $I_{\min}$ is well-defined,
        \label{phi-tcompat:Ix-disjoint}
  \item $Y_x$ are all pairwise disjoint and all disjoint from $\minrestr(X)$
        so that $Y_{\min}$ is well-defined,
        \label{phi-tcompat:Yx-disjoint}
  \item $Y_x \inters \freenames(A_j) = \emptyset$ for every $j \in I_z$
        with $z \neq x$ so that we can apply the extrusion rule.
        \label{phi-tcompat:Yx-Iz}
\end{enumerate}

To prove condition~\ref{phi-tcompat:Ix-disjoint},
assume by contradiction that there exists an $i \in I$
and names $x, y \in \minrestr(X)$ with $x \neq y$,
such that both $x$ and $y$ are tied to $A_i$ in $P$.
By transitivity of the tied-to relation, we have $I_x = I_y$.
By Lemma~\ref{lemma:tied-tree} all the $A_j$ with $j \in I_x$
need to be in the same tree in any forest $\phi \in \AST{P}$.
Since $P$ is \tcompat\ there exist such a $\phi$ which is \tcompat\ and
has every $A_j$ as label of leaves of the same tree.
This tree will include a node $n_x$ labelled with $(x, \base(X(x)))$
and a node $n_y$ labelled with $(y, \base(X(y)))$.
By \tcompat[ibility] of $\phi$
and the existence of a path between $n_x$ and $n_y$
we infer $\base(X(x)) < \base(X(y))$ or $\base(X(y)) < \base(X(x))$
which contradicts the assumption that $x, y \in \minrestr(X)$.

Condition~\ref{phi-tcompat:Yx-disjoint} follows from condition~\ref{phi-tcompat:Ix-disjoint}:
suppose there exists a $(z \tas \type) \in X \inters Y_x \inters Y_y$ for $x \neq y$,
then we would have that $z \in \freenames(A_i) \inters \freenames(A_j)$
for some $i \in I_x$ and $j \in I_y$,
but then $i \tiedto{P} j$,
meaning that $i \in I_y$ and $j \in I_x$ violating condition~\ref{phi-tcompat:Ix-disjoint}.
The fact that $Y_x \inters \minrestr(X) = \emptyset$
follows from the definition of $Y_x$.
The same reasoning proves condition~\ref{phi-tcompat:Yx-Iz}.

Now we have
$ Q \congr
      \new Y_{\min}.
      \left( \Parallel_{i \in I_{\min}} A_i \right)
      \parallel \new Z.\Parallel_{r\in R} A_r $
and we want to apply extrusion again to get
$ Q \congr
      \new Y_{\min} Z. \Parallel \set{A_i | i \in (I_{\min} \dunion R)}
$
which is sound under the following conditions:
\begin{enumerate}[resume*]
  \item $Y_{\min} \inters Z = \emptyset$,
  \item $I_{\min} \inters R = \emptyset$,
  \item $Z \inters \freenames(A_i) = \emptyset$ for all $i \not\in R$
\end{enumerate}
of which the first two hold trivially by construction,
while the last follows from condition~\ref{phi-tcompat:X-complete} below,
as a name in the intersection of $Z$ and a $\freenames(A_i)$
would need to be in $X$ but not in $Y_{\min}$.
To be able to conclude that $Q \congr P$ it remains to prove that
\begin{enumerate}[resume*]
  \item $I = I_{\min} \dunion R$ and
        \label{phi-tcompat:I-complete}
  \item $X = Y_{\min} \dunion Z$
        \label{phi-tcompat:X-complete}
\end{enumerate}
which are also trivially valid by inspection of their definitions.
This concludes the proof for item~\ref{lemma:phi-tcompat:ast}.

Finally, for every $Q \in \PiAnnot$ such that
$Q \congr P$, $\Phi(P) \in \AST{Q}$ if and only if $\Phi(P) \in \AST{P}$
by definition of $\AST{-}$;
since $\Phi(P)$ is \tcompat\ we can infer that
$Q$ is \tcompat{} if and only if $\Phi(P) \in \AST{Q}$,
which proves item~\ref{lemma:phi-tcompat:congr}.
%

In light of \Cref{lemma:phi-tcompat},
we can turn the computation of
$\Phi_\Types(P)$ into an algorithm to check \tcompat[ibility] of $P$:
it is sufficient to compute $\Phi_\Types(P)$ and check at each step
that the sets $I_x$, $R$ form a partition of $I$ and
the sets $Y_x$, $Z$ form a partition of $X$.
If the checks fail $\Phi_\Types(P) \not\in \AST{P}$ and $P$ is not \tcompat,
otherwise the obtained forest is a witness of \tcompat[ibility].

\subsection{Further Properties of $\Phi_\Types(P)$}

\begin{lemma}
\label{lemma:phi-tied}
  Let $P = \new X . \Parallel_{i \in I} A_i \in \PiNfAnnot$ be a \tcompat\ normal form.
  Then for every trace
    $((x_1, t_1) \dots (x_k, t_k) \: A_j)$
  in the forest $\Phi(P)$,
  for every $i \in \set{1, \ldots, k}$, we have $x_i \ntiedto{P} j$ (i.e.
  $x_i$ is tied to $A_j$ in $P$).
\end{lemma}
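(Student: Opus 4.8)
The plan is to prove the statement by induction on $\card{X}$, mirroring the recursive definition of $\Phi_\Types$. The base case $X = \emptyset$ is immediate, since then $\Phi_\Types(P) = \Dunion_{i \in I}\set{A_i[]}$ and every trace has the form $(A_j)$ with $k = 0$, so there is nothing to prove.

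For the inductive step I would first record two elementary observations. First, a straightforward induction on the restriction set shows that the leaves of $\Phi_\Types(\new W.\Parallel_{i \in K} A_i)$ that carry a sequential-term label are exactly the $A_i$ with $i \in K$. Second, the tied-to relation is monotone under passing to a sub-normal-form: if $W' \subseteq W$, $K' \subseteq K$, $Q = \new W.\Parallel_{i \in K} A_i$ and $Q' = \new W'.\Parallel_{i \in K'} A_i$, then for all $i,j \in K'$, $i \tiedto{Q'} j$ implies $i \tiedto{Q} j$, and consequently $y \ntiedto{Q'} j$ implies $y \ntiedto{Q} j$. This holds because enlarging the restriction set can only add pairs to the linked-to relation, and the tied-to relation is its transitive closure.

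Now assume $X \neq \emptyset$, so $\minrestr(X) \neq \emptyset$, and fix a trace $\theta = ((x_1,t_1)\dots(x_k,t_k)\:A_j)$ in $\Phi_\Types(P)$. Write $P_Z = \new Z.\Parallel_{r \in R} A_r$ and $P_x = \new Y_x.\Parallel_{i \in I_x} A_i$ for the normal forms occurring in the recursive clause. Since $\Phi_\Types(P)$ is a disjoint union of trees, $\theta$ lies inside a single tree, which is either a tree of $\Phi_\Types(P_Z)$, or a tree rooted at a node labelled $(x,\base(\type))$ with $(x\tas\type) \in \minrestr(X)$ whose immediate subforest is $\Phi_\Types(P_x)$. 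In the first case $j \in R$ and $Z \subsetneq X$; by \cref{lemma:tcompat-takeout} $P_Z$ is \tcompat, so the induction hypothesis yields $x_m \ntiedto{P_Z} j$ for all $m$, and monotonicity ($Z \subseteq X$, $R \subseteq I$) lifts this to $x_m \ntiedto{P} j$. In the second case $x_1 = x$; since $A_j$ is a sequential-term leaf of $\Phi_\Types(P_x)$, the first observation gives $j \in I_x$, hence $x_1 \ntiedto{P} j$ directly from $I_x = \set{i \in I | x \ntiedto{P} i}$. For $m \geq 2$ the suffix $((x_2,t_2)\dots(x_k,t_k)\:A_j)$ is a trace of $\Phi_\Types(P_x)$; since $Y_x \subsetneq X$ (because $x \notin Y_x$ and $Y_x \inters \minrestr(X) = \emptyset$) and $P_x$ is \tcompat\ by \cref{lemma:tcompat-takeout}, the induction hypothesis gives $x_m \ntiedto{P_x} j$, which monotonicity ($Y_x \subseteq X$, $I_x \subseteq I$) lifts to $x_m \ntiedto{P} j$. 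This covers every name occurring in $\theta$.

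I do not anticipate any real difficulty here: the proof is a routine structural induction, and the only points that need care are matching up the recursion of $\Phi_\Types$ with the case split and isolating the monotonicity of the tied-to relation under shrinking the term, both of which are elementary.
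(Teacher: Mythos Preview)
Your proposal is correct and takes essentially the same approach as the paper. The paper's proof is a one-line sketch (``straightforward from the definition of $I_x$''), and you have correctly unpacked it into a structural induction on $\card{X}$, making explicit the monotonicity of the tied-to relation under passing to sub-normal-forms that the paper leaves implicit.
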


\begin{proof}
  Straightforward from the definition of $I_x$ in $\Phi$:
  when a node labelled by $(x,t)$ is introduced,
  its subtree is extracted from a recursive call on a term
  that contains all and only the sequential terms that are tied to $x$.
\end{proof}

\begin{remark}
\label{remark:phi-nf}
  $\Phi(P)$ satisfies conditions
    \ref{lemma:forest-nf:seq-leaf},
    \ref{lemma:forest-nf:name-uniq} and
    \ref{lemma:forest-nf:scoping}
  of \cref{lemma:forest-nf}.
\end{remark}

\subsection{Proof of \Cref{lemma:subst}}

%
We prove the lemma by induction on the structure of $P$.
The base case is when $P \congr \zero$, where the claim trivially holds.

For the induction step,
let $P \congr \new X. \Parallel_{i \in I} A_i$
with $A_i = \Alt_{j \in J} \pi_{ij}.P_{ij}$,
for some finite sets of indexes $I$ and $J$.
Since the presence of replication does not affect the typing proof,
we can safely ignore that case as it follows the same argument.
Let us assume $\Env\types P$ and prove that $\Env\types P\subst{a->b}$.

Let $\Env'$ be $\Env \union X$.
From $\Env\types P$ we have
\begin{gather}
  \Env, X \types A_i
  \label{eq:subst-Ai}\\
      x \ntiedto{P} i
        \implies
          \base(\Env(\freenames(A_i))) \tlt \base(\type_x)
  \label{eq:subst-tied}
\end{gather}
for each $i \in I$ and $x \tas \type_x \in X$.
To extract from this assumptions a proof for $\Env\types P\subst{a->b}$,
we need to prove that \eqref{eq:subst-Ai} and \eqref{eq:subst-tied}
hold after the substitution.

Since the substitution does not apply to names in $X$
and the \emph{tied to} relation is only concerned
with names in $X$, the only relevant effect of the substitution is
modifying the set $\freenames(A_i)$ to
  $\freenames(A_i \subst{a->b}) =
      \freenames(A_i) \setminus \set{a} \union \set{b}$
when $a \in \freenames(A_i)$;
But since $\Env(a) = \Env(b)$ by hypothesis,
we have $\base(\Env(\freenames(A_i \subst{a->b}))) \tlt \base(\type_x)$.

It remains to prove \eqref{eq:subst-Ai} holds after the substitution as well.
This amounts to prove for each $j \in J$ that
$
    \Env' \types \pi_{ij}.P_{ij}
        \implies
    \Env' \types \pi_{ij}.P_{ij}\subst{a->b}
$;
we prove this by cases.

Suppose $\pi_{ij} = \out{\alpha}<\beta>$ for two names $\alpha$ and $\beta$,
then from $\Env' \types \pi_{ij}.P_{ij}$ we know the following
\begin{gather}
  \alpha \tas t_\alpha[\type_\beta] \in \Env'
  \qquad
  \beta \tas \type_\beta \in \Env'
  \label{eq:subst-out-types}\\
  \Env' \types P_{ij}
  \label{eq:subst-out-induction}
\end{gather}
Condition~\eqref{eq:subst-out-types} is preserved after the substitution
because it involves only types so, even if $\alpha$ or $\beta$ are $a$,
their types will be left untouched after they get substituted with $b$
from the hypothesis that $\Env(a) = \Env(b)$.
Condition~\eqref{eq:subst-out-induction} implies
$\Env' \types P_{ij}\subst{a->b}$ by inductive hypothesis.

Suppose now that $\pi_{ij} = \inp{\alpha}(x)$
and $P_{ij} \congr \new Y.\Parallel_{k \in K} A'_k$
for some finite set of indexes $K$;
by hypothesis we have:
\begin{gather}
  \alpha \tas t_{\alpha}[\type_x] \in \Env'
  \label{eq:subst-in-typea} \\
  \Env', x \tas \type_x \types P_{ij}
  \label{eq:subst-in-induction} \\
  \base(\type_x) \tlt t_{\alpha}
  \lor 
  \forall k \in K \st
    \migr{\pi_{ij}.P_{ij}}(k) \implies
      \base(\Env'(\freenames(A'_k)\setminus\set{\alpha})) \tlt t_{\alpha}
  \label{eq:subst-in-migrate}
\end{gather}
Now $x$ and $Y$ are bound names so they are not altered by substitutions.
The substitution $\subst{a->b}$ can therefore only be affecting
the truth of these conditions when
$\alpha =  a$ or when $a \in \freenames(A'_k)\setminus (Y \union \set{x})$.
Since we know $a$ and $b$ are assigned the same type by $\Env$
and $\Env \subseteq \Env'$,
condition~\eqref{eq:subst-in-typea} still holds when substituting $a$ for $b$.
Condition~\eqref{eq:subst-in-induction} holds by inductive hypotesis.
The first disjunct of condition~\eqref{eq:subst-in-migrate}
depends only on types, which are not changed by the substitution,
so it holds after applying it if and only if it holds before the application.
To see that the second disjunct also holds after the substitution
we observe that the \emph{migratable} condition depends on
$x$ and $\freenames(A'_k) \inters Y$
which are preserved by the substitution;
moreover, if $a \in \freenames(A'_k)\setminus\set{\alpha}$ then
$\Env'(\freenames(A'_k)\setminus\set{\alpha}) =
    \Env'(\freenames(A'_k\subst{a->b})\setminus\set{\alpha})$.

This shows that the premises needed to derive
$\Env', x \tas \type_x' \types \pi_{ij}.P_{ij}\subst{a->b}$
are implied by our hypothesis,
which completes the proof.


\subsection{Proof of \Cref{th:subj-red}}

   We will only prove the result for the case when $P \redto Q$ is caused by
a synchronising send and receive action
since the $\tact$ action case is similar and simpler.
From $P\redto Q$ we know that
$P \congr \new W.(S \parallel R \parallel C) \in \PiNfAnnot$ with
$S \congr (\out a<b>.\new Y_s.S')+M_s$ and
$R \congr (\inp a(x).\new Y_r.R')+M_r$
the synchronising sender and receiver respectively;
$Q \congr \new W Y_s Y_r.
  (S' \parallel R'\subst{x->b} \parallel C)$.
In what follows, let $W' = W Y_s Y_r$,
  $C = \Parallel_{h\in H} C_h$,
  $S' = \Parallel_{i\in I} S'_i$ and
  $R' = \Parallel_{j\in J} R'_j$,
all normal forms.

For annotated terms, the type system is syntax directed:
there can be only one proof derivation for each typable term.
By Lemma~\ref{lemma:typesys-congr}, from the hypothesis $\Env \types P$
we can deduce $\Env \types \new W.(S \parallel R \parallel C)$.
The proof derivation for this typing judgment
can only be of the following shape:
\begin{equation}
\label{eq:P-deriv}
  \mprset{sep=1em}
  \inferrule{
    \Env W \types S \\
    \Env W \types R \\
    \forall h \in H \st
      \Env W \types C_h \\
    \Premise
  }{
    \Env \types \new W.(S \parallel R \parallel C)
  }
\end{equation}
where $\Premise$ represents the rest of the conditions of the \ref{rule:conf} rule.%
\footnote{Note that $\Premise$ is trivially true by $P$-safety of $\Env$.}
The fact that $P$ is typable implies that each of these premises must be provable.
The derivation proving $\Env, W \types S$ must be of the form
\begin{equation}
\label{eq:S-deriv}
  \mprset{sep=1em}
  \inferrule{
    \inferrule*{
      a \tas t_a[\type_b] \in \Env W \\
      b \tas \type_b \in \Env W \\
      \Env W \types \new Y_s.S'
    }{
      \Env W \types \out a<b>.\new Y_s.S'
    }\\
    \Premise_{M_s}\\
  }{
    \Env \types \out a<b>.\new Y_s.S' + M_s
  }
\end{equation}
where $\Env W \types \new Y_s.S'$ is proved by an inference of the shape
\begin{equation}
\label{eq:Si-deriv}
  \inferrule{
    \forall i\in I\st
      \Env W Y_s \types S'_i \\
    \forall i\in I\st
      \Premise_{S'_i}
  }{
    \Env W \types \new Y_s.S'
  }
\end{equation}

Analogously, $\Env W \types R$ must be proved by an inference with the following shape
\begin{equation}
\label{eq:R-deriv}
  \mprset{sep=1em}
  \inferrule{
    \inferrule*{
      a \tas t_a[\type_x] \in \Env W \\
      \Env W, x \tas \type_x \types \new Y_r.R'\\
      \Premise_{R'}
    }{
      \Env W \types \inp a(x).\new Y_r.R'
    }\\
    \Premise_{M_r}
  }{
    \Env W \types \inp a(x).\new Y_r.R' + M_r
  }
\end{equation}
and to prove $\Env W, x \tas \type_x \types \new Y_r.R'$
\begin{equation}
\label{eq:Rj-deriv}
  \inferrule{
    \forall j\in J\st
      \Env W, x \tas \type_x, Y_r \types R'_j\\
    \forall j\in J\st
      \Premise_{R'_j}
  }{
    \Env W, x \tas \type_x \types \new Y_r.R'
  }
\end{equation}

We have to show that from this hypothesis we can infer that $\Env \types Q$
or, equivalently (by Lemma~\ref{lemma:typesys-congr}), that $\Env \types Q'$
where $Q' = \new W Y_s Y_r. (S' \parallel R'\subst{x->b} \parallel C)$.
The derivation of this judgment can only end with an application of~\ref{rule:conf}:
\[\mprset{sep=1em}
  \inferrule{
    \forall i \in I \st
      \Env W' \types S'_i \\
    \forall j \in J \st
      \Env W' \types R'_j\subst{x->b} \\
    \forall h \in H \st
      \Env W' \types C_h \\
    \Premise'
  }{
    \Env \types \new W'. (S' \parallel R'\subst{x->b} \parallel C)
  }
\]
In what follows we show how we can infer these premises are provable
as a consequence of the provability of the premises of the proof of
$\Env \types \new W.(S \parallel R \parallel C)$.

From Lemma~\ref{lemma:typesys-weakening} and \ref{nameuniq},
$\Env W Y_s \types S'_i$ from~\eqref{eq:Si-deriv} implies
$\Env W' \types S'_i$ for each $i \in I$.

Let $\Env_r = \Env W, x \tas \type_x$.
We observe that by \eqref{eq:S-deriv} and \eqref{eq:R-deriv}, $\type_x = \type_b$.
From \eqref{eq:R-deriv} we know that $\Env_r Y_r \types R'_j$
which, by Lemma~\ref{lemma:subst}, implies $\Env_r Y_r \types R'_j\subst{x->b}$.
By Lemma~\ref{lemma:typesys-weakening} we can infer
$\Env_r Y_r Y_s \types R'_j\subst{x->b}$ and by applying the same lemma again
using $\freenames(R'_j\subst{x->b}) \subseteq \domain(\Env W Y_r Y_s)$
and \ref{nameuniq} we obtain
$\Env W' \types R'_j\subst{x->b}$.

Again applying Lemma~\ref{lemma:typesys-weakening} and \ref{nameuniq},
we have that $\Env W \types C_h$ implies $\Env W' \types C_h$ for each $h \in H$.

To complete the proof we only need to prove that for each
$A \in \set{S'_i | i \in I} \union \set{R'_j | j \in J} \union \set{C_h | h \in H}$,
$
  \Premise' = \forall (x \tas \type_x) \in W'\st
    x \text{ tied to } A \text{ in } Q' \implies
      \base(\Env(\freenames(A))) < \base(\type_x)
$
holds.
This is trivially true by the hypothesis that $\Env$ is $P$-safe.

\subsection{Proof of \Cref{th:typed-tshaped}}

   We will consider the input output synchronisation case as the $\tau$ action one is similar and simpler.
We will further assume that the sending action $\out a<b>$ is such that
$\restr (a \tas \type_a)$ and $\restr (b \tas \type_b)$ are both active restrictions of $P$,
i.e.~$(a \tas \type_a) \in W$, $(b \tas \type_b) \in W$
     with $P \congr \new W.(S \parallel R \parallel C)$.
The case when any of these two names is a free name of $P$ can be easily handled
with the aid of the assumption that $\Env$ is $P$-safe.

As in the proof of Theorem~\ref{th:subj-red}, the derivation of $\Env \types P$
must follow the shape of \eqref{eq:P-deriv}.

From \tshaped[ness] of $P$ we can conclude that
both $\new Y_s.S'$ and $\new Y_r.R'$ are \tshaped.
We note that substitutions do not affect \tcompat[ibility]
since they do not alter the set of bound names and their type annotations.
Therefore, we can infer that $\new Y_r.R'\subst{a->b}$ is \tshaped.
By Lemma~\ref{lemma:phi-tcompat} we know that
  $\phi   = \Phi(\new W.(S \parallel R \parallel C)) \in \AST{P}$,
  $\phi_r = \Phi(\new Y_r.R'\subst{a->b}) \in \AST{\new Y_r.R'\subst{x->b}}$ and
  $\phi_s = \Phi(\new Y_s.S') \in \AST{\new Y_s.S'}$.
Let $\phi_r = \phimig \dunion \phinonmig$
where only $\phimig$ contains a leaf labelled with a term with $b$ as a free name.
These leaves will correspond to the continuations $R'_j$
that migrate in $\inp a(x).\new Y_r.R'$,
after the application of the substitution $\subst{x->b}$.
By assumption, inside $P$ both $S$ and $R$ are in the scope of the restriction bounding $a$
and $S$ must also be in the scope of the restriction bounding $b$.
Let $t_a = \base(\type_a)$ and $t_b = \base(\type_b)$,
$\phi$ will contain two leaves $n_S$ and $n_R$
labelled with $S$ and $R$ respectively,
having a common ancestor $n_a$ labelled with $(a, t_a)$;
$n_S$ will have an ancestor $n_b$ labelled with $(b, t_b)$.
Let $p_a$, $p_S$ and $p_R$ be the paths in $\phi$ leading from a root to $n_a$, $n_S$ and $n_R$ respectively.
By \tcompat[ibility] of $\phi$, we are left with only two possible cases:
either \begin{inparaenum}[label=\arabic*)]
\item $t_a \tlt t_b$\label{case:ab} or \item $t_b \tlt t_a$.\label{case:ba}
\end{inparaenum}

Let us consider case~\ref{case:ab} first.
The tree in $\phi$ to which the nodes $n_S$ and $n_R$ belong,
would have the following shape:
\begin{center}%
  \inputfigwrap[y=.6cm,x=.6cm,baseline=0]{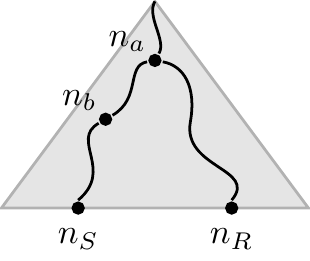}
\end{center}
\newcommand{\parentins}{\parent_{\mathrm{ins}}}
Now, we want to transform $\phi$, by manipulating this tree,
into a forest $\phi'$ that is
\tcompat\ by construction and such that
there exists a term $Q' \congr Q$ with $\forest(Q') = \phi'$,
so that we can conclude $Q$ is \tshaped.

To do so, we introduce the following function,
taking a labelled forest $\phi$, a path $p$ in $\phi$ and a labelled forest $\rho$
and returning a labelled forest:
\[
  \treeins(\phi, p, \rho) \is (
    N_\phi \dunion N_\rho,
    \parent_\phi \dunion \parent_\rho \dunion \parentins,
    \ell_\phi \dunion \ell_\rho
  )
\]
where $n\parentins n'$ if
  $n' \in \min_{\parent_\rho}(N_\rho)$
and if $\ell_\rho(n') = (y, t_y)$ then
\[
  n \in \max_{\parent_\phi}\set{m \in p | \ell_\phi(m) = (x,t_x), t_x < t_y}
\]
or if $\ell_\rho(n') = A$ then
\[
  n \in \max_{\parent_\phi}\set{m \in p | \ell_\phi(m) = (x,t_x), x \in \freenames(A)}.
\]
Note that for each $n'$, since $p$ is a path, there can be at most one $n$ such that $n \parentins n'$.

To obtain the desired $\phi'$,
we first need to remove the leaves $n_S$ and $n_R$ from $\phi$,
as they represent the sequential processes which reacted,
obtaining a forest $\phi_C$.
We argue that the $\phi'$ we need is indeed
\begin{align*}
  \phi' &= \treeins(\phi_1, p_S, \phimig)    \\
  \phi_1 &= \treeins(\phi_2, p_R, \phinonmig)\\
  \phi_2 &= \treeins(\phi_C, p_S, \phi_s)
\end{align*}
It is easy to see that, by definition of $\treeins$, $\phi'$ is \tcompat:
$\phi_C$, $\phi_s$, $\phinonmig$ and $\phimig$ are \tcompat\ by hypothesis,
$\treeins$ adds parent-edges only when they do not break \tcompat[ibility].

To prove the claim we need to show that $\phi'$ is the forest of a term congruent to
  $\new W Y_s Y_r.(S' \parallel R'\subst{x->b} \parallel C)$.
\newcommand{\Jnonmig}{J_{\neg\mathrm{mig}}}
\newcommand{\Jmig}{J_{\mathrm{mig}}}
Let $R' = \Parallel_{j \in J} R'_j$,
    $\Jmig = \set{j \in J | x \ntiedto{\new Y_r.R'} j}$,
    $\Jnonmig = J \setminus \Jmig$
and $Y'_r = \set{(x \tas \type) \in Y_r
                  | x \in \freenames(R'_j), j \in \Jnonmig}$.
We know that no $R'_j$ with $j \in \Jnonmig$ can contain $x$ as a free name
so $R'_j\subst{x->b} = R'_j$.
Now suppose we are able to prove that conditions
  \ref{lemma:forest-nf:seq-leaf},
  \ref{lemma:forest-nf:name-uniq} and
  \ref{lemma:forest-nf:scoping}
of Lemma~\ref{lemma:forest-nf} hold for
  $\phi_C$, $\phi_1$, $\phi_2$ and $\phi'$.
Then we could use Lemma~\ref{lemma:forest-nf} to prove
\begin{enumerate}[label=\alph*)]
  \item $\phi_C = \forest(Q_C)$, $Q_C \congr Q_{\phi_C} = \new W.C$,
  \item $\phi_2 = \forest(Q_2)$, $Q_2 \congr Q_{\phi_2} = \new W Y_s.(S' \parallel C)$,
  \item $\phi_1 = \forest(Q_1)$, $Q_1 \congr Q_{\phi_1} = \new W Y_s Y'_r.(S' \parallel \Parallel_{j \in \Jnonmig} R'_j \parallel C)$,
  \item $\phi' = \forest(Q')$, $Q' \congr Q_{\phi'} = \new W Y_s Y_r.(S' \parallel R'\subst{x->b} \parallel C) \congr Q$
\end{enumerate}
(it is straightforward to check that $\phi_C, \phi_2, \phi_1$ and $\phi'$ have the right sets of nodes and labels to give rise to the right terms).
We then proceed to check for each of the forests above that they satisfy
conditions
  \ref{lemma:forest-nf:seq-leaf},
  \ref{lemma:forest-nf:name-uniq} and
  \ref{lemma:forest-nf:scoping},
thus proving the theorem.

Condition~\ref{lemma:forest-nf:seq-leaf} requires that
only leafs are labelled with sequential processes,
condition that is easily satisfied by all of the above forests
since none of the operations involved in their definition alters this property
and the forests $\phi$, $\phi_s$ and $\phi_r$ satisfy it by construction.

Similarly, since $\new W.(S \parallel R \parallel C)$ is a normal form
it satisfies \ref{nameuniq}, \ref{lemma:forest-nf:name-uniq} is satisfied
as we never use the same name more than once.

Condition \ref{lemma:forest-nf:scoping} holds on $\phi$ and hence it holds on $\phi_C$
since the latter contains all the nodes of $\phi$ labelled with names.

Now consider $\phi_s$: in the proof of Theorem~\ref{th:subj-red} we established
that $\Env \types P$ implies that the premises $\Premise_{S'_i}$
from \eqref{eq:Si-deriv} hold, that is
$\base(\Env W(\freenames(S'_i))) \tlt \base(\type_x)$
holds for all $S'_i$ for $i \in I$ and all $(x \tas \type_x) \in Y_s$
such that $x \ntiedto{\new Y_s.S'} i$.
Since $\freenames(S'_i) \inters W \subseteq \freenames(S')$ we know that
every name $(w \tas \type_w) \in W$ such that $w \in \freenames(S'_i)$
will appear as a label $(w, \base(\type_w))$ of a node $n_w$ in $p_S$.
Therefore, by definition of $\treeins$,
we have that for each $n \in N_{\phi_C}$, $n_w <_{\phi_2} n$;
in other words, in $\phi_2$, every leaf in $N_{\phi_s}$ labelled with $S'_i$
is a descendent of a node labelled with $(w, \base(\type_w))$
for each $(w \tas \type_w) \in W$ with $w \in \freenames(S'_i)$.
This verifies condition \ref{lemma:forest-nf:scoping} on $\phi_2$.

Similarly, by \eqref{eq:Rj-deriv} the following premise must hold:
$ \base(\Env W(\freenames(R'_j))) \tlt \base(\type_x) $
for all $R'_j$ for $j \in J$ and all $(y \tas \type_y) \in Y_r$
such that $y \ntiedto{\new Y_r.R'} j$.
We can then apply the same argument we applied to $\phi_2$ to show
that condition \ref{lemma:forest-nf:scoping} holds on $\phi_1$.

From \eqref{eq:R-deriv} and the assumption $t_a \tlt t_b$,
we can conclude that the following premise must hold:
$\base(\Env W(\freenames(R'_j) \setminus \set{a})) \tlt t_a$
for each $j \in J$ such that $R'_j$ is migratable in $\inp a(x).\new Y_r.R'$,
i.e~$j \in \Jmig$.
From this we can conclude that for every name $(w \tas \type_w) \in W$
such that $w \in \freenames(R'_j\subst{x->b})$ with $j \in \Jmig$
there must be a node in $p_a$ (and hence in $p_S$) labelled with $(w, \base(\type_w))$.
Now, some of the leaves in $\phimig$ will be labelled with terms having $b$ as a free name;
we show that in fact every node in $\phimig$ labelled with a
$(y, t_y)$ is indeed such that $t_y \tlt t_b$.
From the proof of Theorem~\ref{th:subj-red} and Lemma~\ref{lemma:subst}
we know that from the hypothesis we can infer that
$\Env W \types \new Y_r.R'\subst{x->b}$ and hence that
for each $j \in \Jmig$ and each $(y \tas \type_y) \in Y_r$,
if $y$ is tied to $R'_j\subst{x->b}$ in $\new Y_r.R'\subst{x->b}$ then
$\base(\Env W(R'_j\subst{x->b})) \tlt \base(\type_y)$.
By Lemma~\ref{lemma:phi-tied} we know that every root of $\phimig$
is labelled with a name $(y, t_y)$ which is tied to each of the leaves in its tree.
Therefore each such $t_y$ satisfies $\base(\Env W(R'_j\subst{x->b})) < t_y$.
By construction, there exists at least one $j \in \Jmig$ such that
$x \in \freenames(R'_j)$ and consequently such that
$b \in \freenames(R'_j\subst{x->b})$.
From this and $b \in W$ we can conclude $t_b < t_y$ for $t_y$ labelling a root in $\phimig$.
We can then conclude that
$\set{n_b} = \max_{\parent_{\phi_2}}\set{m \in p_S | \ell_\phi(m) = (z,t_z), t_z < t_y}$
for each $t_y$ labelling a root of $\phimig$,
which means that each tree of $\phimig$ is placed as a subtree of $n_b$ in $\phi'$.
This verifies condition~\ref{lemma:forest-nf:scoping} for $\phi'$
completing the proof.

Pictorially, the tree containing $n_S$ and $n_R$ in $\phi$ is now transformed
in the following tree in $\phi'$:
\begin{center}
  \inputfigwrap[scale=.65,baseline=0]{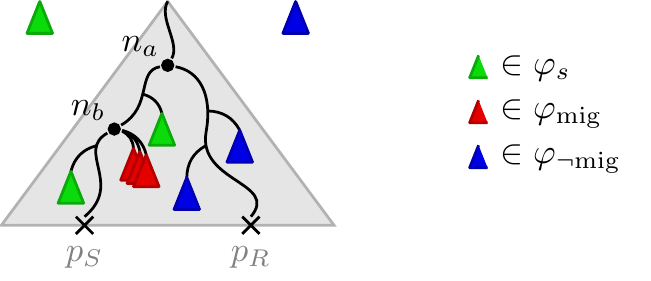}
\end{center}

Case~\ref{case:ba} --- where $t_b < t_a$ --- is simpler as the migrating
continuations can be treated just as the non-migrating ones.
%

\subsection{Role of $\phimig$, $\phinonmig$ and $\treeins$}

To illustrate the role of $\phimig$, $\phinonmig$
and the $\treeins$ operation in the above proof,
we show an example that would not be typable if we choose a simpler
``migration'' transformation.

Consider the normal form $P = \new a\:b\:c.(\bang{A} \parallel \out a<c>)$
where
  $A = \inp a(x).\new d.(\out a<d> \parallel \out b<x>)$.
To make types consistent we need annotations satisfying
$a \tas t_a[t]$, $b \tas t_b[t]$, $c \tas t$ and $d \tas t$.
Any $\Types$ satisfying the constraints $t_b \tlt t_a \tlt t$ would allow us
to prove $\emptyset\types P$;
let then $\Types$ be the forest with $\ty{b} \parent \ty{a} \parent \ty{t}$
with $t_a = \ty{a}$, $t_b = \ty{b}$ and $t = \ty{t}$.
Let $P' = \new a\:b\:c\:d.(\bang{A} \parallel \out a<d> \parallel \out b<c>)$
be the (only) successor of $P$.
The following picture shows $\Phi(P)$ in the middle,
on the left a forest in $\AST{P'}$ extracted by just putting
the continuation of $A$ under the message,
on the right the forest obtained by using $\treeins$
on the non-migrating continuations of $A$:
\begin{center}
  \inputfigwrap[yscale=.7]{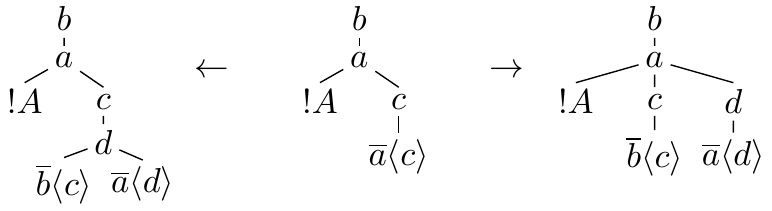}
\end{center}
Clearly, the tree on the left is not \tcompat\ since $c$ and $d$
have the same base type $t$.
Instead, the tree on the right can be obtained because $\treeins$
inserts the non-migrating continuation as close to the root as possible.

  \section{Supplementary Material for Section~\ref{sec:inference}}

\subsection{A type inference example}

Take the term
  $\Bang{\tact.\new s\:c.P}$
of \cref{ex:server-client}.
We start by annotating each restriction $\restr a$ with a fresh type variable
  $\restr (a\tas \typevar_a)$.
Then we perform a type derivation as in \cref{ex:servers-typing},
obtaining the following data-flow constraints:
\begin{align*}
  \typevar_s &= t_s[\typevar_z] &
  \typevar_c &= t_c[\typevar_x] &
  \typevar_x &= t_x[\typevar_y] &
  \typevar_z &= \typevar_x = \typevar_m = t_z[\typevar_d]
\end{align*}
from which we learn that:
\begin{itemize}
  \item $\typevar_d$ is unconstrained;
    we use the base type variable $t_d$ for $\base(\typevar_d)$;
  \item $\typevar_y = \typevar_d$;
  \item $t_x = t_z$ and $\base(\typevar_m) = t_x$.
\end{itemize}
We can therefore completely specify the types just by associating
$t_s, t_c, t_x$ and $t_d$ to nodes in a forest:
all the types would be determined as a consequence
of the data-flow constraints,
apart from $\typevar_d$ to which we can safely assign the type $t_d$.

During the type derivation we also collected
the following base type constraints:
\begin{gather*}
  \base(\typevar_z) \tlt \base(\typevar_d)  \qquad
  \base(\typevar_c) \tlt \base(\typevar_m)  \qquad
  \base(\typevar_c) \tlt \base(\typevar_x)  \\
  \base(\typevar_x) \tlt \base(\typevar_c)  \lor
  \base(\typevar_s) \tlt \base(\typevar_c)
\end{gather*}
These can be simplified and normalised using the equations on types seen above
obtaining the set
\[
  \C{\new s\:c.P} =
    \set{t_x \tlt t_c \lor t_s \tlt t_c,\,
         t_c \tlt t_x,\,
         t_x \tlt t_d}
\]
Hence any choice of $\Types \supseteq \set{t_x, t_c, t_s, t_d}$
such that $t_s \tlt_\Types t_c \tlt_\Types t_x \tlt_\Types t_d$
would make the typing succeed.

  \section{Supplementary Material for Section~\ref{sec:verification}}

\subsection{Encoding of Reset nets}

A \emph{reset net} $N$ with $n$ places
is a finite set of transitions of the form
  $(\vect{u},R)$
where $\vect{u}\in \set{-1,0,+1}^n$ is the update vector and 
      $R \subseteq \set{1,\dots,n}$ is the reset set.
A \emph{marking} $\vect{m}$ is a vector in $\Nat^n$;
a transition $(\vect{u},R)$ is said to be enabled at $\vect{m}$
if $\vect{m} - \vect{u} > \vect{0}$.
The semantics of a reset net $N$ with initial marking $\vect{m}_0$
is the transition system $(\Nat^n, \pnstep, \vect{m}_0)$
where $\vect{m} \pnstep \vect{m}'$ if there exists a transition
$(\vect{u},R)$ in $N$ that
  is enabled in $\vect{m}$ and
  such that \[
    \vect{m}'_i =
    \begin{cases}
      \vect{m}_i + \vect{u}_i \CASE i \not\in R \\
      0                       \CASE i \in R
    \end{cases}
  \]

To simulate place $i$ in a reset net we can construct a term that implements a
counter with increment and reset:
\[
  C_i = \Bang{
    \inp p_i(t).\bigl(
        \inpz \ch{inc}_i.(\outz t \parallel \out p_i<t>)
      + \inpz \ch{dec}_i.(\inpz t.\out p_i<t>)
      + \inpz \ch{rst}_i.(\new t'_i.\out p_i<t'_i>)
    \bigr)
  }
\]
Here, the number of processes $\outz t$ in parallel with $\out p_i<t>$
represent the current value of the marking in place $i$.
A transition $(\vect{u}, R)$ is encoded as a process
$
  T_{\vect{u}, R} = \Bang{
    \inpz{\ch{valid}}.D_{\vect{u}}.I_{\vect{u}}.Z_R.\outz{\ch{valid}}
  }
$
where
  $D_{\vect{u}} = \outz {\ch{dec}_{j_1}}.\cdots.\outz {\ch{dec}_{j_k}}$ with
  $\set{j_1, \dots, j_k} = \set{j | \vect{u}_j < 0}$,
  $I_{\vect{u}} = \outz {\ch{inc}_{i_1}}.\cdots.\outz {\ch{inc}_{i_l}}$ with
  $\set{i_1, \dots, i_l} = \set{i | \vect{u}_i > 0}$, and
  $Z_R = \outz {\ch{rst}_{r_1}}.\cdots.\outz {\ch{rst}_{r_m}}$ with
  $R = \set{r_1, \dots, r_l}$.

A marking $\vect{m}$ is encoded by a process
\[
  P_{N,\vect{m}} = \outz{\ch{valid}}
    \parallel
  \Parallel_{1 \leq i \leq n} \left(
      \out p_i<t_i> \parallel
      C_i \parallel
      \outz{t_i}^{\vect{m}_i}
    \right)
    \parallel
  \Parallel_{(\vect{u}, R) \in N} T_{\vect{u}, R}
\]
Actions on the name $\ch{valid}$ act as a global lock:
a transition may need many steps to complete, but by acquiring and releasing
$\ch{valid}$ it can ensure no other transition will fire in between.
If a transition tries to decrement a counter below zero, the counter would deadlock causing $\outz{\ch{valid}}$ to be never released again.
Therefore, the encoding preserves coverability:
  $\vect{m}$ is coverable in $N$ from $\vect{m}_0$
    if and only if
  $P_{N,\vect{m}}$ is coverable from $P_{N,\vect{m}_0}$.
Reachability is not preserved because each reset would generate some `garbage'
term $\new t.(\outz t \parallel \ldots \parallel \outz t)$
and thus, even when $\vect{m}$ is reachable,
$P_{N,\vect{m}}$ might not be reachable alone,
but only in parallel with some garbage.

The reader can verify that any encoding $P$ can be typed under the hierarchy
\begin{multline*}
  \ch{valid} <
    \ch{inc}_1 < \ch{dec}_1 < \ch{rst}_1 < t_1 < p_1 <
    \cdots <\\
    \ch{inc}_n < \ch{dec}_n < \ch{rst}_n < t_n < p_n <
    t'_1 < \cdots < t'_n
\end{multline*}
by annotating each restriction $\restr t'_i$ as
  $\restr (t'_i\tas t'_i)$
and using the \pre P-safe environment
$\set{(x\tas x) | x \in \freenames(P)}$.

\subsection{A weak encoding of Minsky machines}

\newcommand{\INC}{\mathtt{inc}}
\newcommand{\DEC}{\mathtt{dec}}
A $k$-counters Minsky machine is a finite list of instructions $\lstc{I}{n}$
each of which can be either an increase or a decrease command.
An increase command $\INC\ i\ j$
increases counter $i$ and jumps to instruction $I_j$.
A decrease command $\DEC\ i\ j_1\ j_2$
decreases counter $i$ jumping to instruction $I_{j_1}$
if the counter is greater than zero,
or jumps to $I_{j_2}$ otherwise.
We implement a counter $i$ with the process $C_i$ of \cref{ex:counter}.
An increase $I_m = \INC\ i\ j$ is encoded by
$\Bang{\inpz{i_m}.\outz{\ch{inc}_i}.\outz{i_j}}$.
A decrease $I_m = \DEC\ i\ j_1\ j_2$ is encoded by
$\Bang{\inpz{i_m}.(
  \outz{\ch{dec}_i}.\outz{i_{j_1}} + \outz{\ch{rst}_i}.\outz{i_{j_2}}
)}$.
A configuration of a Minsky machine is the vector of values of its registers $\lstc{r}{k}$ and the current instruction $j$;
its encoding is the term
\[
  \Parallel_{1\leq i \leq k}
      \new t_i.( \out p_i<t_i> \parallel \outz{t_i}^{r_i} )
    \parallel
  \outz{i_j}
    \parallel
  \Parallel_{1\leq m \leq n} P_{I_m}
\]
where $P_{I_m}$ is the encoding of the instruction $I_m$.

When a counter is zero, performing a decrease command on it in the encoding
presents a non-deterministic choice between sending a decrease or a reset signal to the counter.
In the branch where the decrease signal is sent,
the counter process will deadlock,
ending up in a term that is clearly not an encoding
of a configuration of the Minsky machine.
If instead a reset signal is sent, the counter will refresh the name $t$ with a new name, but the old one would be discarded as there is no sequential term which knows it.

When a counter is not zero, the branch where the decrease signal is sent will simply succeed, while the resetting one will generate some `garbage' term $\new t.(\outz t \parallel \ldots \parallel \outz t)$ in parallel with the rest of the encoding of the Minsky machine's configuration.

A configuration of the machine is thus reachable if and only if
its encoding (without garbage) is reachable from the encoding of the machine.
This proves \cref{th:reach-undec}.

\fi

\end{document}